\definecolor{lightyellow}{cmyk}{0,0,0.3,0}
\newcommand{\vc}[1]{{\pmb{#1}}}
\newcommand{\rhostar}{\rho^*}
\newcommand{\sv}{\vc{s}}
\newcommand{\uv}{\vc{u}}
\newcommand{\Uv}{\vc{U}}
\newcommand{\xv}{\vc{x}}
\newcommand{\yv}{\vc{y}}
\newcommand{\Yv}{\vc{Y}}
\newcommand{\wv}{\vc{w}}
\newcommand{\vv}{\vc{v}}
\newcommand{\zv}{\vc{z}}
\newcommand{\Zv}{\vc{Z}}
\newcommand{\alphav}{\vc{\alpha}}
\newcommand{\sigmav}{\vc{\sigma}}
\newcommand{\qhat}{\hat{q}}
\newcommand{\qhatrb}{\hat{q}^{rb}}
\newcommand{\Pcal}{\mathcal{P}}
\newcommand{\Fcal}{\mathcal{F}}
\newcommand{\Rcal}{\mathcal{R}}
\newcommand{\Ycal}{\mathcal{Y}}
\newcommand{\samp}[1]{^{(#1)}}
\newcommand{\Reals}{\mathbb{R}}
\newcommand{\Scube}{\mathbb{S}}
\newcommand{\Expect}[1]{\mathbb{E}_{#1}}
\newcommand{\indic}{\mathbbm{1}}
\newcommand{\nbrs}{\mathcal{N}}
\newcommand{\fbx}[1]{$\framebox{${#1}$}$}  
\newcommand{\Acal}{\mathcal{A}}
\newcommand{\algoname}{\textsc}
\newcommand{\instancename}{\texttt}
\newsavebox{\Figurebox}
\newtheorem{proposition}{Proposition}
\newtheorem{guideline}{Guideline}
\begin{document}
\title{Discrete Equilibrium Sampling with Arbitrary Nonequilibrium Processes}
\author[1]{Firas Hamze}
\author[1]{Evgeny Andryash}
\affil[1]{D-Wave Systems Inc.}
\maketitle

\begin{abstract}

  We present a novel framework for performing statistical sampling,
  expectation estimation, and partition function approximation using
  \emph{arbitrary} heuristic stochastic processes defined over
  discrete state spaces. Using a highly parallel construction we call
  the
  \emph{sequential constraining
    process}, we are able to simultaneously generate states with the
  heuristic process and accurately estimate their probabilities, even
  when they are far too small to be realistically inferred by
  direct counting. After showing that both theoretically correct
  importance
  sampling and Markov chain Monte Carlo are possible
  using the sequential constraining process, we integrate it into a
  methodology
  called \emph{state space sampling}, which extends the classical
  ideas of state space search from computer science to the sampling
  context. The methodology comprises a \emph{dynamic} data
  structure that constructs a robust Bayesian model of the statistics
  generated by the
  heuristic process subject to an accuracy constraint, the
  \emph{posterior
    Kullback-Leibler divergence}.
  Sampling from the dynamic structure will generally yield \emph{partial}
  states, at which point
  the heuristic is called recursively to refine the structure at the
  needed point in the state space until \emph{complete} states, along
  with
  their probability estimates, are returned. Importance sampling is
  thereby enabled.
  Our experiments on various
  Ising models strongly suggest that
  state space sampling enables heuristic state generation with
  dramatically accurate probability estimates,
  demonstrated by illustrating the convergence of a simulated
  annealing process
  to the Boltzmann
  distribution with increasing run length. Consequently, heretofore
  unprecedented direct importance sampling using the \emph{final}
  (marginal) distribution of a generic stochastic process is allowed,
  potentially augmenting considerably the range of algorithms
  at the Monte Carlo practitioner's disposal.
\end{abstract}

\section{Introduction}
\label{sec:Intro}

We consider in this paper the problem of computing expectations of
functions with respect to discrete distributions
\begin{equation}
\Expect{\pi}[ h(\Yv) ]
\label{eq:expectationGenereal}
\end{equation}
Without loss of generality, the function $\pi$ defined over a discrete
state space can be assumed to be in \emph{Boltzmann distribution} form
\begin{equation}
\pi(\yv) = \frac{e^{-E(\yv)}}{Z_\pi}
\end{equation}
where $E(\yv)$ and $Z_\pi = \sum_{\yv} e^{-E(\yv)}$ are the
\emph{energy function} and \emph{partition function} respectively.
In this paper, we assume that $\yv$ is a vector of $m$
\emph{Ising}-valued variables, i.e
\begin{eqnarray*}
  y_i \in \{ -1,1 \} \\
  \yv = y_{1:m} \in \Scube^m
\end{eqnarray*}
where $\Scube^m \triangleq \{-1,1\}^m$ is the set of all possible
Ising configurations on $m$ variables. Modification of the paper's
ideas to the cases of Boolean $(\{0,1\})$ or more general discrete
values is straightforward.

The problem defined in Equation (\ref{eq:expectationGenereal}) arises
in a wide range of applications from statistics and machine learning
to statistical physics; some examples include computing spin-spin
correlations, local magnetizations, and average energies. When
performing \emph{Boltzmann machine learning}, access to accurate
estimates of these quantities enables \emph{maximum likelihood}
parameter estimation \cite{sejnowski2001learning}; in physics, their
behaviour may signal phenomena such as \emph{phase transitions}
\cite{chandler1987introduction}. Additionally, in several settings
such as \emph{approximate
  counting}\cite{jerrum1996markov} and \emph{Bayesian model selection}
\cite{bernardo2009bayesian} it is of interest to approximate $Z_\pi$.

The task of approximating expectations can, of course, in principle be
treated by generating samples from $\pi(\yv)$ and computing the
empirical average of $h$ over those samples. In situations where the
problem possesses some simplifying structure such as a low treewidth,
this is indeed feasible. In general, drawing such samples for systems
with large $m$ is highly nontrivial. A common approach is to use a
\emph{Markov Chain Monte Carlo} (MCMC) algorithm instantiating a
generally correlated sequence of samples whose distribution converges
asymptotically to $\pi$; examples include \emph{Metropolis}
\cite{metropolis1953equation}, \emph{Metropolis-Hastings}
\cite{hastings1970monte}, and Gibbs (\emph{heat bath}) samplers
\cite{geman1984stochastic,glauber1963time}. The simplest versions of
these algorithms are \emph{local} (univariate), but more sophisticated
generalizations such \emph{parallel tempering}
\cite{hukushima1996exchange,geyer1991computing} and
\emph{cluster-based} methods
\cite{swendsen1987nonuniversal,wolff1989collective,houdayer2001cluster,zhu2015efficient}
have been devised to improve efficiency.  When designing such
algorithms, special care must be put into ensuring that they yield the
correct stationary distribution. Consequently, it is often difficult
to engineer a correct algorithm that makes large jumps in the state
space and avoids getting trapped in local optima.  In addition,
approximating $Z_\pi$ using MCMC is not automatically obvious.

An alternative to MCMC is \emph{importance sampling,} where one
generates samples according to some \emph{trial} (or \emph{proposal})
distribution and appropriately weights their contribution in the
estimator to ensure that a statistical average with respect to $\pi$
is obtained. This method can suffer from terrible inaccuracy in
high-dimensional problems. The main issue is that it becomes
exceedingly difficult to design a trial distribution that is
simultaneously \emph{tractable}, i.e. allowing efficient sampling and,
for the purpose of reweighting, evaluation, and possesses sufficient
\emph{overlap} with the statistically-dominant regions of the state
space under $\pi$. To illustrate two extremes, using a uniform trial
distribution certainly satisfies the tractability requirement, but
will typically require a completely unrealistic number of samples to
yield useful estimates for most interesting problems; on the other
hand, one can conceivably design an elaborate, randomized heuristic
process which covers the important areas of $\pi$ quite well, but
computing the probabilities of states yielded by such a process may be
infeasible. As a concrete example of the latter case, one may be
interested in probing the statistics of the global energy minima, or
\emph{ground states}, of $E$. These states occur with uniform
probability under $\pi$ at zero temperature. The question of how to
perform estimation when given access to a heuristic that generates
ground states with unknown and generally nonuniform probabilities is
a subset of the problems tackled by the approach taken in this paper.

The class of algorithms known as \emph{sequential Monte Carlo} (SMC)
(or \emph{population annealing}) methods
\cite{jarzynski1997nonequilibrium,del2006sequential,hukushima2003population,machta2010population,
  wang2015population} are in some sense a synthesis of importance
sampling and MCMC. In general, applying a finite number of MCMC moves
will yield a distribution that is \emph{out of equilibrium}, in other
words, has not yet converged to $\pi$. Were one able to evaluate this
nonequilibrium probability, an estimator using reweighted samples
could compensate for the finite stopping-time bias incurred by
na\"ively using the raw empirical average. Computing the
nonequilibrium distribution is generally out of the question; SMC
methods circumvent this difficulty by \emph{augmenting} the state
space under consideration to a \emph{joint, growing} one. For example
suppose we could tractably sample from distribution $\pi_0(\yv_0)$
and generate updated state $\yv_1$ according to an MCMC kernel $T$
converging to $\pi(\yv_1)$; of course, the variables $(\Yv_0, \Yv_1)$
will be jointly distributed according to
$\pi_0(\yv_0) T( \yv_1 | \yv_0) $. If we suitably define an
\emph{augmented} target distribution to be
$\pi(\yv_1) L (\yv_0 | \yv_1) $, where $L (\yv_0 | \yv_1)$ is a
conditional distribution which can in general be chosen with great
flexibility, we could use the importance-weighted samples \emph{over
  the joint state space} to approximate the expectation of $h(\yv_1)$
with respect to $\pi$. The idea can be extended to a \emph{sequence}
of target distributions $(\pi_n)$, typically obtained by
\emph{annealing} a temperature parameter scaling the energy such that
the final distribution in the sequence is the target $\pi$. The
variance of estimator grows with the length of the sequence, and hence
variance reduction techniques, such as the (asymptotically) unbiased
strategy of \emph{resampling} (or \emph{particle interaction}) are
often applied. A particularly desirable attribute of SMC methods is
that they enable estimation of the partition function $Z_\pi$ if that
of the initial distribution $\pi_0(\yv)$ is known.

SMC methods have been a considerable advance in Monte Carlo
simulation, but they still impose certain restrictions on the
practitioner. One particularly unappealing aspect of a
commonly-applied variant of SMC is that, due to the usage of a growing
state space, a sequence of distributions with considerable overlap is
required even if the proposal kernel $T$ happens to yield
\emph{perfect} samples from the distribution it targets. Furthermore,
while it is possible in principle to use a more generic move process
than an MCMC operator, designing an appropriate joint distribution via
$L(\yv_0|\yv_1)$ then becomes nontrivial. For example, a tempting
possibility is to simply choose $L(\yv_0|\yv_1) = \pi(\yv_0)$,
yielding a simple-looking expression for the importance weights.
Unfortunately, a serious issue is that the proposal can typically only
be evaluated exactly for \emph{local} moves. This in turn implies that
the overlap between $\pi$ and $T(\xv_1|\xv_0) $ can be negligible,
leading to tremendous variance. The issue would be mitigated one could
somehow instantiate a proposal that achieved good global coverage of
$\pi$, but in that case its evaluation becomes impossible. Indeed it
is precisely this central issue that motivates the present work and
its ramifications.

Our approach departs from the SMC methodology of importance sampling
on a growing state space and reverts to directly using the
distribution of the final state of an arbitrary heuristic process. The
methodology could considerably expand the repertoire of tools at
researchers' disposal with which to perform statistical sampling. For
example, one may now employ, in place of commonly-used MCMC methods
relying on single variable updates, randomized combinatoric methods
(such as \emph{primal-dual algorithms} and \emph{linear
  programming with rounding}) whose distributions are exceedingly
difficult to characterize, but which nonetheless may be excellent
trial processes with which to sample from a target distribution.
Alternatively, well-known MCMC algorithms may be modified in such a way
as to improve their mobility in the state space, despite the generally
crucial property of \emph{detailed
  balance} being consequently broken; for regular MCMC methods, it
provides a compelling diagnostic of convergence. Finally, physical
sampling devices with real-world imperfections causing unknown changes
to their distributions may be brought to bear
\cite{johnson2011quantum, zhu2015best}. As mentioned, our focus in
this work is on distributions over discrete state spaces, but the
general ideas are certainly transferable to the continuum in
principle.

In Section \ref{sec:SCP}, we introduce the \emph{Sequential
  Constraining Process}, the central workhorse of our algorithms.
Given a stochastic sampling heuristic as a ``black box'', this process
instantiates states along with relatively accurate estimates of their
probabilities. It does so by constructing, using a sample
\emph{population} from the heuristic, a \emph{statistical model}
subject to an accuracy criterion. In general, this model will not be
able to represent the full state space for a realistically-sized
population. Nonetheless, when a sample is drawn from the model
corresponding to an incomplete state, the heuristic can be used
\emph{recursively} to construct another accuracy-respecting model with
the incomplete state as a \emph{constraining condition}. In this
manner, full states and probabilities are generated by stochastically
\emph{querying} the proposal. Like most population-based methods, the
sequential constraining process is a highly parallelisable algorithm,
and can hence straightforwardly leverage the ubiquitous computing
power available at the time of this paper's writing. The issue of
parameter estimation, i.e. construction of the modelling
distributions, is a crucial one and is discussed in Section
\ref{sec:SCP:qnEstimators}. We opt for a \emph{robust Bayesian}
approach to estimating the heuristic distributions, with the
\emph{posterior Kullback-Leibler divergence} as the loss function
quantifying estimator accuracy. Rao-Blackwellisation as a means of
\emph{variance reduction} is presented in
\ref{sec:SCP:qnEstimators:RaoBlack}.

In Section \ref{sec:MCusingSCP}, we show the crucial fact that the
sequential constraining process allows \emph{both} importance sampling
and MCMC to be performed. This is of particular interest as our
algorithms are \emph{theoretically correct} but employ as proposals
\emph{approximations} to a complicated distribution; it is \emph{not}
generally true that valid Monte Carlo results when such
approximations are made.

Section \ref{sec:SSS} coalesces the introductory ideas of the first
half of the paper into a practical importance sampling algorithm we
call \emph{state space sampling}. The name is motivated by the class of
algoritms used in artificial intelligence known as \emph{state space
  search}, a generalization of which it can be seen\footnote{Though
  correctness of an MCMC algorithm using approximations derived from
  the heuristic process is shown in principle, practical exploration
  of this is left to future work.}. State space sampling is a process
by which discrete states are drawn by sampling from a \emph{dynamical}
statistical model of an unknown proposal heuristic, estimated from
populations generated using the constraining process. For binary state
spaces, the dynamical model is naturally represented as a \emph{binary
  tree} whose nodes correspond to partial states within the state
space. The sampling procedure can be seen as a stochastic
\emph{exploration} of this binary tree; when a node is reached
corresponding to an incomplete state, the model tree is extended from
that point by invoking the proposal process with the partial state
constraint for a fresh population, growing a new model, and resuming
the sampling. Ideally, the model constructed would be retained (or
\emph{cached}) as long as samples are being drawn, but just as is done
in the search context, a bound on the tree size must imposed to
respect practical memory constraints. We propose implementing this in
a manner such that regions of the model that have lower probability of
being used have higher \emph{priority} for being discarded. A simple
example illustrating the main features of state space sampling is
presented in Section \ref{sec:SSS:comicBook}.

Section \ref{sec:Experiments} discusses experimental results obtained
by applying state space sampling to several types of Ising model using
short-run simulated annealing processes as the stochastic heuristic.
While the point of devising the methodology in this paper was to allow
for the use of more general heuristics, we nonetheless selected
simulated annealing in this preliminary illustration because its
behaviour is relatively well-understood. More specifically, if we claim
that our methodology allows for accurate estimation of distributions
derived from arbitrary heuristics, which would then in turn enable
reliable importance sampling should the heuristic distribution be a
good match to the target, then applying simulated annealing should
unambiguously show convergence of the states' estimated probabilities
to the final target distribution $\pi$ as the run length is increased.
The results presented in Section \ref{sec:Experiments} indeed provide
such confirmation, showing convergence of the ensemble distribution to
$\pi$ in a fairly direct manner. Throughout the paper, we strive to
maximize computational efficiency, either generically or, using
problem-dependent features as discussed in Appendix
\ref{sec:Appendix:partitioning}.

Our analysis and results show state space sampling, and the general
methodology presented in this paper, to be a potentially powerful
framework that takes advantage of copious contemporary computing
power and considerably enlarges the scope of algorithms that can be
used to simulate complex systems.

\section{Notational Preliminaries}
\label{sec:Preliminaries}
Following common convention, random variables appear in upper case,
with their particular values in lower case. Variables appearing in
bold are vector-valued; e.g. $\yv \in \Reals^m$. When appearing with
subscripts, bold symbols refer to elements of a sequence of
vector-valued variables; e.g. $ \yv_n \in \Reals^m$. Plain symbols can
refer to either scalar or multivariate values depending on the form of
their subscripts. For example $y_n$ refers to the $n^{\textrm{th}}$
element of $\yv$, but $y_{1:n}$ refers to first $n$ elements of $\yv$
(hence $y_{1:m} = \yv)$. More generally, $y_I$ specifies the variables
in $\yv$ indexed by subset $I \subseteq \{1, \ldots, m \}$, e.g.  if
$I = \{ 1, 3, 10 \}$, then $y_I$ is short for $\{y_1,y_3,y_{10}\}$. On the
other hand, $U_{n,I}$ denotes the variates indexed by $I$ in the
$n^{\textrm{th}}$ vector-valued $\Uv_n$ appearing in a
sequence. Finally, superscripts in parentheses refer to members of a
set of random variables instantiated from some distribution. For
example to denote drawing $N$ samples from distribution $f$, we write
$\Yv\samp{j} \sim f$ for $j = 1, \ldots, N$. Illustrating some of
these conventions, the conditional distribution of variable
$\mathbf{X}$ given the first $n$ elements of $\yv\samp{j}$, the
\emph{observed} value of the $j^{\textrm{th}}$ sample, would be
written as $f( \xv | y_{1:n}\samp{j} )$.

\section{The Sequential Constraining Process}
\label{sec:SCP}

\subsection{Sampling with Nonequilibrium Distributions}
\label{sec:SCP:sampNEQDist}

Suppose we have available a stochastic procedure, referred to as the
\emph{proposal process} \( \Pcal \),
generating samples from some discrete probability distribution.
Particular simulation-based examples are Markov Chain Monte Carlo
(MCMC) algorithms run for a finite number of iterations, where
consequently, the asymptotic distribution is generally not reached,
non-Markovian \emph{stochastic local search} (SLS) algorithms
\cite{hoos2004stochastic} that may either start from a completely
uniform distribution or, as in the case of \emph{iterated} local
search, generate configurations conditioned on the current state, and
methods that allow for interactions among members of a population such
as \emph{genetic} or \emph{evolutionary} algorithms
\cite{john1992adaptation,goldberg2006genetic}. Alternatively,
\( \Pcal \)
can correspond to deriving configurations from a physical system such
as one designed to implement a thermal or quantum annealing algorithm.

With exceptions such as MCMC, little can be theoretically said of any
statistical or convergence properties of these processes. Thus, while
they may be useful as randomized heuristics to generate low-energy (or
``low-cost'') configurations of a system, in raw form they are
unsuitable for performing correct stochastic simulation of a specified
probability distribution $\pi$ with respect to that system. In the
following, we discuss this issue in such a way as to motivate the
contributions of this paper.

Let $ f(\yv | \xv ) $ be the probability of sampling
$ \yv \in \Scube^m $ by running $ \Pcal $ to completion from current
state $ \xv \in \Scube^m $. If $ \Pcal $ corresponded to some
``perturbative'' process such as a sequence of MCMC or SLS moves, then
$ f(\yv|\xv) $ is the \emph{marginal distribution} of the state at the
final time step of $ \Pcal $. For certain kinds of $ \Pcal $, such as
finite-length MCMC from a random initial state,
$ f(\yv|\xv) = f(\yv) $, and in such cases we will omit the dependence
in the notation. We emphasize to readers familiar with SMC not to
confuse $f(\yv|\xv)$ with the \emph{joint} probability of a sample
path resulting from a sequence of proposal moves.

Given that we can sample from $f(.)$, two well-known methodologies
that would \emph{in principle} allow for estimation of expectations
are \emph{Importance Sampling} (IS) and the \emph{Metropolis-Hastings}
(MH) MCMC algorithm. In the case \( f(\yv|\xv) = f(\yv) \),
we could have

\begin{equation}
  \Expect{\pi}[h(\Yv)] = \Expect{f}[ w(\Yv)h(\Yv) ]
  \label{eq:plainIS}
\end{equation}
where the \emph{importance weights} are given by
\begin{equation*}
  w(\yv) \triangleq \frac{\pi(\yv)}{f(\yv)}
\end{equation*}

For a correlated proposal, arising for example from applying an
iterated local search idea, the standard MH algorithm can be used to
asymptotically sample from $\pi $, which would in turn allow
estimation. The point $ \yv $ generated according to $ f(.|\xv) $
would be accepted with probability \begin{equation}
  \alpha( \xv, \yv ) \triangleq \min \Big [ 1, \frac{ \pi(\yv) f(\xv|\yv) } {
    \pi(\xv) f(\yv|\xv) }  \Big ]
  \label{eq:plainMH}
\end{equation}
When \( f \) itself corresponds to running a
MCMC algorithm for some number of time steps, this can be viewed as an
"outer" MCMC loop. 

Unfortunately, these two approaches are utterly impractical for many
interesting choices of proposal distribution. The difficulty lies in the
requirement to evaluate \( f(\yv) \) for IS and of the \emph{forward} and
\emph{reverse} proposal probabilities, \( f(\yv|\xv) \) and \(
f(\xv|\yv) \) respectively, to perform MH sampling. Consider the simple case
that \( f \) was instantiated by a sequence of \( k+1 \) Markov
operations \( \{T_k\} \) (such as local-variable simulated
annealing) so that
\begin{equation*}
f(\yv|\xv) = \sum_{\sv_1 \ldots \sv_k} T_{k+1}( \yv|\sv_k)\ldots T_{1}(\sv_1|\xv)
\end{equation*}
Even if the Markovian property were exploited, computing the
summations would require exponential (in \( m\)
) time; the task is certainly no simpler for generic \( f \).
When the proposal arises from a physical system exhibiting unknown or
intractable complexity, for example a manufactured implementation
\cite{johnson2011quantum} of quantum
annealing\cite{kadowaki1998quantum}, precise characterization of
\( f \) may be all but impossible.

One may ask if there exists a sensible strategy to approximate \( f(\yv) \)
at a sampled point \( \yv \). A na\"ive option is to draw a set
of \( N \) samples and to approximate \( f(\yv) \) by its histogram
\begin{equation*}
f(\yv) \approx \frac{1}{N} \sum_i \indic_{\yv}[\yv\samp{j}]
\end{equation*}
where $\indic_{\yv}[.]$ is the indicator function. In high dimensions,
this is generally a flawed approach. Indeed, suppose that a
hypothetical ``oracle'' generated configurations distributed exactly
according to target distribution \( \pi \),
but that one was unaware of this and tried to perform importance
sampling relative to the histogram estimator instead. For
computationally reasonable values of \( N \),
and large, non-trivial systems, each sampled point will tend to occur
only a few times in the population, and thus its proposal probability
will simply evaluate to \( \frac{1}{N} \),
leading to an importance weight proportional to \( \pi \);
specifically, in a population where each state is hit a single time,
the weights are
\( w(\yv\samp{j}) = \pi(\yv\samp{j})/\sum_j\pi(\yv\samp{j}) \).
This will result in a highly skewed estimator as the correct weights
in this scenario are unity by construction. In essence, the
\emph{effective} constructed proposal fails to cover the target; only
for extremely large \( N \)
will the histogram estimates eventually approach their correct values
and this effect be mitigated. The main point is that merely being able
to sample from a good proposal is not sufficient; one must also be
able to accurately \emph{evaluate} it.

Consider now the unknown proposal factorized into its
univariate conditionals
\begin{equation}
f(\yv|\xv) = f_1(y_{1} | \xv ) f_2(y_{2} | y_{1},\xv )\ldots f_m(y_{m}|
y_{1:m-1},\xv )  
\label{eq:propFactored}
\end{equation}
While the variables \( y_{1:m} \)
refer to the \emph{final} configurations jointly yielded by the
process, it is instructive to consider the task of generating them in
a sequential manner by successively sampling from the conditionals.
Suppose we conceptually decomposed the problem of generating the
samples from \( f \)
into \( m \)
steps. At step \( 1 \),
we sample a state \( \Uv \)
from the joint \( f(\uv|\xv) \)
and set \( Y_1 \gets u_1 \).
Clearly, \( Y_1 \)
is then a sample from \( f_1(y_1|\xv) \).
At step 2, we sample another state \( \Uv \)
from the joint \( f(\uv|\xv) \).
If \( u_1 = y_1 \)
in this sample, then we set \( Y_2 \gets u_2 \)
and continue; otherwise, we continue to generate additional samples
$\{ \Uv \}$ from the joint until \(u_1 = y_1\).
It can readily be shown that this procedure, a simple form of
\emph{rejection sampling,} will yield a sample from the conditional
\( f_2(y_2|y_1,\xv ) \).
In general, at step \( n \),
given values \( y_{1:n-1} \),
one can in principle sample $Y_n$ from the conditional
\(f_n(y_n | y_{1:n-1},\xv ) \)
by repeatedly drawing \( \{\Uv\} \sim f(\uv|\xv) \)
until the condition \( u_{1:n-1} = y_{1:n-1} \)
is met, setting \(Y_n \gets u_n \)
for the sample $\Uv$ that meets the condition. Performing this
operation sequentially for $n = (1, \ldots, m)$, the random variable
\( \Yv \triangleq Y_{1:m} \)
is therefore distributed as \(f(\yv|\xv) \).

At first sight, this seems like a puzzling and contrived way of
sampling from \( f(\yv|\xv) \);
indeed, any sample used in the procedure is assumed to come from that
very distribution! The reason for its potential interest is that, were
one able to carry it out, it would enable \emph{estimation} of the
conditional components \( \{f_n(y_n|y_{1:n-1},\xv)\} \)
and thus of the full joint \( f(y_{1:m}|\xv) \).

Estimation of the first few conditionals in (\ref{eq:propFactored})
indeed turns out to be realistic. For example, unlike attempting to
estimate \( f(\yv|\xv) \)
over the full state space \( \Scube^m \),
it is quite reasonable to approximate the scalar, binary function
\( f_1(y_{1} | \xv ) \)
using either a histogram or other methods to be discussed later. It is
possible in principle to estimate \( f_n(y_{n} | y_{1:n-1},\xv) \)
using the same idea, but unfortunately, the probability of generating
a state from the joint matching a particular signature \( y_{1:n-1} \)
decreases exponentially with $n$, and so unsurprisingly, the
dimensionality issues recur.

There is no obvious way around this issue if one were to insist on
sampling from \( f(\yv|\xv) \), but there is a potential solution if
one were willing to instantiate another proposal \emph{derived} from
\( \Pcal \) whose behaviour can be expected to be \emph{similar} to that
of \( \Pcal \). In the next section, we discuss this idea, which is one of the main
building blocks of our algorithms. Crucial practical matters will be
discussed in later sections.

\subsection{Sequential Constraining Process Introduced}
\label{sec:SCP:SeqConstProc}
We begin by illustrating how to recursively invoke the heuristic in
such a way as to implicitly define a distribution which can both be
sampled from and evaluated. This in some sense sidesteps the rejection
sampling difficulties discussed in the last section. Consider first
running the proposal process with respect to a \emph{constrained
  subsystem} of the overall distribution. More precisely, the
heuristic is modified so that some subset of the variables is ensured
of having specific values \( \{y_i\}\).
For reasons that will become clear soon, we refer to the variates
generated by such a process as $\Uv$ in distinction to $\Yv$. An
example of a proposal targeting a constrained subsystem is using a
SLS or MCMC heuristic, holding constant from the initial conditions
the values of variables $U_I $ to $ y_I$ for some subset $I$ of the
variable indices, and running the process to completion on the
remaining variables. Alternatively, it may be preferable to
\emph{gradually} apply a local bias (or ``field'') whose magnitude
increases over the course of the simulation such that the final values
of $U_I$ are constrained to being $y_I$.

The joint proposal distribution with a given set of
constraints imposed (either by \emph{hard clamping} from initial
conditions or by increasing their effect over the course of
\( \Pcal \)) is defined for \(n \geq 2\) to be
\begin{equation}
q_n( \uv | y_{1:n-1}, \xv ) = \Bigg[ \prod^{n-1}_{k=1}
\delta_{y_k}(u_k) \Bigg ]q_n( u_{n:m} | u_{1:n-1}, y_{1:n-1}, \xv)  
\label{eq:jointConstrProp}
\end{equation}
Equation \ref{eq:jointConstrProp} makes explicit that in a simulation
from \( q_n \),
variables \( U_{1:n-1} \)
will have prescribed values \( y_{1:n-1} \)
with probability one. The conditional on the RHS of
\ref{eq:jointConstrProp} is only defined for
\( u_{1:n-1} = y_{1:n-1} \)
as all \( \uv \)
without this property have a joint probability of zero. We can thus
shorten the notation and use \( q_n(u_{n:m} | y_{1:n-1}, \xv ) \)
to refer implicitly to this conditional; likewise
\( q_n(u_n | y_{1:n-1}, \xv ) \)
refers to the conditional of variable $u_n$ when constraints
$y_{1:n-1}$ are imposed.  Finally, the case of \( n=1 \)
is defined to be such that no constraint is imposed. In other words,
\( q_1( \uv | \xv) = f(\uv|\xv) \), which is simply the original proposal.
The distinction between the conditionals of the original process
\[ \{ f_n( y_n | y_{1:n-1},\xv ) \} \]
and those of the constrained process
\[ \{ q_n( u_n | y_{1:n-1},\xv )\} \]
should be borne in mind; except at \( n= 1 \),
they are generally not the same. The \( \{ f_n \} \)
are simply those of the proposal when \( \Pcal \)
is run without any of the aforementioned constraining mechanisms
and random variables \( Y_{1:n-1} \)
\emph{happened to have} values \( y_{1:n-1} \).
In the constrained case, we are forcefully changing the ``dynamics''
of \( \Pcal \)
such that the constraints are met. In many situations, the two
families of conditionals can be expected to be close: for example if
\( \Pcal \)
corresponds to an MCMC process with invariant distribution
\( \pi \),
then both \( f_n( y_n | y_{1:n-1},\xv ) \)
and \( q_n( y_n | y_{1:n-1},\xv) \Rightarrow \pi( y_n | y_{1:n-1}) \)
as the process length increases.

Simulating constrained subsystems suggests an algorithm to sample from
\emph{and approximate} a surrogate to \(f\).
We now describe a rudimentary version, which we call the
\emph{Sequential Constraining Process} (SCP.) First, note that once
constraint $y_{1:n-1}$ is imposed, we are free in principle to
generate as much data as desired from $q_n(u_{n:m} | y_{1:n-1}, \xv )$
by repeatedly calling the constrained heuristic. In particular, a
population of sufficient size can be drawn so as to reliably estimate
the univariate conditional $q_n(u_n | y_{1:n-1}, \xv )$. This is in
distinction to sampling from $f_n(y_n|y_{1:n-1},\xv)$, which would, as
discussed, require a generally intractable rejection sampling
procedure. In this paper, to estimate $q_n(u_n|y_{1:n-1},\xv)$ we shall
consider only \emph{consistent} statistical estimators, i.e. those
converging in probability to the true value of the estimand with
increasing population size $N$. For the problems of present interest,
the univariate conditional estimation problem is simply that of
approximating a binomial success parameter given $N$ binary
observations. Examples of consistent estimators for this parameter are
the MLE histogram and its Bayesian analogues under a prior
distribution over the parameter values. Estimation will be discussed
in detail in Section \ref{sec:SCP:qnEstimators}, but for now, given a
population $\{\uv_n\samp{j}\}$ of size $N$ from
$q_n(u_{n:m} | y_{1:n-1}, \xv )$, we approximate the distribution of
the $n^{\textrm{th}}$ variable with the consistent estimator:
\[ \qhat_n(y_n | \{\uv_n\samp{j}\}, y_{1:n-1},\xv ) \approx q_n(y_n |
y_{1:n-1}, \xv ) \]
After constructing $\qhat_n$, we trivially sample a value $y_n$ from
the estimator and repeat the process for the next variable. Following
a suitable initialization, this methodology defines the SCP; it is
summarized in Algorithm \ref{alg:SCPBasic}. A complete binary state
$\yv$ is generated in this manner, and importantly, \emph{so is an
  approximation
  to its joint
  probability
  under the SCP.} More specifically, we have the joint approximation
\begin{equation}
  \qhat(\yv|\xv) = \prod_{n=1}^{m}\qhat_n(y_n | \{\uv_n\samp{j}\},
  y_{1:n-1},\xv )
  \label{eq:SCPProduct}
\end{equation}
Figure \ref{fig:proposalDiagram} visually shows the statistical
structure of the procedure for the case of sampling three binary
variables. We now see why it was necessary to distinguish between
variables $\Yv$ and $\Uv$. The latter, corresponding to variates
generated by simulating constrained subsystems, are used to construct
estimators, but are ultimately not the states returned by the
algorithm. The final ``usable'' states, $\Yv$, are sampled from the
estimators

We remark that there is considerable algorithmic flexibility here. The
heuristics used in different clamping stages need not have the same
parameters. Suppose, for instance, that a simulated annealing-type
process is used as the heuristic. As variables get sequentially
clamped, the state spaces effectively simulated over are of course
decreasing in size. One can accordingly use increasingly fast
annealing schedules for each process in the sequence. It is generally
the case that heuristics tend to more effectively probe the low-energy
structure of a problem landscape as the number of problem variables is
reduced.

The flexibility inherent to SCP is in fact more general than allowing
for the heuristic parameters to be tailored to the relative order in
the sequence: the formulation allows us to altogether move away from
reliance on a \emph{single} heuristic process for sampling. The
heuristics themselves invoked in the sequence are allowed to be
completely different, and may even depend on the particular state of
the system. A general rule is that if one can derive a favorable
heuristic for a system of a certain size, then one can also find
\emph{some} heuristic for a constrained subsystem; simulating small
systems is typically no harder than simulating larger ones. Hence, if
a more efficient heuristic is known for simulating a subsystem than
for simulating the full system, a SCP can be defined that applies the
appropriate heuristic to the system under consideration. The state
$\yv$ can in that case be interpreted as arising from a ``compound''
heuristic process. As an example of this sort of heuristic mixing,
suppose that at some point in the sequence which began using a local
search proposal, the set of variables has fractured into small enough
independent sets that \emph{exact} samples can be feasibly generated.
The exact sampling algorithm can then serve as the ``heuristic''
called recursively at that point, and the values of the corresponding
(exact) conditional probabilities can be used in place of an
approximately determined $\qhat_n$. In this introductory paper we
focus on presenting and validating the basic ideas, leaving in-depth
experimentation with large number of plausible heuristics to a later
analysis. Section \ref{sec:Discussion} discusses some additional
heuristic possibilities.

In summary, the advantage of the SCP methodology is that samples from
a favorable heuristic proposal (or class of proposals) can be
faithfully generated and their generation probabilities accurately
approximated. As discussed in Section \ref{sec:MCusingSCP}, this
enables the heuristics to be used for Monte Carlo estimation.

\begin{algorithm}
  \caption{Basic Sequential Constraining Process (SCP)}
  \label{alg:SCPBasic}
  \begin{algorithmic}
    \State \textbf{Initialize}
    \State \indent Draw $N$ samples $ \{ \Uv_1\samp{j} \}  \sim q(\uv|\xv ) $
    \State \indent Construct $\qhat_1(y_1 | \{ \uv_1\samp{j}\},\xv  )$ (see
    Section \ref{sec:SCP:qnEstimators} )
    \State \indent Sample \( Y_1 \sim \qhat_1(y_1 | \{
    \uv_n\samp{j}\}, \xv ) \)
    \Statex
    \For{$n = 2, \ldots, m $}
      \State Draw $N$ samples \( \{ \Uv_n\samp{j} \}  \sim q_n(u_{n:m} | y_{1:n-1},\xv) \)
      \State Construct
      \( \qhat_n(y_n | \{ \uv_n\samp{j}\}, y_{1:n-1},\xv ) \)
      (see Section \ref{sec:SCP:qnEstimators} )
      \State Sample \( Y_n \sim \qhat_n(y_n | \{
      \uv_n\samp{j}\}, y_{1:n-1},\xv ) \)
    \EndFor
  \end{algorithmic}
\end{algorithm}

\def\svgwidth{1.0\columnwidth}
\begin{figure}
\centering
\input{./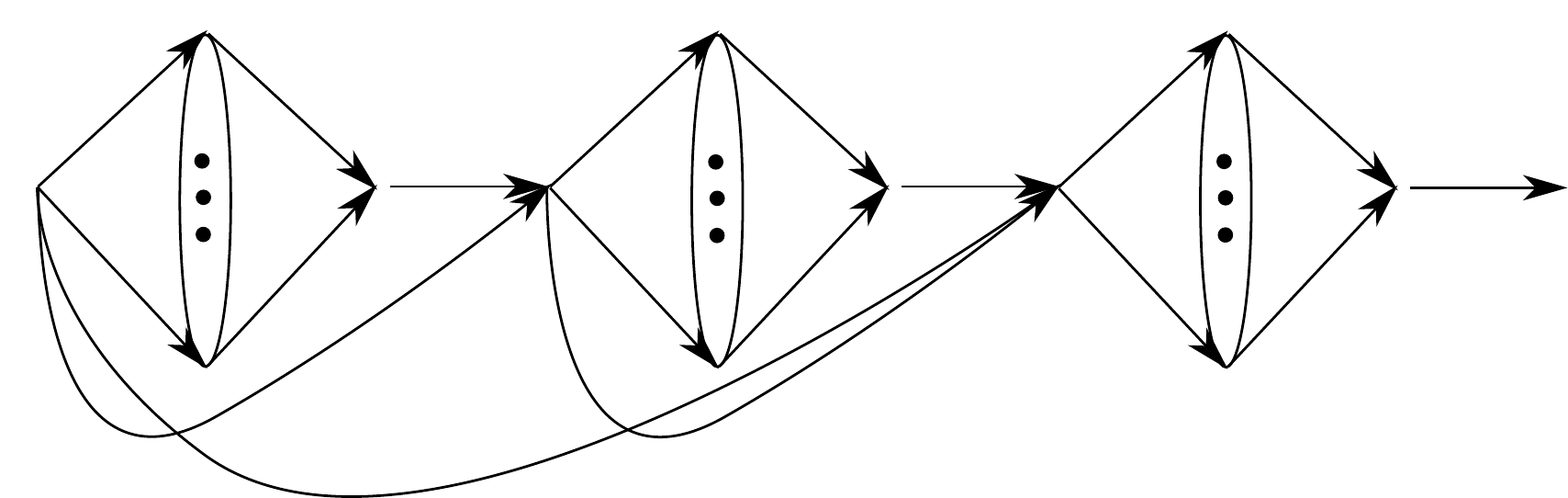_tex}
\caption{Illustration of the sequence of statistical operations
  implementing a simple variant of the \emph{Sequential Constraining Process}
  (SCP) to generate samples using an arbitrary stochastic process. At each
  stage, a possibly dependent population of samples $\{ \uv_n\samp{j}
  \}$ is generated by running the process with variables $y_{1:n-1}$
  clamped to their preceding values (none for the initial stage.) A consistent estimator
  $\qhat_n(y_n | \{ \uv_n\samp{j} \}, y_{1:n-1} ) $ is then constructed, and variable
  $y_n$ is sampled from this estimator. The result is a binary state
  $\yv$ and an approximate value of its probability
  $\qhat(\yv|\xv ) = \prod_{n=1}^3 \qhat_n( y_n|...) $ under the SCP.}
\label{fig:proposalDiagram}
\end{figure}

We presently discuss a few aspects of the SCP. First, we remark that
the members of the populations $\{\Uv_n\samp{j}\}$ generated by the
constrained runs of $\Pcal$ \emph{are not required to be statistically
  independent.} This is of particular importance if considering the
use of heuristics that involve \emph{interaction} among the samples.
Practically significant instances of such interaction are
\emph{stochastic
  resampling}
\cite{hukushima2003population,del2006sequential,machta2010population}
and \emph{genetic crossover}
\cite{john1992adaptation,goldberg2006genetic}.

The computational inefficiency of the presented variant of the SCP
should be apparent. In its described form, it requires sampling a
population of size $N$ for \emph{each discrete
  variable} in the target distribution to compute the estimators
$\{\qhat_n\}$ for the generation of a \emph{single sample} $\yv$.
Given that generating each sample for estimator construction requires
running a heuristic, even if the trivially parallel nature of the task
were exploited, this seems like a wasteful undertaking. Some speedup
can be straightforwardly achieved by estimating conditionals of
\emph{groups} (or \emph{blocks}) of variables; instead of estimating
binomial success probabilities, the task becomes that of
\emph{multinomial} estimation. But this approach needs to be applied
with caution: if the binary state spaces corresponding to the block
variables are too large relative to the population size $N$, the
resulting estimators $\qhat_n$ (where $n$ is now a block index rather
than a single variable index) will degrade rapidly, even if perfect
equilibrium samples were generated by the proposal. The compromise we
ultimately use, leveraging ideas of Bayesian robustness, is to
construct a discrete state space of approximately maximum size,
corresponding to a \emph{partition} of the set of binary outcomes
realizable by a block of variables, such that a maximum tolerable
statistical risk is never exceeded. Algorithmically, we strive to
build a maximum binary tree subject to a statistical loss constraint.
In Section \ref{sec:SSS}, we will present a novel, adaptive
methodology for discrete variable generation using the notion of
sequential constrained sampling that alleviates some of the
computational shortcomings using our robust Bayesian tree construction
and incorporating some ideas of \emph{state
  space
  search} from classical artificial intelligence. We reassure the
reader that an example clearly illustrating the combination of these
ideas will be presented in \ref{sec:SSS:comicBook}, and
apologize for the technical labyrinth that intercedes.

In Section \ref{sec:MCusingSCP}, we will show that one can in
principle perform correct Monte Carlo estimation with the na\"ive
variant of the SCP presented here using both importance sampling and
MCMC. In the next section, however, we take a diversion into the
important subject of construction of the estimators $\{\qhat_n\}$ used
in the SCP.

\subsection{Estimators of $q_n$ in SCP}
\label{sec:SCP:qnEstimators}

To ease the notation, we drop the dependence of the heuristic on $\xv$
in this section, as we do when discussing importance sampling in
Section \ref{sec:MCusingSCP:ImportanceSampling}, where typically the
heuristic distribution is not dependent on the initial state. The
dependence will recur when we present SCP-based MCMC in Section
\ref{sec:MCusingSCP:MCMC}.

\subsubsection{Preliminaries}
\label{sec:SCP:qnEstimators:Preliminaries}
Suppose we possess a collection of $N$ samples $\{\uv\samp{j} \}$. Let
$I$ be a subset of the variable indices, i.e.
$I \subseteq \{ 1, \ldots, m\}$, and $y_I$ refer to a binary state on
the corresponding variables. We define the counting function
\begin{equation}
\#[ y_I ; \{ \uv\samp{j} \}] \triangleq \sum_{j=1}^{N} \indic_{y_I}[ u_I\samp{j}]
\label{eq:countingFunction}
\end{equation}
which simply indicates the number of times the binary string $y_I$
occurs in the samples $\{ \uv\samp{j} \}$.

Consider the case in which we seek to approximate the univariate
function $q_n( y_n | y_{1:n-1})$.  A possible choice is the
\emph{histogram} or \emph{maximum likelihood estimator} (MLE)
\begin{equation}
\qhat_n( y_n | \{ \uv_n\samp{j} \}, y_{1:n-1}) = \frac{\#[ y_n ; \{ \uv_n\samp{j} \}]}{N}
\label{eq:histMLE}
\end{equation}
An undesirable property of the MLE in the present context is that it
assigns zero success probability when the observed
success count is zero. This can cause serious theoretical and
practical issues for our algorithms. 

Bayesian estimators, which obtain from minimizing the \emph{Bayes
  risk} over a suitable \emph{loss function} (often the \emph{squared
  loss}) under a hypothesized \emph{prior}, are appealing because for
many prior distributions, they never evaluate to zero for a finite
number of observations. It is not our intention to delve into the
details of Bayesian analysis here (see
e.g. \cite{bernardo2009bayesian,berger2013statistical}) but we recall
a few relevant facts. When performing inference on the success
probability of binomially-distributed data, it is common to assume
that the parameter follows a \emph{beta distribution} with two
non-negative \emph{hyperparameters.} The \emph{posterior} distribution
of the success probability will then also be beta. In our situation,
let $\alpha(y_n)$ refer to the prior parameter corresponding to
outcome $y_n \in \{+,-\} $.  The estimator that minimizes the
posterior expected \emph{squared loss} is then
\begin{equation}
\qhat_n( y_n | \{ \uv_n\samp{j} \}, y_{1:n-1}) = \frac{\#[ y_n ; \{ \uv_n\samp{j} \}] + \alpha(y_n)  }{N + \alpha(+) + \alpha(-)}
\label{eq:BayesEstBinom}
\end{equation}
Many other loss functions are possible; in this work, we use the
\emph{Kullback-Leibler loss} as discussed and justified in Section
\ref{sec:SCP:qnEstimators:RobustBayes}. The Bayes estimator with
respect to that loss happens to be the same as that under the squared
loss.

It is common to set $\alpha(+)=\alpha(-)$ in the absence of
information suggesting a preference towards either outcome; within
this family of \emph{symmetric} (or ``non-informative'') beta
distributions are the \emph{Laplace} prior ($\alpha(+)=\alpha(-)=1$)
corresponding to a uniform distribution on the success parameter, and
\emph{Jeffreys'} prior ($\alpha(+)=\alpha(-)=0.5$), which leaves the
posterior distribution invariant under a monotone transformation of
the success parameter and is hence termed the \emph{objective} prior
for the binomial parameter.

The Bayesian binomial analysis is straightforwardly generalized to the
multinomial case; the $K$-category extension of the beta distribution
is the \emph{Dirichlet distribution} with nonnegative hyperparameter
vector $\alphav \in \Reals^K$. When the set of outcomes,
defined as $\Ycal_I$, consists of all $2^{|I|}$ possible binary
configurations on variables $I$, let $\alpha(y_I)$ be the prior
hyperparameter corresponding to particular string $y_I$.  The
analogous Bayes estimator to (\ref{eq:BayesEstBinom}) for the
probability $q_n( y_I | \{ \uv_n\samp{j} \}, y_{1:n-1})$ is
\begin{equation}
\qhat_n( y_I | \{ \uv_n\samp{j} \}, y_{1:n-1}) = \frac{\#[ y_I ; \{ \uv_n\samp{j} \}] + \alpha(y_I)  }{N + \sum_{y_I}\alpha(y_I)}
\label{eq:BayesEstMultinom}
\end{equation}
Comparing the Bayes estimators (\ref{eq:BayesEstBinom}) and
(\ref{eq:BayesEstMultinom}) to the histogram estimator
(\ref{eq:histMLE}), it is commonly remarked that the Dirichlet prior
parameters can be interpreted as \emph{pseudo-counts}, or
\emph{pseudo-observations} whose influence diminishes as the amount of
data $N$ increases. 

The notion of an ``objective'' prior is considerably more involved
for the multinomial case \cite{good1965estimation}, and over the
years, several choices of $\alphav$ have been proposed. Recent work
\cite{de2011inference} has suggested that \emph{weak priors} are
preferable for multinomial problems as otherwise the set of possible
parameters for which the Bayes estimator is preferable (with respect
to squared loss) to the MLE shrinks rapidly.

The strength of the prior is particularly relevant for our algorithms
when some outcomes of the set are unobserved. An outcome $y_I$ with an
empirically-observed count of zero will have its probability
approximated as
\[
\qhat_n( y_I | \{ \uv_n\samp{j} \}, y_{1:n-1}) = \frac{\alpha(y_I)  }{N + \sum_{y_I}\alpha(y_I)}
\]
The parameter $\alpha(y_I)$ can thus be seen to control the smoothness
of the estimator; larger values of $\alpha$ result in increased
probability that the states sampled by the SCP will be unobserved in
the proposal's population, while smaller values yield increased
fidelity to the empirical statistics. Hence it is undesirable to make
$\alpha$ too large relative to $N$ as such as choice will corrupt an
inherently favorable proposal. On the other hand, some reasonable
``leakage'' probability to unseen states is desirable and indeed, as we
shall see in Section \ref{sec:MCusingSCP:ImportanceSampling},
\emph{required} to make the importance sampling estimators
theoretically correct.

An additional appealing property of the Bayesian framework is that
updating the estimators as new observed data arrive is
straightforward. More precisely, suppose that we \emph{store}, for the
clamping condition $y_{1:n}$, the counts occurring in the samples
$\{ \uv_n \samp{j} \}$ from a batch of $N$ constrained process runs;
if we draw another set $\{ \vv_n \samp{j} \}$ of size $N$ from the
same constrained process, we simply evaluate the updated estimator as
\[
\qhat_n( y_I | \{
  \uv_n\samp{j}\} \cup  \{\vv_n\samp{j} \}, y_{1:n-1}) = \frac{\#[ y_I ; \{
  \uv_n\samp{j}\}]  + \#[ y_I ;  \{\vv_n\samp{j} \}] + \alpha(y_I)  }{2N + \sum_{y_I}\alpha(y_I)}
\]
storing the modified set of counts for the next update. In the SCP
framework, this implies that one has the option of interleaving the
operations of \emph{generating} samples from estimators available at a
given time with periodic (possibly randomized) \emph{refinement} of
estimator accuracy by Bayesian updating following drawing more samples
from the appropriate constrained process. This can be understood as a
form of dynamical \emph{querying} of the unknown proposal
distribution.

\subsubsection{Robust Bayesian Estimators}
\label{sec:SCP:qnEstimators:RobustBayes}

In distinction to the usage of non-informative priors, an alternative
method of setting $\alpha$, discussed in detail in
\cite{hamze2015robust} for multinomial inference, appeals to the
notion of \emph{Bayesian robustness}\cite{berger1984robust}. In the
robust Bayesian framework, one considers the effect of prior
misspecification on an estimator's quality as the prior is allowed to
vary within some family. This departs from the traditional Bayesian
methodology but nonetheless yields estimators with appealing
properties. In this work, we will present a simple and
efficiently-computable estimator that is not necessarily optimal, but
works well enough for the purposes of the Monte Carlo algorithms. See
\cite{hamze2015robust} for more sophisticated options and an in-depth
discussion.

We define the \emph{single-pseudocount family} $\Gamma$ as the set of
Dirichlet distributions\footnote{We abuse notation and identify a
  prior distribution with its parameters} over outcome probabilities
$\{q(y_I)\}$ whose parameters $\alphav$ sum to 1:
\begin{equation}
\Gamma = \Big \{ \alphav : \sum_{y_I \in \Ycal_I} \alpha(y_I) = 1 \Big \}
\label{eq:Gamma}
\end{equation}
Assuming that the prior distribution lies within this family is not
necessarily justified when the data are very sparse, i.e. most
outcomes are observed to occur at most a few times. In our case
however, we dynamically construct tree-based \emph{partitions} over
the set of outcomes (Section
\ref{sec:SCP:qnEstimators:DynPartConstruction}), which has the effect
of controlling the sparsity. In that context, the family defined in
(\ref{eq:Gamma}) is quite appropriate. We note that the well-known
non-informative prior due to Perks \cite{perks1947some}, with
$\alpha(y_I) = \frac{1}{|\Ycal_I|}$ for all $y_I$, belongs to
$\Gamma$.

Given an estimator of outcome probabilities, the worst-case
\emph{posterior Kullback-Leibler (KL) loss} over the priors in
$\Gamma$ provides a conservative measure of its reliability. In
particular, suppose we construct an estimator $\qhat(y_I | \alphav )$
over the outcomes $\Ycal_I$, with $\alphav \in \Gamma$. The worst-case
KL loss is then defined to be
\begin{equation}
  \rhostar(\alphav) = \sup_{\alphav' \in \Gamma} \rho(
  \alphav', \alphav )
\label{eq:worstCaseKLLossDefn}
\end{equation}
with the posterior loss
\begin{equation}
  \rho( \alphav', \alphav) = \Expect{q}\Big[
  \log\frac{q}{\qhat}\Big] 
  \label{eq:KLLoss}
\end{equation}
For priors in the Dirichlet family, (\ref{eq:KLLoss}) can be shown to
be
\begin{equation}
  \rho( \alphav', \alphav) = \sum_{y_I \in \Ycal_I}
    \frac{\alpha'(y_I)+\#( y_I) }{1+N}\Big[
    \psi\big[\alpha'(y_I) +\#(y_I) +1\big]
    - \psi\big[1+N+1\big] - \log
    \frac{\alpha(y_I)+\#(y_I)}{1+N}  \Big]
  \label{eq:KLLossGamma}
\end{equation}
where $\psi(.)$ is the \emph{digamma function}
\cite{abramowitz1964handbook}. The loss (\ref{eq:KLLossGamma})
represents, up to an additive constant, the posterior expected bits of
divergence between the Bayes estimator under \emph{assumed} prior
$\alphav \in \Gamma$ and the true $q_n( . | y_{1:n-1})$ if the latter
were \emph{actually} a priori Dirichlet-distributed with parameters
$\alphav' \in \Gamma$; a smaller loss implies a more reliable
estimator. This formulation enables an analysis of estimator
robustness. The \emph{worst-case} loss defined in Equation
(\ref{eq:worstCaseKLLossDefn}) yields the maximum divergence, over all
Dirichlet priors whose parameters lie on the \emph{standard simplex},
between the true distribution and the estimator constructed by assuming
$\alphav$. In principle, one could seek an estimator that minimized
(\ref{eq:worstCaseKLLossDefn}). Unfortunately, computing such an
estimator is non-trivial, though still ``tractable.'' In this work, we
propose a simple approximation to the Bayes estimator that would yield
minimum worst-case loss over priors in $\Gamma$.  Consider the
empirical counts of all outcomes $\{ y_I\}$, and let $\#\samp{1}$
refer to the smallest among them, for example zero. Let
$\Ycal\samp{1}$ be the set of outcomes with count $\#\samp{1}$. We use
a Bayes estimator as defined in (\ref{eq:BayesEstMultinom}) with
parameters
\begin{equation} \alpha(y_I) = \left \{ \begin{array}{ll}
                                              \frac{1}{|\Ycal\samp{1}|} & y_I \in \Ycal\samp{1} \\
                                              0 & \textrm{otherwise}
                         \end{array}
                       \right.
\label{eq:gammaLFPAlphas}
\end{equation}
When there are empty outcomes, i.e. $\#\samp{1} = 0$, the probability
of generating \emph{some} outcome $y_I \in \Ycal\samp{1}$ is simply
\[
\frac{1}{1+N}
\]
with the individual outcomes in $\Ycal\samp{1}$ occurring uniformly.

To quantify reliability, we would like to evaluate
(\ref{eq:worstCaseKLLossDefn}) under our estimator.  It can be shown
that when using $\alphav$ as prescribed in (\ref{eq:gammaLFPAlphas})
the worst-case loss is achieved either when $\alpha'(y_I) = 1$ for any
single member $y_I$ of $\Ycal\samp{1}$ or for one in the set
$\Ycal\samp{2}$ of those occurring with \emph{second-lowest count},
with the rest set to zero. The maximum can therefore be computed by
evaluating (\ref{eq:KLLossGamma}) for these two choices of $\alphav'$
and taking the larger of the two values. By using appropriate data
structures, this computation is relatively inexpensive.

The KL divergence allows natural comparison between losses on domains
of \emph{different cardinality.} Given, for example, a subset $I$ of 4
variable indices, implying that $|\Ycal_I| = 16$, and another one $J$
of 5 indices, so that $|\Ycal_J| = 32$, the posterior KL losses in
both cases are nonetheless directly commensurable. This feature is
especially relevant when we dynamically determine an approximately
optimal partition of the state space subject to a loss constraint as
described in the state space sampling algorithm presented in Section
\ref{sec:SSS}. Within a sampling context, the divergence has an
additional relevant aspect: it can readily be shown by expanding the
logarithm to first order that the KL divergence is approximately
(half) the importance weight \emph{variance} resulting from using the
estimator as a proposal to the true distribution.

The strategy to construct a fine partition of the outcome set subject
to a constraint on the KL loss, which has the interpretation of imposing
a binary tree structure on the elements, is outlined in Section
\ref{sec:SCP:qnEstimators:DynPartConstruction}. This strategy is
leveraged in the sampling algorithm of Section
\ref{sec:SSS} to maximally exploit the set of heuristic samples
$\{ \Uv\samp{j} \}$ in the approximation construction so as to
minimize the number of required calls to the heuristic. The next
section discusses variance reduction and can be skipped on first
reading.

\subsubsection{Variance Reduction by Rao-Blackwellised Estimators}
\label{sec:SCP:qnEstimators:RaoBlack}
In several commonly-encountered situations, the heuristic process
follows dynamics such that the systems' individual variables are
assured of being in a \emph{local} equilibrium distribution.  This is
certainly so for simulated annealing, and can be straightforwardly
imposed for more general processes. The property can be exploited to
derive probability estimators of tremendously reduced variance, which
would improve the Monte Carlo estimates yielded by our
algorithm. Variants of this idea are used in several statistical
applications, and are collectively referred to as
\emph{Rao-Blackwellisation}\cite{casella1996rao} due the appearance of
the same principle in the fundamental Rao-Blackwell theorem of
estimation theory.

In our context, the method informally consists of replacing the
count-based Bayes estimators of the type discussed in
(\ref{eq:BayesEstMultinom}) with ones based on averaging the
\emph{exact local probabilities}, which can be computed
easily. To use the Ising model as an example, given any
state $\yv \in \Scube^{m}$, the local equilibrium probability of
variable $i$ is given by
\begin{equation}
  \Pr(x_i | \yv ) = \Pr(y_i|\yv_{\nbrs{i}} ) = \frac{ e^{-y_i (h_i +
      \sum_{j\in\nbrs(i)} J_{ij} y_j ) }}{ e^{(h_i +
      \sum_{j\in\nbrs(i)} J_{ij} y_j )} + {e^{-(h_i +
        \sum_{j\in\nbrs(i)} J_{ij} y_j )} } }
  \label{eq:isingLocalEq}
\end{equation}

We propose constructing the Rao-Blackwellised estimator to the
probability $q_n( y_I | y_{1:n-1})$ over the variable set $y_I$ in a
\emph{top-down} manner as follows. Let $(i_1, \ldots, i_{|I|})$ be a
sequence consisting of the elements of $I$, and define the sequences
\begin{displaymath}
I_k =
\left \{ \begin{array}{l l} 
1,\ldots,n-1 & \textrm{for } k = 1\\
1,\ldots,n-1,i_1,\ldots,i_{k-1}& \textrm{for } k > 1\\
\end{array}
\right.
\end{displaymath}
We define the (smoothed) Rao-Blackwellised estimator to be
\begin{equation}
  \qhatrb_n( y_I | \{ \uv_n\samp{j} \}, y_{1:n-1}) = \prod_{k=1}^{|I|} \qhatrb_n(
  y_{i_k} |  \{ \uv_n\samp{j} \}, y_{I_k}  )
  \label{eq:RBDefn}
\end{equation}
where the elements of the product in (\ref{eq:RBDefn}) are
\begin{equation}
  \qhatrb_n(y_{i_k}|\{ \uv_n\samp{j} \}, y_{I_k}) = 
  \frac{\sum_{j}\Pr(y_{i_k} | \uv_n\samp{j}) + \alpha(y_{i_k})  }{\#(y_{I_k}) + \sum_{y_I}\alpha(y_I)}
  \label{eq:RBProdElements}
\end{equation}
The summation in the numerator of (\ref{eq:RBProdElements}) is over
all sample indices $j$ such that variables $I_k$ of $\uv_n\samp{j}$
agree with the conditioning sets $y_{I_k}$, and $\#(y_{I_k})$ is the
total number of occurrences of $y_{I_k}$ in $\{ \uv_n\samp{j} \}$.

Due to the requirement of evaluating the conditional probabilities,
the Rao-Blackwellised estimator is somewhat more costly to compute
than the count-based Bayes estimator, but the corresponding
improvement in accuracy can be well worth the effort, as we will show
in Section \ref{sec:Experiments}. Indeed we remark that for systems at
very high temperature, where the multinomial sampling variance is
expected to be at its maximum, the Rao-Blackwellised estimator returns
almost perfect estimates of the true probabilities.

\subsubsection{Dynamical Construction of a Partition}
\label{sec:SCP:qnEstimators:DynPartConstruction}

We now generalize the problem of Bayesian probability estimation to
the form that will be used in the algorithm. Suppose we have clamped a
set of variables and run the constrained proposal on the remaining
variables with indices $I \subseteq \{ 1, \ldots, m \}$.  We seek to
build a distribution on as large a portion as possible of the
unconstrained state space using the samples $\{\uv\samp{j}\}$. For
readers wishing to skip the pedantic details of this section on first
reading, the idea can be summarized as follows: continue branching a
binary tree, whose leaf nodes represent partial states, computing an
estimator for the probabilities of the tree nodes using the sample
statistics, until the KL loss described in Section
\ref{sec:SCP:qnEstimators:RobustBayes} grows unacceptably large. An
example of such a tree appears in Figure \ref{fig:exSubcubeTree}

Consider the augmentation of full state space
$\Scube^m$ with the value $\Box$, meaning \emph{unassigned} (or
\emph{free}.) We define a \emph{partial state}
$ \sigmav \in \{+,-,\Box\}^m$. For example, if $m=4$, then
$\sigmav = -\Box +\Box$ corresponds to the binary state with variables
$1,3$ set to $-,+$ respectively and the rest unspecified, while
$\sigmav = \Box\Box\Box\Box$ means no variables are assigned. Each
partial state $\sigmav$ thus naturally defines a \emph{binary subcube}
$A(\sigmav)$, i.e. a set of all binary states with some subset of the
variables having fixed values. For example:\\
\begin{displaymath}
A(-\Box +\Box) = \left \{ \begin{array}{l l} 
-\fbx{+}+\fbx{+} & \qquad -\fbx{+}+\fbx{-} \\ \\
-\fbx{-}+\fbx{+} &  \qquad -\fbx{-}+\fbx{-} 
\end{array} \right \}
\end{displaymath}
where the boxes over the instantiated variables in the subcube clarify
that they were free in $\sigmav$.  Clearly the extreme cases of
$\sigmav=\{\Box\}^m$ and $\sigma_i \in \{+,-\}$ for all $i$
correspond, respectively, to subcubes of size $2^m$ and $1$
respectively. Where there is no ambiguity, we will refer to the binary
subcube $A(\sigmav)$ by its corresponding partial state
$\sigmav$. Furthermore, we can extend our previous \emph{count}
definition to that of a partial state $\sigmav$ given samples
$\{ \uv\samp{j} \}$ as \[\#[ \sigmav ; \{ \uv\samp{j} \}] \]
which returns the number of times that the specific configuration of
all \emph{assigned} variables in $\sigmav$ occurs in
$\{ \uv\samp{j} \}$

Returning to the estimation problem, define $\sigmav_R$ to be the
\emph{root} partial state corresponding to all possible outcomes of
the constrained proposal with free variables $y_I$. We desire a
maximum \emph{partition} $\Acal$ of the subcube $A(\sigmav_R)$, i.e as
large a collection as possible of pairwise disjoint subcubes
$\Acal = \{ A_i \}$ with $\cup_i A_i = A(\sigmav_R)$. The elements of
the partition are the outcomes of a discrete distribution whose
probabilities we wish to estimate. The limiting factor to the size of
the partition is the statistical quality of the resultant estimator;
if it gets too large, the quality will degrade unacceptably. We
represent the elements comprising a partition as \emph{leaf nodes} of
an \emph{incomplete} binary tree $T$, i.e. one whose leaf nodes can
lie at different depths from the root. The tree is also \emph{full}:
all its nodes have either zero or two children. Let $\nu_R$ refer to
the root node of $T$. Each internal node of the tree corresponds to
the union of its two children's subcubes; hence subcubes decrease in
size from the root to the leaves. Each node $\nu \in T$ is identified
with partial state $\sigmav(\nu)$. Furthermore, each internal
(non-leaf) node is mapped to a \emph{branch variable}, the unassigned
variable in the node's partial state that gets fixed in its two
children. The branch variable of node $\nu$ is defined as
$v(\nu) \in I $ with $v(\nu)=\emptyset$ if and only if $\nu$ is a leaf
node. An example of such a structure, which we call a \emph{subcube
  tree}, illustrating these rather natural ideas is shown in Figure
\ref{fig:exSubcubeTree}. Readers familiar with the ideas of
\emph{branch-and-bound} algorithms will recognize that the subcube
tree is simply a kind of \emph{search tree} on a binary state space;
our presentation is to emphasize the aspects that will be used in the
algorithm.

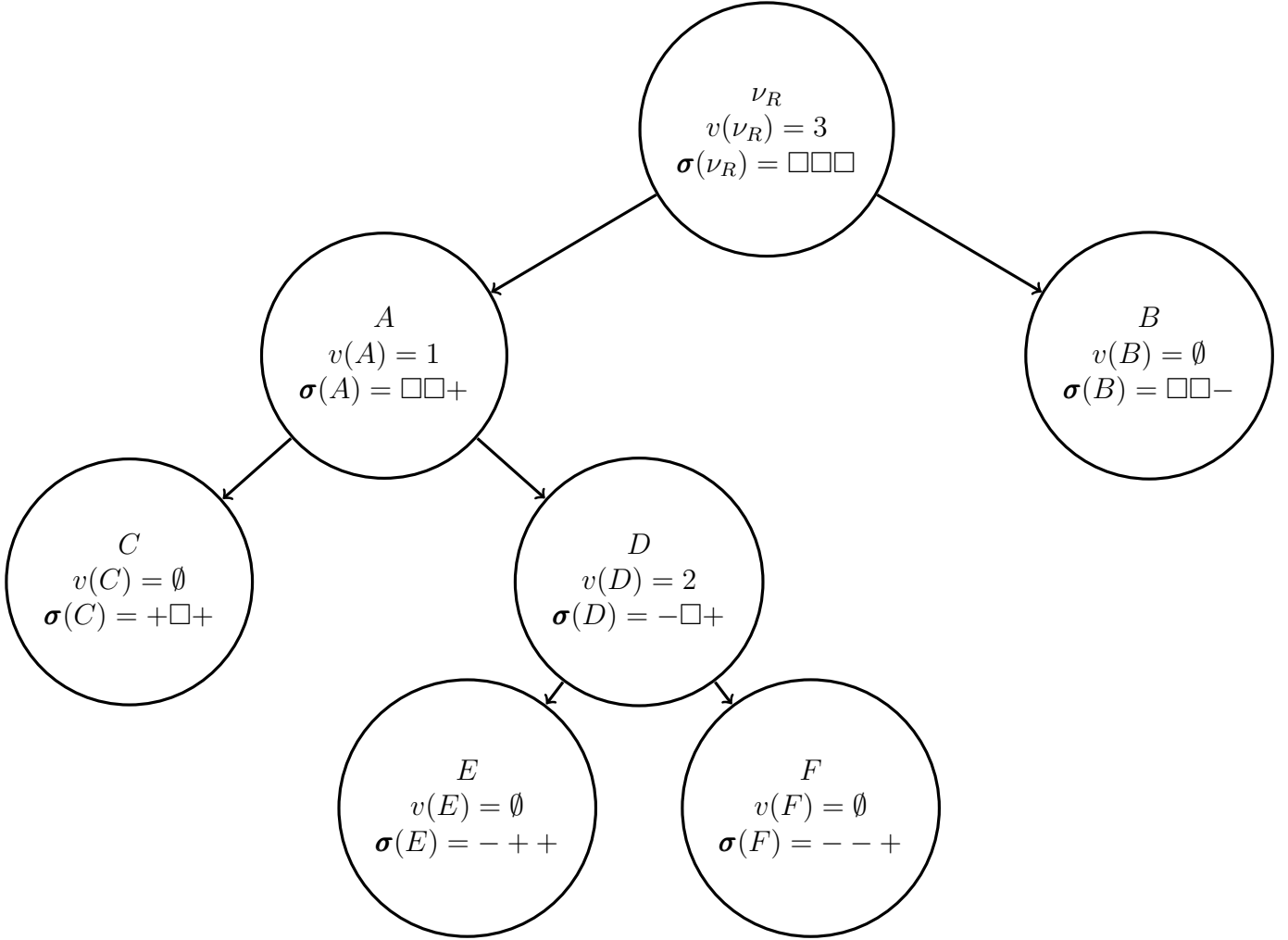
\begin{figure}[h!]
\centering
\begin{tikzpicture}[every tree node/.style={draw,circle,very thick},
  edge from parent/.append style={->,very thick},
  level distance=3.25cm,sibling distance=1.2cm, 
  edge from parent path={(\tikzparentnode) -- (\tikzchildnode)} ]
  \Tree 
    [.\node[text=black](nuRoot){$\begin{array}{c} \nu_R \\ v(\nu_R)=3 \\
\sigmav(\nu_R) = \Box\Box\Box \end{array}$};
[.\node[text=black](A){$\begin{array}{c}  A \\
v(A)=1 \\ \sigmav(A) =
\Box\Box + \end{array}$};
[.\node[text=black](C){$\begin{array}{c}
C \\
v(C)=\emptyset \\
\sigmav(C) = + \Box +
\end{array}$}; ]
[.\node[text=black](D){$\begin{array}{c}
D \\
v(D)=2 \\
\sigmav(D) = - \Box +
\end{array}$}; 
[.\node[text=black](E){$\begin{array}{c}
E \\
v(E)=\emptyset \\
\sigmav(E) = - + +
\end{array}$}; ]
[.\node[text=black](F){$\begin{array}{c}
F \\
v(F)=\emptyset \\
\sigmav(F) = - - +
\end{array}$}; ]
]
]
[.\node[text=black](B){$\begin{array}{c}
B \\
v(B)=\emptyset \\
\sigmav(B) = \Box\Box -
\end{array}$}; ]
]
\end{tikzpicture}
\caption{An example subcube tree for a state space corresponding to a
  problem with $m=3$ binary variables. Each node contains the node
  label, its branching variable $v$, and its corresponding partial
  state $\sigmav$. The symbol $\Box$ denotes an unassigned variable in
  the length $M$ state. The branch variables for all leaf nodes are
  empty. For internal nodes, $v$ specifies which variable is assigned
  in its two children. For example the left and right children $A$ and
  $B$ of root node $\nu_R$ have in their partial states variable 3 set
  to $+$ and $-$ respectively. Only nodes $E$ and $F$ correspond to
  fully-defined states. The state space partition $\Acal$ represented
  by this tree consists of the sets of states corresponding to the
  leaves. In particular,
  $\Acal = \{ +\Box + \enskip,\enskip -++\enskip,\enskip --+\enskip,\enskip \Box\Box -\}$.}
\label{fig:exSubcubeTree}
\end{figure}

To construct the tree along with the distribution over the outcomes,
we begin with a trivial tree $T_0$ containing a single node $\nu_R$
and corresponding to the vacuous partition
$\Acal_0 = \{ \sigmav_R \}$. We also maintain a \emph{priority queue}
$\Fcal$ which, inspired by the terminology of state space
search\cite{russell1995artificial,zhang1999state}, we call the
\emph{frontier}. The queue initially contains only the root node, i.e.
$\Fcal_0 = \{ \nu_R \}$. Generally, the frontier consists of leaf
nodes whose partial states are considered valid for further refinement
(or \emph{branching}.) A leaf node in the tree is called \emph{open}
if it appears in the frontier. The priorities of the nodes in $\Fcal$
correspond to their ranks when the corresponding partial states are
sorted by \emph{decreasing} count $\#[\nu ; \{ \uv\samp{j} \}]$, i.e.
more populated partial states have a higher branching
priority. Algorithms to implement priority queues are standard in
computer science \cite{cormen2009introduction}.

The algorithm thus begins. If the current frontier $\Fcal_i$ is empty,
i.e. there are no further available nodes to refine, terminate the
algorithm. Otherwise, \emph{dequeue} the most populated open leaf node
in the tree; in other words, $\nu'$ of highest priority is removed
from $\Fcal_i$. Next, from the corresponding partial state
$\sigmav(\nu')$, branch on one of the \emph{unassigned} variables
either at random, or according to some predetermined
order\footnote{Branching according to some ``greedy'' criterion, as is
  done when constructing classification trees, is not suitable in the
  present context as it will heavily bias the probability
  estimates. Classification is quite a different problem from
  parameter estimation.}.  We then compute the estimator over the
potential new partition $\Acal_{i+1}$ formed by removing
$\sigmav(\nu')$ from $\Acal_i$ and adding
$\{\sigmav(\nu_+), \sigmav(\nu_-)\}$, where $\sigma_v( \nu_+ ) = +$
and $\sigma_v(\nu_-) = -$, i.e. unassigned variable $v(\nu')$ now
explicitly takes its two possible values. The estimator at any stage
may be computed as described in Section
\ref{sec:SCP:qnEstimators:RobustBayes}. Specifically, for leaf nodes
$\nu$ corresponding to partition $\Acal_i$, employ the Bayes estimator
(\ref{eq:BayesEstMultinom}) with
\begin{equation}
\alpha(\nu) = \left \{ \begin{array}{ll}
                           \frac{1}{|\Acal_{i}\samp{1}|} & \nu \in \Acal_i\samp{1} \\
                           0 & \textrm{otherwise}
                         \end{array}
                       \right.
  \label{eq:partitionLFPAlphas}
\end{equation}
where $\Acal_i\samp{1}$ refers to the set of all members of $\Acal_i$
with minimum count $\#[\nu ; \{ \uv\samp{j} \}]$.  Alternatively, the
Rao-Blackwellised estimator discussed in Section
\ref{sec:SCP:qnEstimators:RaoBlack} given by Equation (\ref{eq:RBDefn}) using
the partition hyperparameters defined in (\ref{eq:partitionLFPAlphas})
may be preferable to lower the variance.

The cost of a partition $\Acal_i$ is defined to be the worst-case
posterior KL loss (\ref{eq:worstCaseKLLossDefn}) computed by
evaluating (\ref{eq:KLLossGamma}) with $\alpha'(\nu) = 1$ for an
arbitrary member of $\Acal_i\samp{1}$ with the remaining
$\alpha'(\nu)=0$, evaluating it for $\alpha'(\nu) = 1$ for an
arbitrary member of $\Acal_i\samp{2}$ with the rest set to zero, and
taking the maximum over the two cases. Judicious application of data
structures can considerably speed up this evaluation; this will not be
discussed here to avoid cluttering the presentation.

If the KL cost of the potential new estimator is below the specified
threshold, we accept the branch and attempt to make another
refinement; otherwise, we terminate with partition $\Acal_i$ and
estimator corresponding to $\{ \alpha(A_j) \}$ for the partition
elements $A_j \in \Acal_i$.  In the event that we continue, the
children $\nu_+$ and $\nu_-$ are only considered candidates for
further branching, if, of course, they do not correspond to full
binary states. Furthermore, in this algorithm, we do not allow
branching into aggregations whose counts fall below a threshold. We
may, for example, enforce that if $\#[\nu ; \{ \uv\samp{j} \}] = 0$
for leaf node $\nu$, then it is not eligible for further refinement.
If a node is considered valid for refinement, it is \emph{enqueued},
i.e. added by priority into $\Fcal_i$.

We now return to the task of showing that correct Monte Carlo
simulation is theoretically feasible using the SCP that uses the
approximations of the type we have discussed in this section.

\section{Monte Carlo Estimation via SCP}
\label{sec:MCusingSCP}
In the previous section, a method for sequential sampling from an
arbitrary discrete stochastic heuristic was presented such that the
resultant samples' distribution can be reasonably estimated. We have
not yet discussed, however, whether usage of approximate values of the
probabilities voids the statistical correctness of the standard
sampling methods. We now present importance sampling and MCMC methods
that are theoretically correct.

\subsection{Importance Sampling}
\label{sec:MCusingSCP:ImportanceSampling}

Consider first the case where one is interested in \emph{reweighting}
the samples generated by the SCP so as to allow statistical averages
to be computed. This is in distinction to the \emph{dynamical} MCMC
approach that will be discussed in Section \ref{sec:MCusingSCP:MCMC}. For
clarity, we will consider the one-variable-at-a-time SCP; extension to
the more general cases of constructing approximations over blocks
(groups) of variables and using the dynamical partitioning methods
discussed in Section \ref{sec:SCP:qnEstimators:DynPartConstruction} is left as an exercise.

Suppose we have generated a population of $N$ binary configurations
$\{\yv\samp{i}\}$ using the SCP Algorithm \ref{alg:SCPBasic}. We define the
importance weight of the $i^{th}$ sample
\begin{equation}
\label{eq:NDMCImpWeights}
w(\yv\samp{i}, \{ \uv_{1:m} \samp{j} \} ) = \frac{\pi(\yv\samp{i})}{
  \qhat_1(y_1\samp{i}|\{\uv_{1}\samp{j}\} )
  \qhat_2(y_2\samp{i}|\{\uv_2\samp{j}\}, y_{1}) \ldots
  \qhat_m( y_m\samp{i} |\{\uv_m\samp{j}\},y_{1:m-1} )}
\end{equation}
We then have the following straightforward result.
\begin{proposition}
  Suppose the product of the approximating distributions
  $\prod_{n=1}^{m} \qhat_n( y_n | \{ \uv \samp{j} \}, y_{1:n-1} ) \neq 0$ for all possible sets of
$\{\uv_{1}\samp{j}\}, \ldots, \{\uv_{m}\samp{j}\}$ wherever $\pi(\yv)
\neq 0 $. Define the \emph{importance sampling estimator}
\begin{equation}
\hat{h} = \frac{1}{N} \sum_{i} w(\Yv\samp{i}, \{ \Uv_{1:m}\samp{j} \}) h(\Yv\samp{i}) 
\label{eq:NDMCImpSamp}
\end{equation}
Then (\ref{eq:NDMCImpSamp}) is an unbiased estimator of $\Expect{\pi}[h(\Yv)]$.
\end{proposition}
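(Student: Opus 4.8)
The plan is to verify unbiasedness by taking the expectation of the estimator over all the randomness involved, which here consists of two coupled layers: for each sample index $i$, the populations $\{\Uv_n\samp{j}\}$ drawn from the constrained processes (used to build the estimators $\qhat_n$), and the states $\Yv\samp{i}$ drawn from those estimators. Since the estimator $\hat h$ in (\ref{eq:NDMCImpSamp}) is an average of $N$ identically-distributed terms, it suffices to show that a single term $\Expect{}[w(\Yv\samp{i},\{\Uv_{1:m}\samp{j}\})h(\Yv\samp{i})] = \Expect{\pi}[h(\Yv)]$, so I would drop the index $i$ and work with one generic sample.

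The key step is a tower-property / iterated-expectation argument. First I would condition on the entire collection of populations $\{\uv_{1}\samp{j}\},\ldots,\{\uv_{m}\samp{j}\}$ (equivalently, condition on the realized estimators $\qhat_1,\ldots,\qhat_m$). Given these, the state $\Yv$ is drawn from the product distribution $\qhat(\yv) = \prod_{n=1}^m \qhat_n(y_n\mid\{\uv_n\samp{j}\},y_{1:n-1})$ exactly as in Algorithm \ref{alg:SCPBasic}, and the weight is the deterministic function $w(\yv) = \pi(\yv)/\qhat(\yv)$. Hence the conditional expectation is the standard importance-sampling identity:
\begin{equation}
\Expect{\qhat}\big[w(\Yv)h(\Yv)\,\big|\,\{\uv_n\samp{j}\}\big] = \sum_{\yv:\,\qhat(\yv)\neq 0} \qhat(\yv)\,\frac{\pi(\yv)}{\qhat(\yv)}\,h(\yv) = \sum_{\yv} \pi(\yv)h(\yv) = \Expect{\pi}[h(\Yv)], \nonumber
\end{equation}
where the middle equality uses precisely the hypothesis that $\qhat(\yv)\neq 0$ whenever $\pi(\yv)\neq 0$, so no state with $\pi(\yv)>0$ is missed by the sum over the support of $\qhat$ (states with $\pi(\yv)=0$ contribute nothing regardless). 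Since this conditional expectation equals the constant $\Expect{\pi}[h(\Yv)]$ irrespective of which populations were drawn, taking the outer expectation over the populations leaves it unchanged, giving $\Expect{}[w(\Yv)h(\Yv)] = \Expect{\pi}[h(\Yv)]$, and then averaging over $i$ yields $\Expect{}[\hat h] = \Expect{\pi}[h(\Yv)]$.

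I expect the main subtlety — though not really an obstacle — to be bookkeeping the support condition correctly: one must be careful that the non-vanishing hypothesis is stated for \emph{all} possible population realizations, which is exactly what lets us pull the outer expectation through without worrying about an event of populations for which $\qhat$ fails to cover $\pi$. It is also worth noting explicitly that the argument does not require the populations $\{\Uv_n\samp{j}\}$ to be independent across $n$ or within a batch, nor does it require $\qhat$ to be a consistent or even good estimator of anything: unbiasedness is purely a consequence of the self-normalizing structure of the importance weight together with the coverage condition. This is why the proposition is described as ``straightforward''; the real content of the paper is that the $\qhat_n$ can be made accurate enough for the estimator to have usable variance, not merely be unbiased.
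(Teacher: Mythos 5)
Your overall strategy---cancel the $\qhat$ factors in the weight against the sampling density, invoke the coverage hypothesis, then integrate out the populations---is the same as the paper's, but the specific way you set up the tower argument contains a genuine error. You condition on the \emph{entire} collection of populations $\{\uv_1\samp{j}\},\ldots,\{\uv_m\samp{j}\}$ and assert that, given these, $\Yv$ is distributed as $\prod_{n}\qhat_n(y_n\mid\{\uv_n\samp{j}\},y_{1:n-1})$. That is not true for the SCP: the stage-$n$ population is drawn from the \emph{constrained} process $q_n(\,\cdot\mid y_{1:n-1})$, whose law depends on the components of $\Yv$ already sampled. The joint law is the interleaved product $\prod_{n} q_n(\{\uv_n\samp{j}\}\mid y_{1:n-1})\,\qhat_n(y_n\mid\{\uv_n\samp{j}\},y_{1:n-1})$, so conditioning on all the populations at once gives, by Bayes, a law for $\Yv$ proportional to that whole product in $\yv$; the factors $q_n(\{\uv_n\samp{j}\}\mid y_{1:n-1})$ do not drop out, and the conditional law is generally \emph{not} $\prod_n\qhat_n$. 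Consequently your intermediate claim that $\Expect{}\bigl[w(\Yv,\{\Uv_{1:m}\samp{j}\})h(\Yv)\mid\{\uv_{1:m}\samp{j}\}\bigr]=\Expect{\pi}[h(\Yv)]$ for \emph{every} population realization is false in general (later populations are informative about earlier $Y$'s), even though the unconditional statement you are after is true. Your closing remark that no independence structure among the populations is needed glosses over exactly this point: the relevant dependence is of the populations on the previously sampled $Y$ components, and your conditioning scheme ignores it.

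The repair is easy and brings you to the paper's argument. Either take expectations sequentially from the inside out---the expectation over $Y_m\sim\qhat_m(\cdot\mid\{\uv_m\samp{j}\},y_{1:m-1})$ cancels the $\qhat_m$ factor in the weight (legitimately, by the non-vanishing hypothesis), after which $\{\Uv_m\samp{j}\}$ appears only through $q_m(\{\uv_m\samp{j}\}\mid y_{1:m-1})$, which sums to one; recurse down to $n=1$---or, as the paper does, write the single joint sum over $(\yv,\{\uv_{1:m}\samp{j}\})$ against the interleaved density above, cancel $\prod_n\qhat_n$ with the weight denominator, and sum out the $\uv$'s, leaving $\sum_{\yv}\pi(\yv)h(\yv)$. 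With that correction your proof coincides with the paper's; the rest of your bookkeeping (reduction to a single term, the role of the coverage hypothesis, no consistency of $\qhat_n$ needed for unbiasedness) is fine.
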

\begin{proof}
Consider the term $ w(\Yv, \{ \Uv_{1:m}\samp{j} \}) h(\Yv) $ and take
its expectation over all its underlying random variables:
\[
\Expect{}[w(\Yv, \{ \Uv_{1:m}\samp{j} \}) h(\Yv)] = 
\sum_{\{ \uv_{1:m}\samp{j} \}} \sum_{ \yv } \Bigg[  w(\yv, \{
\uv_{1:m} \samp{j} \} ) h(\yv) 
\prod_{n=1}^{m} q_n(\{\uv_{n}\samp{j}\}|y_{1:n-1})
\prod_{n=1}^{m}\qhat_n(y_n | \{\uv_{n}\samp{j}\}, y_{1:n-1} ) \Bigg ]
\]
By our assumption, $w$ is finite for all $\yv$ and $\uv$ used to
construct $\qhat$. Cancellation of the $\qhat$ terms and summing
out terms depending on $\uv_{1:m}$ yields the result.
\end{proof}

We now observe why Bayesian estimators are of interest in this work; their
property of ensuring nonzero probability for all outcomes ensures that
their support will always include that of the target $\pi$. Usage of
the MLE as the approximating distributions $\qhat_n$ will, in theory,
lead to a biased estimator.

Proposition \ref{eq:NDMCImpSamp} states that the approximation is
correct in expectation, but the \emph{accuracy} of an IS estimator is
typically measured by its \emph{variance.} Although the variance of
the estimator will depend on the function $h$, it is often instructive
to consider the variance of the importance weights themselves:
\begin{equation}
\textrm{var}\Bigg[ w(\Yv, \{ \Uv_{1:m}\samp{j} \}) \Bigg] =
\sum_{\{ \uv_{1:m}\samp{j} \}} \sum_{ \yv } \frac{\pi(\yv) \prod_{n=1}^{m} q_n(\{\uv_{n}\samp{j}\}|y_{1:n-1})}{\prod_{n=1}^{m}\qhat_n(y_n | \{\uv_{n}\samp{j}\},y_{1:n-1} )}
- 1
\label{eq:NDMCImpWeightVar}
\end{equation}

We observe that since $\{ \qhat_n \} $ are assumed to be consistent
estimators, i.e.
$\qhat_n( y_n| \{ \Uv_n\samp{j} \}, y_{1:n-1} ) \Rightarrow q_n( y_n |
y_{1:n-1} ) $
in probability, as we expect, the variance of $w$ approaches that
resulting from using the \emph{exact} SCP. Unfortunately, in the case
of using an arbitrary stochastic proposal as opposed to one such as
MCMC enjoying asymptotic convergence properties, we cannot make a
stronger statement. In particular it is not strictly true that more
accurate estimates $\qhat$ will result in lower overall variance. This
is understandable: the proposal itself could be fundamentally poor
with respect to $\pi$, in the sense that sampling from and evaluating
it exactly would yield high variance. It is quite conceivable in those
situations that the approximations may in fact be favourable to the
exact values. This highlights an important aspect of our framework: it
allows principled use of an arbitrary proposal, but it cannot
compensate for an ill-fitting one. Design of an appropriate proposal
is the task of the domain practitioner.

The present importance sampling framework is an especially good fit
for Bayesian estimation in an additional sense to naturally solving
the theoretical issue of coverage of the support of $\pi$. Suppose we
possessed the capacity to \emph{store} the approximating distributions
$\qhat_n$, so that instead of building them from scratch by renewed
generation from the constrained proposal prior to drawing each sample,
we maintain them in memory and sample from them. Furthermore, with
some specified period, we can \emph{update} the Bayesian estimators as
discussed in Section \ref{sec:SCP:qnEstimators:Preliminaries} by
\emph{querying} the proposal for more data. As long as we keep
updating the estimates, this implies (assuming we have sufficiently
large memory) that the approximating importance sampling estimator
will approach that of the SCP using its exact $q(\yv)$. Of course,
``sufficiently large memory'' will in general be of exponential size;
Section \ref{sec:SSS} discusses our ultimate solution of performing
this type of sampling by maximally exploiting finite memory resources.

We conclude this section by noting for completeness that in many
cases the target $\pi$ is only known up to some intractable normalization
constant (or \emph{partition function}) $Z_\pi$. Specifically, if
\[
\pi(\xv) = \frac{\tilde{\pi}(\xv)}{Z_{\pi}}
\]
we can only evaluate $\tilde{\pi}$.
In such situations, one can adapt a well-known method to dealing with
this in conventional importance sampling. Define the
\emph{unnormalized} importance weights as
\begin{equation}
  \tilde{w}( \yv, \{ \uv_{1:m} \samp{j} \}) = \frac{\tilde{\pi}(\yv)}{\prod_{n=1}^{m}\qhat_n(y_n | \{\uv_{n}\samp{j}\},y_{1:n-1} )}
  \label{eq:NDMCImpWeightsUnnorm}
\end{equation}
The following \emph{biased} IS estimator can then be used in place of (\ref{eq:NDMCImpSamp}):
\begin{equation}
  \Expect{\pi}[h(\Yv)] \approx \frac{ \sum_i \tilde{w}(\Yv\samp{i}, \{ \Uv_{1:m}\samp{j} \}) h(\Yv\samp{i}) } { \sum_i \tilde{w}(\Yv\samp{i}, \{ \Uv_{1:m}\samp{j} \}) }
  \label{eq:NDMCImpSampUnnorm}
\end{equation}
Furthermore, an unbiased estimate for the partition
function is given by
\begin{equation}
  Z_{\pi} \approx \frac{1}{N} \sum_i \tilde{w}(\Yv\samp{i}, \{ \Uv_{1:m}\samp{j} \})
  \label{eq:NDMCISPartitionFuncEst}
\end{equation}

\subsection{Markov Chain Monte Carlo}
\label{sec:MCusingSCP:MCMC}

We now examine the feasibility of implementing a correlated dynamical
process whose equilibrium distribution is $\pi$ using the SCP. In
distinction to the importance sampling scenario, as long as the
resultant Markov chain is \emph{ergodic}, there is no requirement that
the proposal cover the full support of $\pi$; the process will
asymptotically visit the regions of the state space with the correct
probabilities. Generally, an MCMC proposal will depend on
the current state $\xv$ but may be a large-neighbourhood generalization
of local or single-site MCMC. For example one may build a
\emph{locally-modelling} distribution about the $\xv$. Alternatively,
one may only apply the proposal to some subset of the variables as it
may not adequately cover $\pi$ over the full state space.

This discussion is for reference as in the present work, we have not
attempted to implement a MCMC sampling using SCP. It is
nonetheless noteworthy, as in the Importance Sampling case, that usage
of \emph{approximate} distributions results in a correct methodology.

Algorithm \ref{alg:MCMCSCP} describes our MCMC routine. In contrast to
IS, at each step, a single joint sample is drawn from the proposal and
a random decision is made whether to accept the sample or remain at
the current point. To enforce \emph{detailed balance} (DB), which
ensures that the process converges to the correct distribution $\pi$,
the probabilities of making the \emph{forward} sampling move and its
\emph{reversal} must be computed. Neither of these two can in general
be evaluated exactly for sufficiently complex heuristic proposals; in
particular, note that in Algorithm \ref{alg:MCMCSCP}, only the
\emph{approximate} values of the proposal are used in computing the
acceptance probability \( \alpha \). Nonetheless, the method satisfies
DB, as shown in the following.

\begin{algorithm}
  \caption{Markov Chain Monte Carlo by SCP}
  \label{alg:MCMCSCP}
  \begin{algorithmic}
    \Statex
    \State Given state $\xv$, generate (correlated) updated state $\xv'$
    \Statex
    \State \textbf{Draw forward sample}
    \State Sample \( \{ \Uv_1\samp{j} \}  \sim q_1(\uv | \xv ) \)
    \State Evaluate \( \qhat_1(y_1 | \{ \uv_1\samp{j}\}, \xv  ) \)
    \State  Sample \( Y_1 \sim \qhat_1(y_1 | \{ \uv_1\samp{j}\}, \xv  ) \)
    \For{\( n = 2 \ldots m \)}
      \State Sample \( \{ \Uv_n\samp{j} \}  \sim q(u_{n:m} | y_{1:n-1},
      \xv ) \)
      \State Evaluate \( \qhat_n(y_n | \{ \uv_n\samp{j}\}, y_{1:n-1}, \xv  ) \)
      \State Sample \( Y_n \sim \qhat_n(y_n | \{ \uv_n\samp{i}\},
      y_{1:n-1}, \xv  ) \)
    \EndFor
    \Statex
    \State \textbf{Compute reverse proposal}

  \State Sample \( \{ \Zv_1\samp{j} \}  \sim q(\zv | \yv ) \)
  \State Evaluate \( \qhat_1(x_1 | \{ \zv_1\samp{j}\}, \xv  ) \)
  \For{\( n = 2 \ldots m \)}
    \State Sample \( \{ \Zv_n\samp{j} \}  \sim q_n(z_{n:m} | x_{1:n-1},
    \yv ) \)
    \State Evaluate \( \qhat_n(x_n | \{ \zv_n\samp{j}\}, x_{1:n-1},
    \yv  ) \)
  \EndFor
  \Statex
  \State \textbf{Acceptance step}
  \State  Evaluate \( \alpha = \min\Big(1, \frac{\pi(\yv)  \prod_{n=1}^{m}  \qhat_n(x_n | \{ \zv_n\samp{j}\}, x_{1:n-1},
    \yv  ) }
  { \pi(\xv) \prod_{n=1}^{m} \qhat_n(y_n | \{ \uv_n\samp{j}\}, y_{1:n-1},
    \xv  )     }  \Big )\)
  \State Sample $R \sim U[0,1]$
  \If{$R < \alpha$}
    \State  $\xv' \gets \yv$
  \Else
    \State  $\xv' \gets \xv$
  \EndIf
\end{algorithmic}
\end{algorithm}

\begin{proposition}
Algorithm \ref{alg:MCMCSCP} satisfies detailed balance. In other words:
\begin{equation}
\pi( \xv ) q( \{\uv \}, \{\zv \}, \yv |
\xv) \alpha(\xv; \{\uv \}, \{\zv \}, \yv
) = 
\pi( \yv ) q( \{\zv  \}, \{\uv \}, \xv |
\yv) \alpha(\yv; \{\zv \}, \{\uv \}, \xv
)
\label{eq:NDMCDetBal}
\end{equation}

\end{proposition}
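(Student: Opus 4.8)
The plan is to recognize Algorithm~\ref{alg:MCMCSCP} as an ordinary Metropolis--Hastings scheme on the \emph{augmented} space whose points are tuples $(\yv, \{\uv_{1:m}\samp{j}\}, \{\zv_{1:m}\samp{j}\})$, with $\xv$ playing the role of the current state and everything else being proposal randomness. First I would write the forward kernel as a product of three factors matching the three blocks of the algorithm: the joint law of the forward populations, $Q_U \triangleq \prod_{n=1}^{m} q_n(\{\uv_n\samp{j}\} \mid y_{1:n-1}, \xv)$ (no independence among the $N$ population members is assumed, so each $q_n(\{\uv_n\samp{j}\}\mid\cdot)$ is a genuine joint, with the $n=1$ term carrying an empty prefix so it is just $q_1(\{\uv_1\samp{j}\}\mid\xv)$); the law of generating $\yv$ from the forward estimators, $\widehat{Q}_Y \triangleq \prod_{n=1}^{m} \qhat_n(y_n \mid \{\uv_n\samp{j}\}, y_{1:n-1}, \xv)$; and the joint law of the reverse populations, $Q_Z \triangleq \prod_{n=1}^{m} q_n(\{\zv_n\samp{j}\} \mid x_{1:n-1}, \yv)$, which are sampled but serve only to define $\widehat{Q}_X \triangleq \prod_{n=1}^{m} \qhat_n(x_n \mid \{\zv_n\samp{j}\}, x_{1:n-1}, \yv)$. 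Thus $q(\{\uv\},\{\zv\},\yv\mid\xv) = Q_U\,\widehat{Q}_Y\,Q_Z$ and, reading off the acceptance line, $\alpha(\xv;\{\uv\},\{\zv\},\yv) = \min\!\big(1,\ \pi(\yv)\widehat{Q}_X \,/\, \pi(\xv)\widehat{Q}_Y\big)$.

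Next I would evaluate the left-hand side of~(\ref{eq:NDMCDetBal}). Since the Bayesian estimators $\qhat_n$ are strictly positive (Section~\ref{sec:SCP:qnEstimators:Preliminaries}), all quantities are positive and the elementary identity $c\min(1,d/c)=\min(c,d)$ applies, giving $\pi(\xv)\,q(\{\uv\},\{\zv\},\yv\mid\xv)\,\alpha(\xv;\{\uv\},\{\zv\},\yv) = Q_U\,Q_Z\,\min\!\big(\pi(\xv)\widehat{Q}_Y,\ \pi(\yv)\widehat{Q}_X\big)$. The heart of the argument is then to check that the right-hand side of~(\ref{eq:NDMCDetBal}) collapses to the \emph{same} expression. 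When the chain instead proposes $\yv\to\xv$, the populations drawn to \emph{sample} $\xv$ are exactly the $\{\zv\}$, with joint law $\prod_{n} q_n(\{\zv_n\samp{j}\}\mid x_{1:n-1},\yv) = Q_Z$, and the populations drawn to \emph{evaluate} $\yv$ are exactly the $\{\uv\}$, with joint law $\prod_{n} q_n(\{\uv_n\samp{j}\}\mid y_{1:n-1},\xv) = Q_U$; these are the very same two distributions that appeared in the forward direction, with only their roles interchanged. Hence $q(\{\zv\},\{\uv\},\xv\mid\yv) = Q_Z\,\widehat{Q}_X\,Q_U$ and $\alpha(\yv;\{\zv\},\{\uv\},\xv) = \min(1,\ \pi(\xv)\widehat{Q}_Y/\pi(\yv)\widehat{Q}_X)$, so the same identity yields $\pi(\yv)\,q(\{\zv\},\{\uv\},\xv\mid\yv)\,\alpha(\yv;\{\zv\},\{\uv\},\xv) = Q_U\,Q_Z\,\min\!\big(\pi(\yv)\widehat{Q}_X,\ \pi(\xv)\widehat{Q}_Y\big)$, which is literally the left-hand side, since $\min(\pi(\xv)\widehat{Q}_Y,\pi(\yv)\widehat{Q}_X)$ is symmetric under $\xv\leftrightarrow\yv$ and the prefactor $Q_UQ_Z$ is common to both.

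I expect the only genuine obstacle to be the bookkeeping in that middle step: one must verify that the constrained-process conditioning lines up correctly under the swap --- i.e. that the ``reverse'' populations of the forward move (constrained on the old-state prefixes $x_{1:n-1}$ and run from $\yv$) have exactly the same law as the ``forward'' populations of the reverse move, and symmetrically for the $\{\uv\}$'s. Once the notation is pinned down this is immediate from the definition~(\ref{eq:jointConstrProp}) of the constrained proposal, since both are $q_n(\cdot\mid x_{1:n-1},\yv)$. I would close by remarking that integrating~(\ref{eq:NDMCDetBal}) over the auxiliary populations $\{\uv\},\{\zv\}$ recovers ordinary detailed balance for the induced chain on $\xv$ alone, so $\pi$ is stationary; and that the \emph{only} property of the estimators used is strict positivity --- their statistical accuracy is irrelevant to correctness, exactly as in the importance-sampling case.
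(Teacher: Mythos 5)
Your proposal is correct and follows essentially the same route as the paper: identify the augmented-space proposal kernel as the product of the forward-population laws, the estimator-based law of the proposed state, and the reverse-population laws, note that only the ratio $\pi(\yv)\widehat{Q}_X/\pi(\xv)\widehat{Q}_Y$ enters the acceptance, and cancel. The only difference is presentational --- you use the identity $c\min(1,d/c)=\min(c,d)$ and the symmetry of $\min$ where the paper assumes w.l.o.g.\ that the forward $\alpha<1$ and substitutes; your variant also makes explicit the role-swap of the $\{\uv_n\}$ and $\{\zv_n\}$ populations and the positivity requirement on the $\qhat_n$, which the paper leaves implicit.
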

\begin{proof}
  Suppose without loss of generality that the LHS \( \alpha < 1 \)
  so that $\alpha$ on the RHS, corresponding to the \emph{reverse}
  process of generating the sets of samples \( \{ \zv_n \} \),
  \( \xv \)
  using the approximate proposals computed using these samples, and
  finally the sets \( \{ \uv_{n} \} \),
  is therefore 1. Then the overall forward transition probability
  under invariant
  distribution \( \pi \), i.e. the LHS of (\ref{eq:NDMCDetBal}), is
  \[
  \pi(\xv)  \prod_{n=1}^{m} q_n(\{\uv_{n}\samp{j}\} | y_{1:n-1},
  \xv)\qhat_n(y_n|\{\uv_n\samp{j}\}, y_{1:n-1}, \xv ) \prod_{n=1}^{m} q_n(\{
  \zv_{n}\samp{j} \} | x_{1:n-1}, \yv ) \times \alpha 
  \]
  which by substituting the definition of \( \alpha \) yields
  \[
  \pi(\yv)\prod_{n=1}^m q_n(\{
  \zv_{n}\samp{j} \} | x_{1:n-1}, \yv ) \qhat_n(x_n |
  \{ \zv_n\samp{j} \}, x_{1:n-1}, \yv )  \prod_{n=1}^{m} q_n(\{\uv_{n}\samp{j}\} | y_{1:n-1},
  \xv)
  \]
  Under the assumption of \( \min( 1, \frac{1}{\alpha} ) = 1 \), we see
  that this is precisely the reverse transition probability from \(
  \yv \) to \( \xv \) via the \( \{\zv_n\} \) and \( \{\uv_n \} \)
  auxiliary variables, i.e. the RHS of (\ref{eq:NDMCDetBal}). Hence
  detailed balance is satisfied over all random variables.
\end{proof}

Before leaving this section, we remark that one's options for storing
and modifying the proposal distribution in the MCMC setting are more
limited than they are for importance sampling, where such strategies
do not invalidate the algorithm. This will be discussed more
thoroughly in future work examining the SCP for performing MCMC; for
now, we note that Algorithm \ref{alg:MCMCSCP} requires recomputing the
forward and reverse proposals for each move attempt.

\section{State Space Sampling: Putting it all together}
\label{sec:SSS}

  
  

\subsection{Algorithm Summary}
\label{sec:SSS:AlgSummary}

We have presented ideas for sequentially constructing approximations
to an unknown stochastic process with which we desire to sample from
discrete target distribution $\pi$. We have shown that both importance
sampling and MCMC are in principle feasible using the general
approach. Practical matters have so far been deferred or only
peripherally mentioned. In this section, we finally orchestrate the
disparate components into a workable importance sampling
algorithm. Generalization to MCMC will be considered in future work.

The method we propose is clearly motivated by the ideas of \emph{state
  space search} \cite{russell1995artificial,zhang1999state} from
computer science. In that classical idea, a \emph{heuristic function}
is used to guide exploration of a state space with the objective of
finding some \emph{goal state}, for example one of minimum cost. In
this work, we use our results on the statistical correctness of
importance sampling with the SCP to generalize state space search to
the stochastic setting, and we hence call the algorithm, somewhat
insipidly, \emph{state space sampling}(SSS). The approximate statistics
gathered from runs of the proposal process will serve as the
randomized heuristic decision rules. In direct analogy to the search
case, the quality of the resultant estimates depend on the suitability
of the proposal to the target distribution.

We will illustrate the key aspects in Section \ref{sec:SSS:comicBook}
using a running example; pseudocode describing a rudimentary version
is included in Section \ref{sec:SSS:algPseudoCode}. Here we quickly
summarize the method.

An outer loop seeks to generate, say, $K$ samples
$\{\Yv\samp{i} \in \Scube^m \}$for use in an importance sampling
estimator. Initially, no statistical model of the proposal
distribution exists, so we instantiate samples $\{ \Uv\samp{j}\}$ from
the process and dynamically determine as large a state space partition
as possible subject to a conservative constraint on the KL loss as
described in Section
\ref{sec:SCP:qnEstimators:DynPartConstruction}. In other words, we
grow a subcube tree and determine a Bayesian count-based or
Rao-Blackwellised estimator over its nodes' occurrence
probabilities. All leaf nodes then have nonzero probability of being
visited, including those for which no samples have been observed.

We next sample a leaf node from the partial tree; if this node
corresponds to a \emph{full} binary state, we return it along with its
probability\footnote{in practice, its logarithm  to avoid numerical underflow}
and proceed to draw the next state beginning from the root of the existing
tree. If, on the other hand, the leaf node corresponds to an
\emph{incomplete} state, we recursively call the constrained proposal
from the partial state corresponding to the visited leaf node, growing
a new subcube tree and statistical representation from the leaf
node. We call this a \emph{refresh} of a leaf node; all points in the
tree that result from a renewed call to the constrained process are
called \emph{refresh points.} Once the subcube tree has been extended,
the sampling resumes in this manner until a full state has been
generated, which is returned along with its probability.

Due to \emph{caching} in computer memory of the overall tree between
subsequent state requests, considerable savings in the number of
required proposal process runs can be achieved in some circumstances.
These runs constitute the dominant computational bottleneck for
problems with rough energy functions. As all leaf nodes occur with
nonzero probability however, the tree will eventually grow
unmanageably large as an increasing number of states are requested. To
circumvent this, the subcube tree is kept at \emph{a bounded size} by
\emph{prioritized
  deletion}. Once a predefined size limit is reached, the least likely
elements are removed to make room for new nodes. Such memory bounding
strategies have been used in state space search algorithms for some
time \cite{russell1992cient}. We call an internal node both of whose
children are leaves a \emph{subleaf node.} The operation of converting
a subleaf node into a leaf by deleting its two children is called
\emph{retraction}. By maintaining a retraction \emph{priority
  queue} of all subleaf nodes, where a node's retraction priority
\emph{decreases} with increasing visitation probability, we can
efficiently look up candidates for retraction as a new part of the
tree is grown. When the tree attains its maximum size, retraction is
interleaved with branching of new nodes, so some care is required to
prevent removing nodes that may interfere with the growth of a
subtree. The pseudocode in Algorithm \ref{alg:dynamicTrees}
illustrates the steps to ensure correct subtree construction.

\subsection{State Space Sampling: The Comic Book}
\label{sec:SSS:comicBook}

As promised, we now illustrate a simple run of the algorithm. We wish
to generate samples on an $m=3$ variable binary state space. All
constrained proposal runs will draw the same number of
samples ($N=10$), though as discussed in Section
\ref{sec:SCP:SeqConstProc}, this is not required. To demonstrate the
memory-bounding aspect, we impose a size limit on the subcube tree of
$9$ nodes.

\begin{figure}
  \centering
  \begin{subfigure}[b]{0.4\textwidth}
    \centering
    \includegraphics[scale=1.15]{./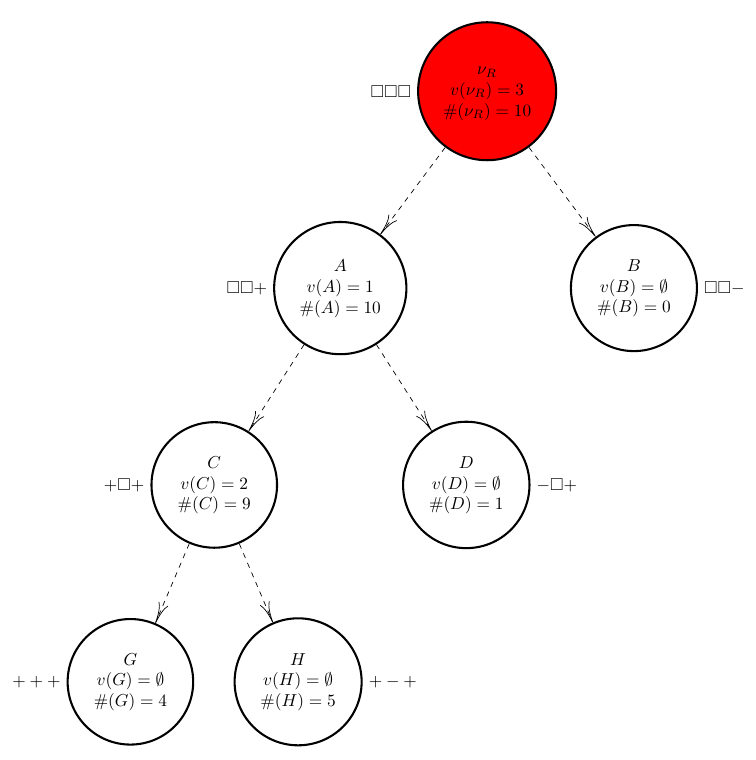}
    \caption{}
    \label{fig:stateSpaceSample1}
  \end{subfigure}
  \hspace{20mm}
  \begin{subfigure}[b]{0.4\textwidth}
    \centering
    \includegraphics[scale=1.15]{./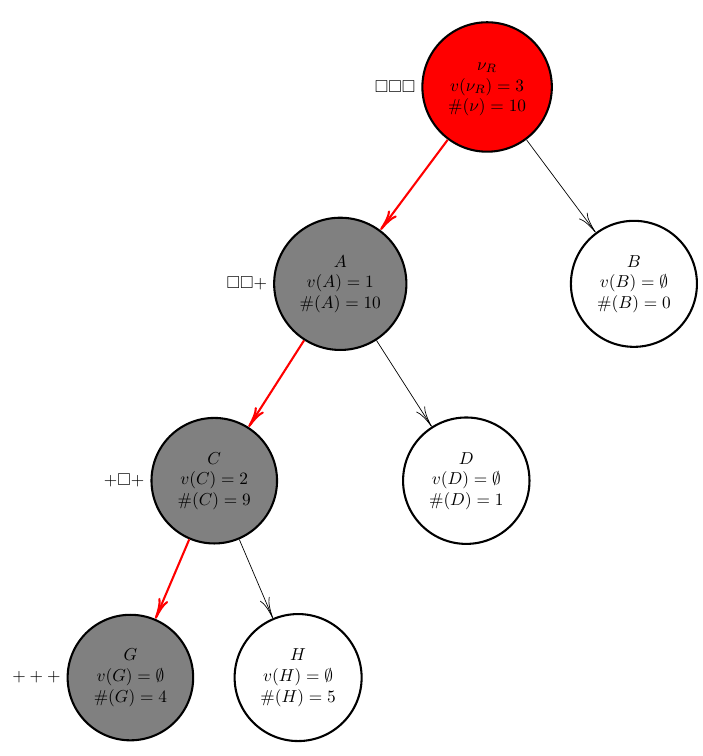}
    \caption{}
    \label{fig:stateSpaceSample2}
  \end{subfigure}
  \caption{First stages of the procedure on a toy binary system of
    $m=3$ variables. Figure \ref{fig:stateSpaceSample1} illustrates
    construction of the initial subcube tree from a population
    $\{\Uv\samp{j}\}$ drawn from an unconstrained proposal
    process. Labels, branch variables, and populations are shown in
    each tree node; corresponding partial binary states appear
    adjacently. The root node is red to signify that it is a refresh
    point; a maximum tree size of $9$ nodes must be ensured in the
    subsequent algorithm. In Figure \ref{fig:stateSpaceSample2}, the
    full binary state $+++$ happens to be sampled from the model.}
\end{figure}

Prior to drawing the first sample, we of course possess no statistical
representation, hence we draw a population $\{ \Uv\samp{j} \}$ from
the proposal with no constraints applied. A subcube tree is
constructed as shown in Figure \ref{fig:stateSpaceSample1}. Each tree
node displays its label (e.g. $\nu_R$ for the root,) its branching
variable $v$, and the population of samples in the data with the
corresponding partial state. Leaf nodes do not possess a branch
variable.  In Figure \ref{fig:stateSpaceSample1}, we indicate the
partial states beside each tree node; note that only nodes $G$ and $H$
represent fully-instantiated states. Furthermore, node $B$ is excluded
in our algorithm from further branching because it is empty; node $D$
was prevented from branching because it would have violated the KL
loss constraint (the tree would have become unreliable.) The root node
is coloured red to signify that it was a refresh point. We construct
the Bayes estimator over the leaf nodes of the subcube tree as
described in Section \ref{sec:SCP:qnEstimators:DynPartConstruction},
where in this case the partition element of minimum count is node $B$,
and recursively set all internal node probabilities. Node $C$ is a
subleaf node, and so it is entered into the retraction queue.

We then proceed to sample from the representation in Figure
\ref{fig:stateSpaceSample1}; the red arrows and sequence of gray nodes in
Figure \ref{fig:stateSpaceSample2} show the sampling of a
leaf node that happens in this case to represent the full state
$+++$. This state and its probability under the model can then be used
in an importance sampler. 

\begin{figure}
  \centering
  \hspace{-10mm}
  \begin{subfigure}[b]{0.4\textwidth}
    \centering
    \includegraphics[scale=1.15]{./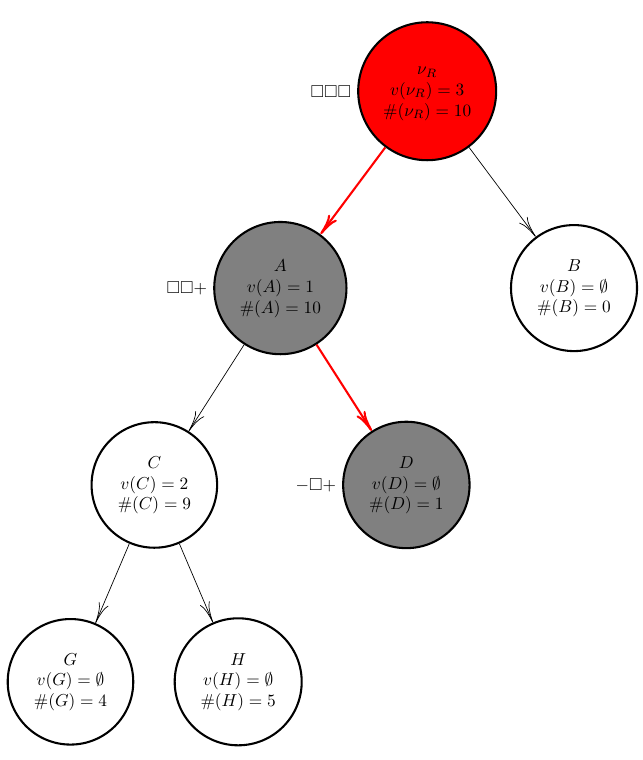}
    \caption{}
    \label{fig:stateSpaceSample3}
  \end{subfigure}
  \hspace{10mm}
  \begin{subfigure}[b]{0.4\textwidth}
    \centering
    \includegraphics[scale=1.15]{./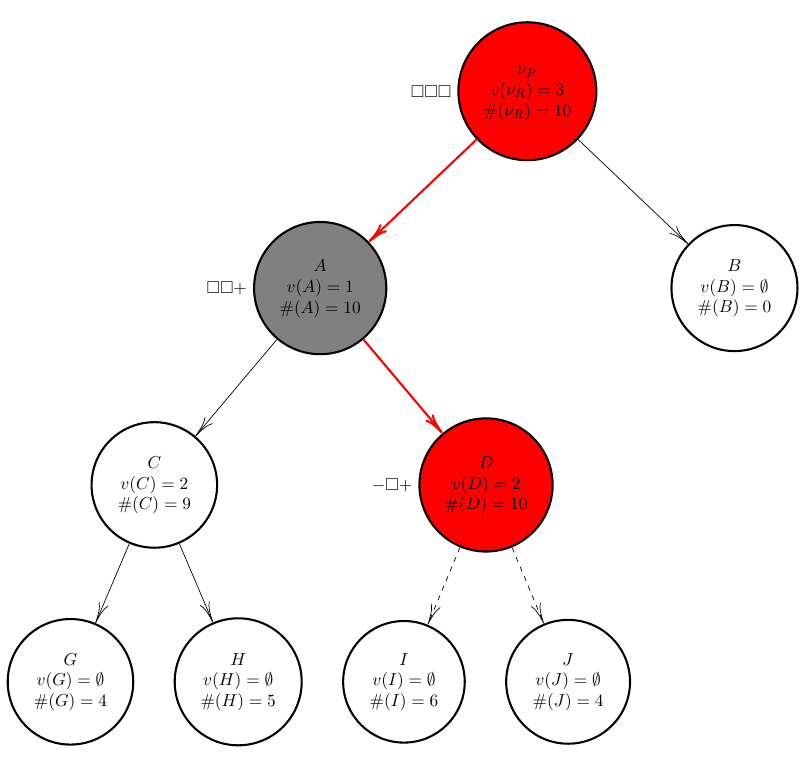}
    \caption{}
    \label{fig:stateSpaceSample4}
  \end{subfigure}
  \caption{Figure \ref{fig:stateSpaceSample3} shows a sample from the
    subcube tree encountering leaf node $D$, paired with incomplete
    binary state $- \Box +$. Hence, sampling pauses as we \emph{query}
    for more data by running the proposal with constraining condition
    $- \Box +$. The subcube tree is \emph{extended} in Figure
    \ref{fig:stateSpaceSample4}, with node $D$ becoming a refresh
    point. The tree is now at its maximum allowable size $9$.}
\end{figure}

Suppose that, as shown in Figure \ref{fig:stateSpaceSample3}, in a
later request for a sample we happen to visit node $D$, mapping to the
incomplete state $-\Box +$. The sampling pauses as \emph{extension} of
the tree is required.  In Figure \ref{fig:stateSpaceSample4}, node $D$
becomes a refresh point; $10$ runs of the proposal are performed with
clamping condition $-\Box +$.  Nodes $I$ and $J$ are generated to
extend the subtree. As a result of the extension, $D$ becomes a
subleaf and is thus placed in the retraction queue.
\begin{figure}[h!]
  \centering
  \includegraphics[scale=1.15]{./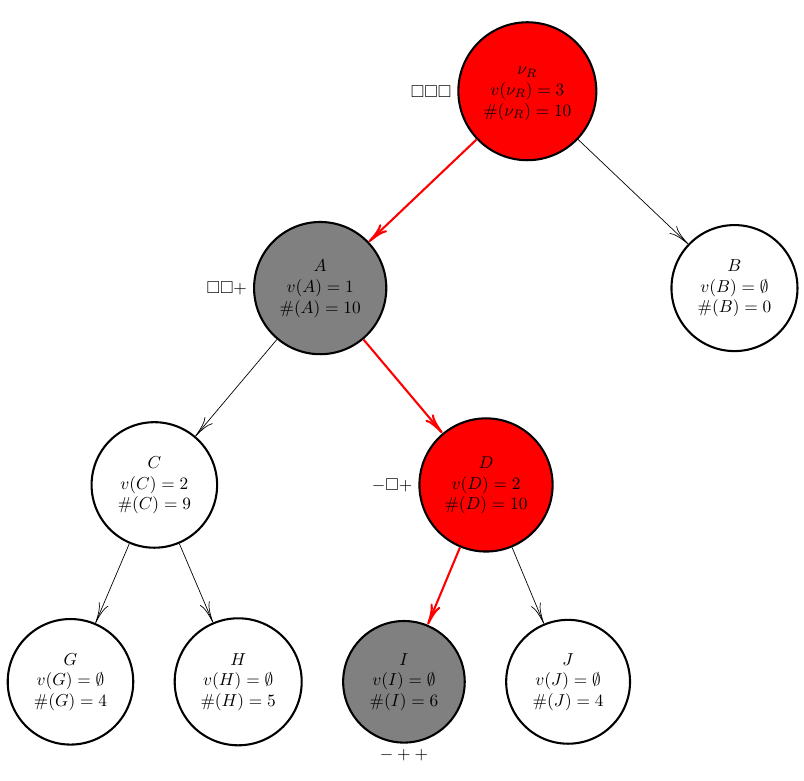}
  \caption{Continuation of the sampling paused in Figure
    \ref{fig:stateSpaceSample3}; in this case, full binary state $-++$
    results.}
  \label{fig:stateSpaceSample5}
\end{figure}
Sampling resumes in Figure \ref{fig:stateSpaceSample5}; the full state
$-++$ results. Its returned probability is simply the product of the
conditional probabilities along the path from the root, irrespective
of which model was used to approximate the conditionals. Note that the
overall number of generated nodes is now $9$, so the tree is at its
full capacity. Any further node generations will now require
corresponding retractions.

\begin{figure}[h!]
  \centering
  \hspace{-30mm}
  \begin{subfigure}[b]{0.4\textwidth}
    \centering
    \includegraphics[scale=1.15]{./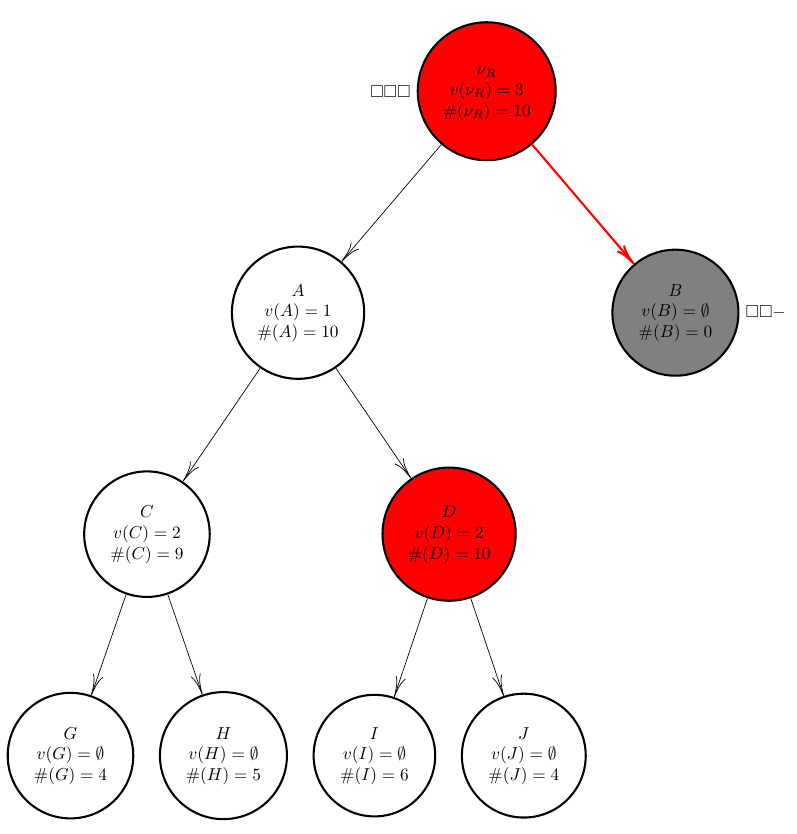}
    \caption{}
    \label{fig:stateSpaceSample6}
  \end{subfigure}
  \hspace{20mm}
  \begin{subfigure}[b]{0.4\textwidth}
    \centering
    \includegraphics[scale=1.15]{./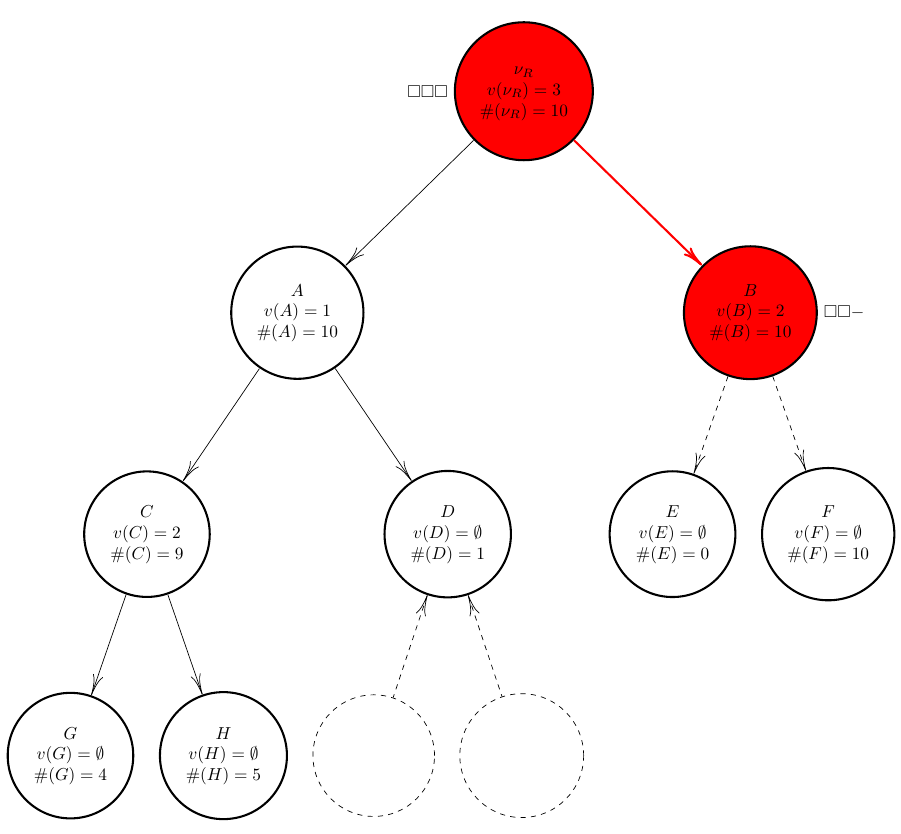}
    \caption{}
    \label{fig:stateSpaceSample7}
  \end{subfigure}
  \caption{Incomplete state $\Box\Box -$ is sampled in
    \ref{fig:stateSpaceSample6}; a refresh is performed from node $B$,
    but extension of the tree now requires retraction to respect the
    9 node size restriction. Of the two eligible \emph{subleaf} nodes,
    i.e. those with two children, both of which are leaves, node $D$
    happens to have a higher priority for retraction because it has a
    lower visitation probability. Nodes $I$ and $J$ are thus deleted
    to make room for $E$ and $F$. }
\end{figure}

In Figure \ref{fig:stateSpaceSample6}, we happen to visit node $B$,
with partial state $\Box\Box -$. Once again, sampling pauses at node
$B$ while it is refreshed with a call to the proposal with condition
$\Box \Box -$ imposed. In Figure \ref{fig:stateSpaceSample7} the first
growth step using the new population generates nodes $E$ and $F$ by
branching on variable $v=2$; to enforce the tree size constraint, a
subleaf node is retracted. In this case, it happens that node $D$ has
a lower visitation probability than $C$, so its two children $I$ and
$J$ are deleted, maintaining the tree size at $9$. Node $B$ is a
subleaf now, but is not yet considered a retraction candidate to avoid
interfering with the subtree under construction.

\begin{figure}[h!]
  \centering
  \hspace{-20mm}
  \begin{subfigure}[b]{0.4\textwidth}
    \centering
    \includegraphics[scale=1.15]{./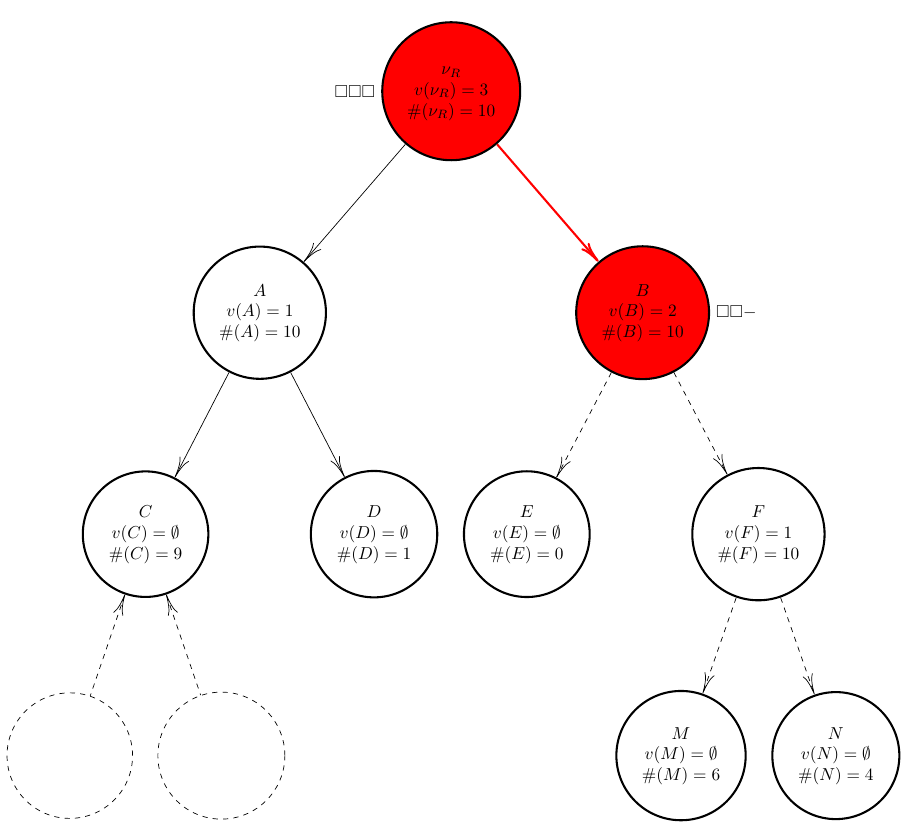}
    \caption{}
    \label{fig:stateSpaceSample8}
  \end{subfigure}
  \hspace{30mm}
  \begin{subfigure}[b]{0.4\textwidth}
    \centering
    \includegraphics[scale=1.15]{./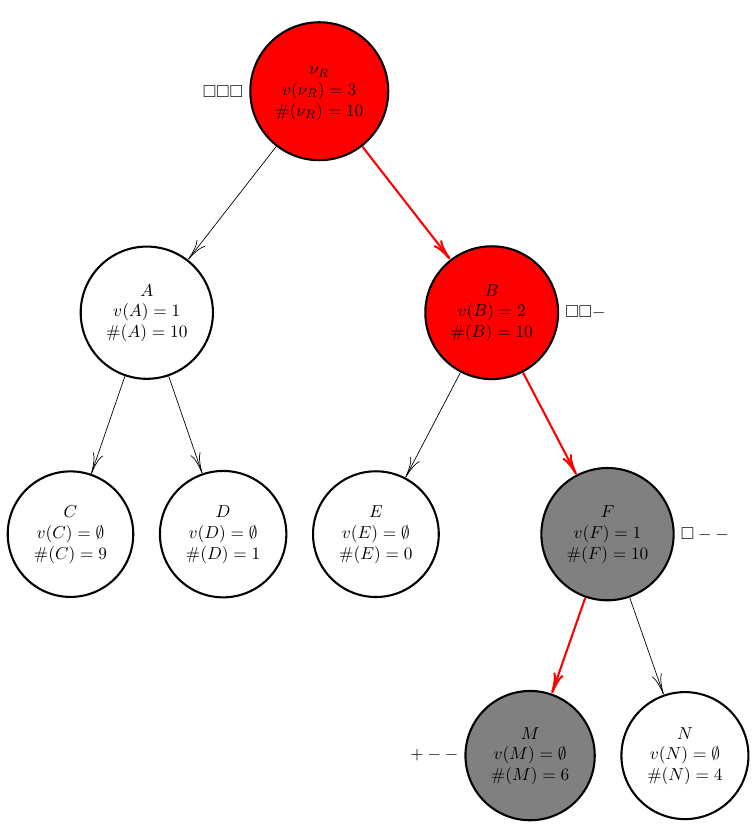}
    \caption{}
    \label{fig:stateSpaceSample9}
  \end{subfigure}
  \caption{Extension of the tree begun in
    \ref{fig:stateSpaceSample7} continues, still using the proposal
    samples generated from node $B$. Subleaf $C$ is retracted to
    free memory. At this point the extension is complete, and the
    sampling paused in Figure \ref{fig:stateSpaceSample6} resumes to
    generate state $+--$}
\end{figure}

The tree is continued from $F$ to yield nodes $M$ and $N$ in Figure
\ref{fig:stateSpaceSample8}. The sole remaining subleaf, node $C$, is
retracted, rendering $A$ a subleaf. But the refresh from $B$ is now
complete ($E$ is empty.) All subleaves of the new subtree are now
considered eligible for retraction; in this case, $F$ is enqueued. The
sampling which paused at $B$ in Figure \ref{fig:stateSpaceSample6} now
resumes; Figure \ref{fig:stateSpaceSample9} shows the continuation to
node $M$, i.e. full state $-++$.

Having illustrated the essential ideas, we next present pseudocode of
a variant of SSS that was used in our experimental validation.

\subsection{Algorithm Pseudocode}
\label{sec:SSS:algPseudoCode} 

The collection of functions comprising SSS is presented in
Algorithm \ref{alg:dynamicTrees}. The pseudocode must necessarily
trade off transparency of presentation with efficiency of
implementation; we certainly do not claim that the routines are
optimal. Not every variable and function in the pseudocode is
explicitly defined; we hope that, in conjunction with the example run
discussed in Section \ref{sec:SSS:comicBook} and the discussion of the
algorithm principles throughout the paper, what the methods are doing
will be clear from the context. We ask the reader to mind the
distinction between a node $\nu$ and a branch variable $v$. Following
a standard convention, elements of a data structure are referred to by
their name preceded by a period; for example the left child of a tree
node $\nu$ is $\nu$.\texttt{LeftChild}.

The code presented computes the estimated state probabilities using
the count-based robust Bayes estimator described in Section
\ref{sec:SCP:qnEstimators:RobustBayes}; extension to
Rao-Blackwellisation as discussed in Section
\ref{sec:SCP:qnEstimators:RaoBlack} is not difficult.

The main function is \algoname{SampleS}, which recursively returns discrete states and their
(log) probabilities. It takes as input a tree
node $\nu$, a structure called \texttt{status} containing dynamical
data related to the generation process, and a structure called
\texttt{params} containing the user-specified simulation
parameters. The \texttt{status} structure is used to track the state
under construction and its log probability.

Once a call to \algoname{SampleS} arrives at a leaf node corresponding
to an incomplete partial state, the function \algoname{extendTree}
performs the required tree extension. This involves running the SCP
with the user-specified heuristic process
\texttt{params}.\texttt{simulationProcess} and constructing a robust
Bayes estimator on a state space partition derived from the SCP
samples $\{ \uv\samp{i} \}$. An \emph{expansion queue} prioritizes the
order in which to branch the tree nodes; variables are selected from
the unassigned set $\nu.\texttt{freeVariables}$ according to some rule
called \algoname{chooseBranchVariable}. Examples of such a branching
rule are to follow a fixed or totally random order, or for problems
with a topological structure, to employ a parallelisation strategy as
discussed in Appendix \ref{sec:Appendix:partitioning}. For Ising-type
problems, an additional possibility which we used in our
implementation is to allow a randomized order, but only from among
unassigned variables neighbouring those in the assigned set.

Extension of the tree terminates when the posterior KL loss exceeds
threshold $\theta$; nodes are only considered candidates for branching
if the population of the corresponding partial states is larger than
$\#_{thresh}$. The function \algoname{retractWorstSubleaf} bounds the
tree size at specified value \texttt{maxTreeSize} by prioritized
deletion from a \emph{retraction
  queue} of subleaf nodes.

Before presenting numerical results, we mention that the possibility
alluded to in Section \ref{sec:MCusingSCP:ImportanceSampling} of
periodic Bayesian estimator updating following drawing more SCP
samples from the partial state distributions corresponding to subtrees
was left to future work.  The idea is briefly touched upon in Appendix
\ref{sec:Appendix:BayesianUpdating}.

\begin{algorithm}
  \caption{State Space Sampling}
  \label{alg:dynamicTrees}
  \begin{algorithmic}
    
    \Function{sampleS}{$\nu$,\texttt{status},\texttt{params}}
      
      \If{\texttt{status}.State is a full state}
        \State \texttt{status}.$\textrm{logQ} \gets \nu.\textrm{logQ}$
        \State \textbf{return}
      \EndIf
      
      \If {$\nu$ is a leaf node}
        \State \textbf{extendTree}($\nu$, \texttt{status},\texttt{params})
      \EndIf
      
      \State $\nu_{next} \gets $ \textbf{sampleChild}($\nu$)
      
      \If {$\nu.\textrm{\texttt{LeftChild}} = \nu_{next}$}
        \State $S_{next} \gets 1$
      \Else
        \State $S_{next} \gets -1$
      \EndIf
      
      \State \texttt{status}.State[$\nu.v$] $\gets S_{next}$
      \State \textbf{sampleS}($\nu_{next}$, \texttt{status}, \texttt{params})
      
    \EndFunction
    \Statex           

    \Function{sampleChild}{$\nu$}
      \State $\triangleright$ Only valid for internal nodes
      \State $P_+ \gets \exp( \nu.\textrm{\texttt{LeftChild}.logQ} - \nu.\textrm{logQ}  )$
      \State Sample $R \sim U[0,1]$
      \If{$R< P_+$}
        \State \textbf{return} $\nu$.\texttt{LeftChild}
      \Else
        \State \textbf{return} $\nu$.\texttt{RightChild}
      \EndIf

    \EndFunction
    
    \Statex

    \Function{setTreePs}{$\nu,N,\textrm{logQ}_r$}

      \If{$\nu$ is a leaf}
        \State $\nu.\textrm{logQ} \gets \log[\#(\nu)+\alpha(\nu)] - \log(1 + N) + \textrm{logQ}_r$
        \State \textbf{return}
      \Else
        \State \textbf{setTreePs}($\nu.\textrm{\texttt{LeftChild}},N,\textrm{logQ}_r$)
        \State \textbf{setTreePs}($\nu.\textrm{\texttt{RightChild}},N,\textrm{logQ}_r$)
        \State $\nu.\textrm{logQ} \gets \log[ \exp(\nu.\textrm{\texttt{LeftChild}.logQ}) + \exp( \nu.\textrm{\texttt{RightChild}.logQ}) ]$
          
      \EndIf
    \EndFunction

    \algstore{alg:dynamicTrees}
    
  \end{algorithmic}
\end{algorithm}

\begin{algorithm}
  \caption{State Space Sampling (cont'd)}
  \begin{algorithmic}
    \algrestore{alg:dynamicTrees}

    \Function{extendTree}{$\nu$,\texttt{status},\texttt{params}}
      \State $\triangleright$ $\nu$ is root node of subtree being grown

      \If {$\nu$ is not root}  \Comment{Avoid root Node}
        \If {$\nu.\textrm{\texttt{Parent}}$ is a subleaf node}
          \State Remove $\nu.\textrm{\texttt{Parent}}$ from \texttt{status}.$\Rcal$
        \EndIf
      \EndIf
      
      \State $\{\uv\samp{i}\} \gets $\texttt{params}.\texttt{simulationProcess}(
      \texttt{status}, \texttt{params} )  \Comment{Sample from process}

      \State $\nu.n \gets |\{\uv\samp{i}\}| $ \Comment{Number of samples may be determined by the heuristic}

      \State $\Fcal \gets$ \textbf{newExpansionQueue}
      \State \textbf{enQueue}($\nu, \Fcal$)

      \While{$\Fcal$ is not empty} \Comment{There exist frontier nodes}

        \State \texttt{retractionFailed} $\gets$ \textbf{retractWorstSubleaf}(\texttt{status}, \texttt{params}, $\nu$.\texttt{Parent})
        \If{\texttt{retractionFailed}}
          \State \textrm{break}
        \EndIf
      
        \State $\nu' \gets$ \textbf{deQueueBest}($\Fcal$) \Comment
        {Most populated tree node is selected for branching}     
        \State $v' \gets$ \textbf{chooseBranchVariable}($\nu', \{
        \uv\samp{i}\}$) \Comment {E.g. randomly from unassigned variables}
        \State \textbf{branchTree}($\nu'$, $v'$, $\{ \uv\samp{i}\}$)
        \Comment {Node $\nu'$ is extended by branching on $v'$}
        \State $\alphav' \gets$ \textbf{robustBayesEstimator}($\nu$)
        \Comment{New potential $\alphav$ matching extended tree}
        \State $KL \gets$ \textbf{evalKLCost}( $\nu$, $\alphav'$)
        \If{ $KL > \textrm{\texttt{params}.}\theta$} 
          \Comment{Reject the branch on $v'$}
          \State \textbf{retractTree}($\nu'$)
          \State \textbf{break}
        \EndIf        
        \State $\alphav \gets \alphav'$ \Comment{Accept the
          new estimator at the leaf nodes}
        \If{$|\nu'.\textrm{\texttt{freeVariables}}| > 1$} \Comment{Did we assign last
            possible variable?}
          \State $\nu_+ \gets \nu'.\textrm{\texttt{LeftChild}}$
          \State $\nu_- \gets \nu'.\textrm{\texttt{RightChild}}$
          \If{$\#(\nu_+) > \textrm{\texttt{params}.}\#_{thresh}$} \Comment{Only if $\nu_+$ is
              sufficiently populated}
            \State \textbf{enQueue}($\nu_0$, $\Fcal$)
          \EndIf
          \If{$\#(\nu_-) > \textrm{\texttt{params}.}\#_{thresh}$} \Comment{Only if $\nu_-$ is
              sufficiently populated}
            \State \textbf{enQueue}($\nu_1$, $\Fcal$)
          \EndIf
        \EndIf
  
      \EndWhile
      \algstore{alg:dynamicTrees}
    \end{algorithmic}
  \end{algorithm}
  
\begin{algorithm}
  \caption{State Space Sampling (cont'd)}
  \begin{algorithmic}
    \algrestore{alg:dynamicTrees}
      \State \textbf{setTreePs}($\nu,\nu.n,\nu.\textrm{logQ}$) \Comment{Recursively set internal probabilities of new subtree}
      \State \textbf{updateRetractionQueue}($\nu$, \texttt{status}) \Comment{Enter subleaves of new subtree into retraction queue}
    \EndFunction
    \Statex
    \Function{robustBayesEstimator}{$\nu$}
      \State $\#_{min} \gets \infty$
      \For{$\nu' \in T(\nu)$} \Comment{All nodes in subtree rooted at $\nu$}
        \If{$\nu'$ is a leaf node}
          \If{$\#(\nu') < \#_{min}$}
            \State{$\#_{min} \gets \#(\nu')$}
          \EndIf
        \EndIf
      \EndFor
      \For{$\nu' \in T(\nu)$}
        \If{$\nu'$ is a leaf node}
          \If{$\#(\nu') = \#_{min}$}
            \State{$\alpha(\nu') \gets 1/\#_{min}$}
          \Else
            \State{$\alpha(\nu') \gets 0$}
          \EndIf
        \EndIf
      \EndFor
    \EndFunction
    \algstore{alg:dynamicTrees}
  \end{algorithmic}
\end{algorithm}

\begin{algorithm}
  \caption{State Space Sampling (cont'd)}
  \begin{algorithmic}
    \algrestore{alg:dynamicTrees}
    \Function{updateRetractionQueue}{$\nu$,\texttt{status}}
      \If{$\nu$ is a leaf}
        \State \textbf{return}
      \ElsIf{$\nu$ is a subleaf}
        \State \textbf{enQueue}($\nu$, \texttt{status}.$\Rcal$)
        \State \textbf{return}
      \Else
        \State \textbf{updateRetractionQueue}($\nu$.\texttt{LeftChild}, \texttt{status})
        \State \textbf{updateRetractionQueue}($\nu$.\texttt{RightChild}, \texttt{status})
      \EndIf
    \EndFunction
    
    \Statex

    \Function{retractWorstSubleaf}{\texttt{status}, \texttt{params}, $\nu_{protected}$}
      \State $\triangleright$If tree size at limit, try to delete
      worst subleaf, avoiding $\nu_{protected}$
      \State $\triangleright$If no subleaves are available to
      retract, returns failure
      \If{\texttt{status}.treeSize $>$ \texttt{params}.\texttt{maxTreeSize}}
        \If{\texttt{status}.$\Rcal$ is not empty} \Comment{Available nodes
            to retract?}
          \State $\nu_{worst} \gets $\textbf{deQueueWorst}(\texttt{status}.$\Rcal$)
          \State \textbf{retractTree}$(\nu_{worst})$ \Comment{Delete
            the two children of $\nu'$}
          \State \texttt{status}.treeSize $\gets$ \texttt{status}.treeSize$-2$
          \If{$\nu_{worst}$.\texttt{Parent} is now a subleaf and is not $\nu_{protected}$}
            \State \textbf{enQueue}($\nu_{worst}$.\texttt{Parent}, \texttt{status}.$\Rcal$)
          \EndIf
          \State \textbf{return} Success 
        \Else
          \Comment{No available retractions}
          \State \textbf{return} Failure
        \EndIf
      \Else
        \Comment{Tree not at full capacity, no retraction needed}
        \State \textbf{return} Success
      \EndIf
    \EndFunction
  \end{algorithmic}
\end{algorithm}

\section{Experimental Results}
\label{sec:Experiments}

\subsection{Overview}
\label{sec:Experiments:Overview}

We now turn to an experimental validation of the SSS methodology. We
analyze the algorithm on four different classes of Ising model
by running a simulated annealing (SA) process under varying sets of
heuristic parameters. The analytically-solvable cases of
\emph{independent} and \emph{chain-structured} (one-dimensional) Ising
variables are considered, as are the more complex
\emph{fully-connected} (or \emph{Sherrington-Kirkpatrick}) and
\emph{three dimensional} models.

Given that one of the stated points of this work is to allow for
equilibrium sampling using processes that do not necessarily converge
to the target distribution, it may seem unsatisfying that we have
chosen to use SA as the heuristic in demonstrating SSS. We defend the
choice in this preliminary work by noting that the SSS algorithm is
\emph{agnostic} to the particular heuristic chosen, but that using an
equilibrating process like SA is an informative \emph{diagnostic} of
how well SSS is performing. Recall that when using SA, each SCP run
will converge to the conditional Boltzmann probability of the free
variables given the constraining set. Consequently, it should be
possible to show the degree of convergence of the final SSS
distribution to the Boltzmann target. We make the cautionary reminder
however, that strictly speaking, this does \emph{not} show convergence
of a conventional (unconstrained) SA run, from which statistics are
gathered only to build the initial portion of the state space tree.
The overall SSS distribution generally depends strongly on the
simulation parameters of each call in the sequence used to construct
the tree. In particular, if SA is run with the \emph{same} simulation
length (number of sweeps) for each clamping condition, as was done
here for simplicity, the final distribution will tend to be
\emph{closer} to the Boltzmann distribution than the one arising from
a plain SA run. The reason is quite simple and was touched upon in
Section \ref{sec:SCP:SeqConstProc}; as the simulated subproblems
decrease in size with increased variable clamping, local equilibration
on the subproblems becomes more rapid. Nonetheless, convergence to the
target $\pi$ when using SA within SSS \emph{does} (asymptotically)
occur, justifying SA as a natural preliminary benchmark case for our
algorithm. It is reasonable to wonder if, for example, the sources of
statistical error arising from constructing models to the heuristic
distribution are large enough to severely increase the importance
sampling variance. Hence, by demonstrating correct performance under
SA, one can be more confident that when using another heuristic,
failure of the algorithm is a signature of an inappropriate heuristic,
not of errors inherent to the SSS method. When we mention
``convergence of SA'' in the following results, we intend it as a
shorthand for ``convergence of the SSS distribution under SA'' with
the preceding caveat in mind.

An exploration of the relative performance of the vast set of possible
heuristics is a substantial undertaking and will be dealt with in
future work.

As the Boltzmann distribution is sought by the SA Markov chains, it is
natural to compare when possible, $\log \qhat(\yv)$, the log
probability of each state $\yv$ returned by SSS, with that of
$\log \pi(\yv) = -E(\yv) - \log Z_\pi$. Of course, $\log Z_\pi$ is not
always available, but in such cases one can heuristically assess
Markov chain convergence by using an estimate of $Z_\pi$. The
\emph{scatterplot} of the samples' energies against their log
probabilities yields visually useful information on the extent to
which the target distribution has been reached. Ideally, $\log \qhat$
will be a strict linear function of $E$, with slope $-1$ (or $-\beta$
if the energy is scaled independently) and an intercept of
$-\log Z_\pi$. We call this relation the \emph{Boltzmann line}. The
fidelity of the SSS samples' $E$ versus $\log \qhat$ scatterplot to
this line determines the accuracy of an importance sampling estimator
based $\qhat$. For example excessive variance about the relation
implies that samples with very similar energies relative to the
temperature, which ought to occur with nearly identical probability,
will contribute quite differently to the estimator compensate for
their relative difference in actual probability. In essence, this
reduces the \emph{effective} number of samples and hence the
estimator's reliability.

\subsection{Methodological Details}
\label{sec:Experiments:Methodological}

We implemented SSS following the algorithm description in Section
\ref{sec:SSS:algPseudoCode}. Both count-based and Rao-Blackwellised
estimation were allowed for. For all simulations, the SCP used
$N=2000$ samples from the heuristic to construct the dynamic state
space tree; the posterior KL loss was bounded at a threshold value of
$\theta=0.05$ nats. No Bayesian updating was performed; the tree size
was bounded at a maximum of $10^5$ nodes by prioritized retraction of
low-probability nodes. The branching rule employed for tree
construction was to select a random unassigned variable from among
those that neighbour assigned variables (or a totally random variable
if none have been assigned.)

The \emph{independent} (\instancename{IsingIndep}),
\emph{one-dimensional} (\instancename{Ising1D}), and
\emph{full-connected} (\instancename{IsingSK}) models all had $m=100$
variables; the \emph{three-dimensional} (\instancename{Ising3D})
problem was of size $m=6\times 6\times 6 = 216$ nodes with periodic
boundary conditions. The systems are all of relatively modest size,
but this was necessitated by the lengthy simulation times when given a
serial implementation such as ours. An implementation exploiting the
trivially parallel nature of the population simulation, as well as the
parallelism inherent to the topology of problems such as
\instancename{Ising3D} as described in Section
\ref{sec:Appendix:partitioning}, will yield enormous speedups. The
results we present nonetheless show the considerable power of the
method: using a relatively simple idea, we can resolve the
probabilities of states generated by a generic stochastic process with
fairly high precision, despite these states occurring with extremely
small probability.

Problem \instancename{IsingIndep} was generated by sampling local
fields $h$ from a unit variance, zero mean Gaussian:
$h_i \sim N(0,1) $. All other problems had zero local field. For
\instancename{Ising1D} and \instancename{Ising3D},
$J_{ij} \sim N(0,1) $ for edges $\{i,j\}$ in the model, while with
\instancename{IsingSK}, $J_{ij} \sim N(0,1/\sqrt m) $ for all
$\{i,j\}$. The SA simulation parameters were chosen in a
problem-dependent manner as described in the following section.

\begin{figure}[H]
  \centering
  \begin{subfigure}[b]{0.5\textwidth}
    \includegraphics[width=\textwidth]{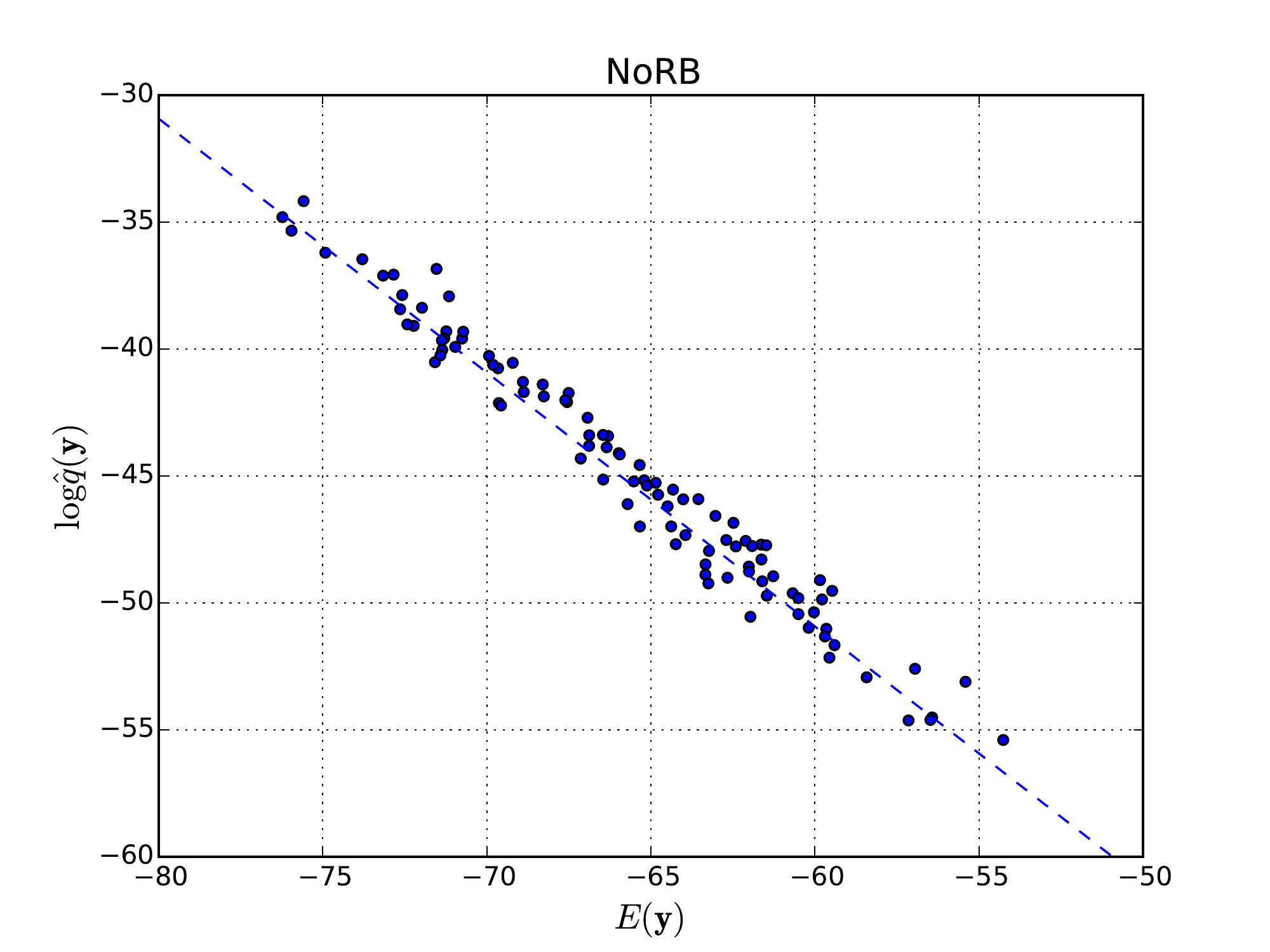}
    \caption{Count-based estimator}
    \label{fig:zeroJIsing_ElogQ_NoRB}
  \end{subfigure}
  \hspace{-5mm}
  \begin{subfigure}[b]{0.5\textwidth}
    \includegraphics[width=\textwidth]{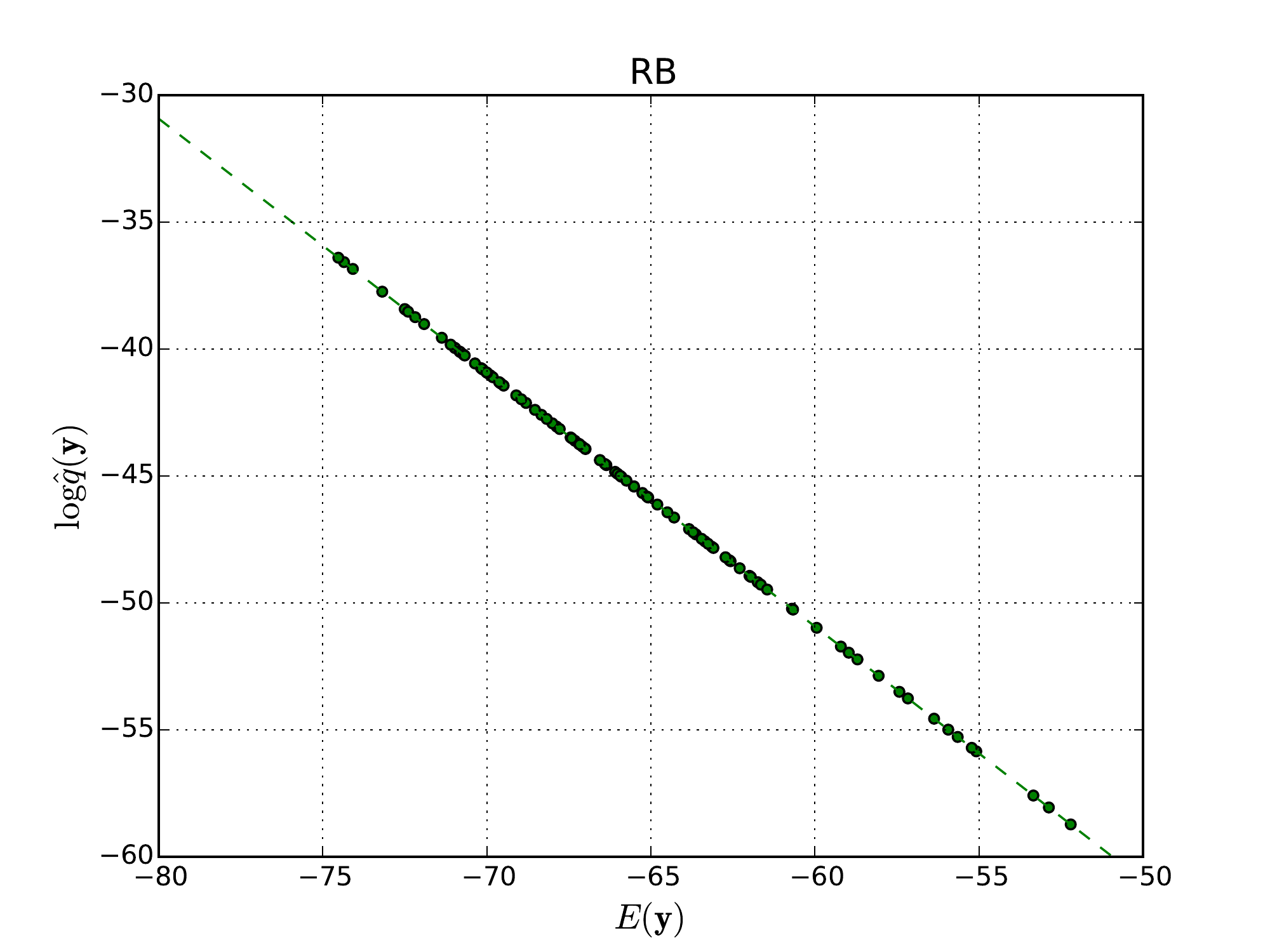}
    \caption{Rao-Blackwellised Estimator}
    \label{fig:zeroJIsing_ElogQ_RB}
  \end{subfigure}
  \caption{Scatterplot of configuration energies against log
    probabilities for $N_{draws}=100$ samples generated using SSS as
    described in the text on the independent-spin Ising model
    (\instancename{IsingIndep}). The ``heuristic'' in this case was an exact sampler
    from the target distribution.  The dotted line, termed the
    \emph{Boltzmann line} in this text, shows where the points would
    ideally lie. Figure \ref{fig:zeroJIsing_ElogQ_NoRB} shows the
    results using a count-based robust Bayesian estimator, while
    \ref{fig:zeroJIsing_ElogQ_RB} uses Rao-Blackwellisation, which for
    this case, completely eliminates the sampling noise as
    expected. However even for the results using the count-based
    estimator, the importance weight variance with respect to the
    exact true distribution is a fairly modest value of
    $\approx 0.927$. This is noteworthy considering the minuscule
    probabilities of the events themselves, which were evaluated using
    a relatively tiny number of samples from the distribution. Direct
    resolution of these events, whose probabilities are down to
    $\sim e^{-55}$ for this ensemble, is infeasible.}
  \label{fig:zeroJIsing_ElogQ}
\end{figure}

\subsection{Results}
\label{sec:Experiments:Results}

We first consider the \instancename{IsingIndep} case. This is a very simple problem
from which sampling exactly merely requires simulating $m$ independent
Bernoulli variables. Nonetheless, SSS does not make use of this
simplifying structure, and hence comparison with the known, exact
values of the samples' probabilities serves as a good initial
illustration of the reliability of the algorithm with a reasonable set
of parameters, as well as of the strength of Rao-Blackwellisation. The
inverse temperature for this system was taken as $\beta=1$, and exact
samples were generated to construct the state space tree. The results,
shown in Figure \ref{fig:zeroJIsing_ElogQ}, show the scatterplot of
the energies of $N_{draws} = 100$ samples queried from the state space
tree using count-based and Rao-Blackwellised estimators; the dotted
line with slope of $-\beta$ and intercept of exactly-calculated
$\log Z_\pi$ in the two subfigures show where exact samples taken from
the Boltzmann distribution should lie. As expected,
Rao-Blackwellisation eliminates statistical variance, but it should be
noted that despite the sampling noise, \emph{merely using the
  configuration statistics} can yield quite reliable estimates of the
log probabilities. The states appearing in Figure
\ref{fig:zeroJIsing_ElogQ_NoRB} occur with probabilities on the order
of $e^{-50}$; directly approximating their values with, for example a
simple histogram is simply out of the question. The importance weight
variance\footnote{not the variance of the log of the weights} for the
count-based estimator is $\approx 0.927$, quite a modest value
considering the vanishingly small probabilities being dealt with. This
variance can be reduced by increasing the population size $N$ and
decreasing the posterior KL threshold $\theta$ used in constructing
the state space tree. In general, the Rao-Blackwellised estimator will
not eliminate the variance as it does in this simple case, but as it
is not costly to compute and usually yields substantial
variance reduction, it is used in all subsequent experiments.

\begin{figure}[H]
  \centering
  \begin{subfigure}[b]{0.5\textwidth}
    \includegraphics[width=\textwidth]{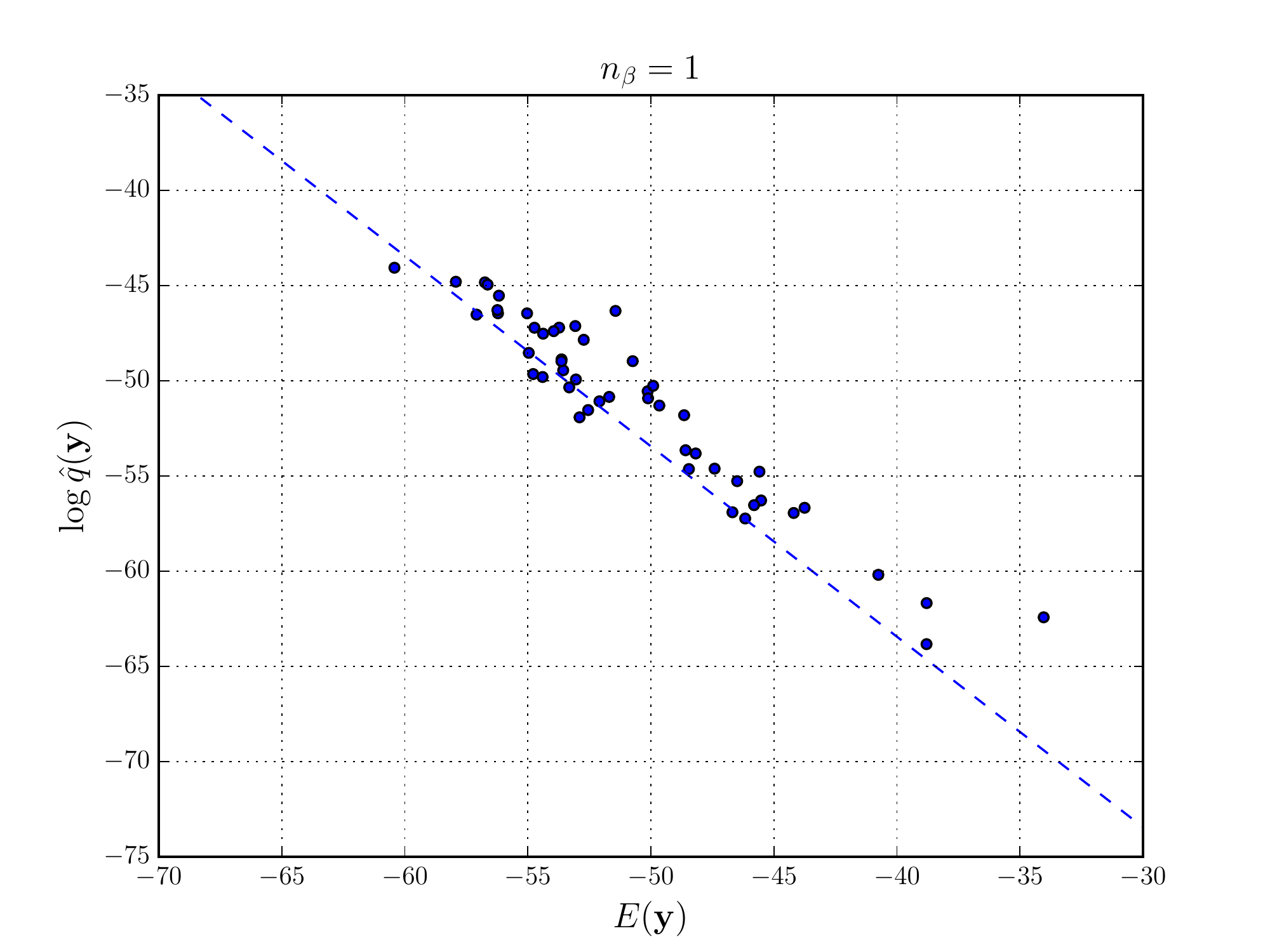}
    \caption{$n_{\beta}=1$}
    \label{fig:chainIsing_ElogQ_nBetas_1}
  \end{subfigure}
  \hspace{-5mm}
  \begin{subfigure}[b]{0.5\textwidth}
    \includegraphics[width=\textwidth]{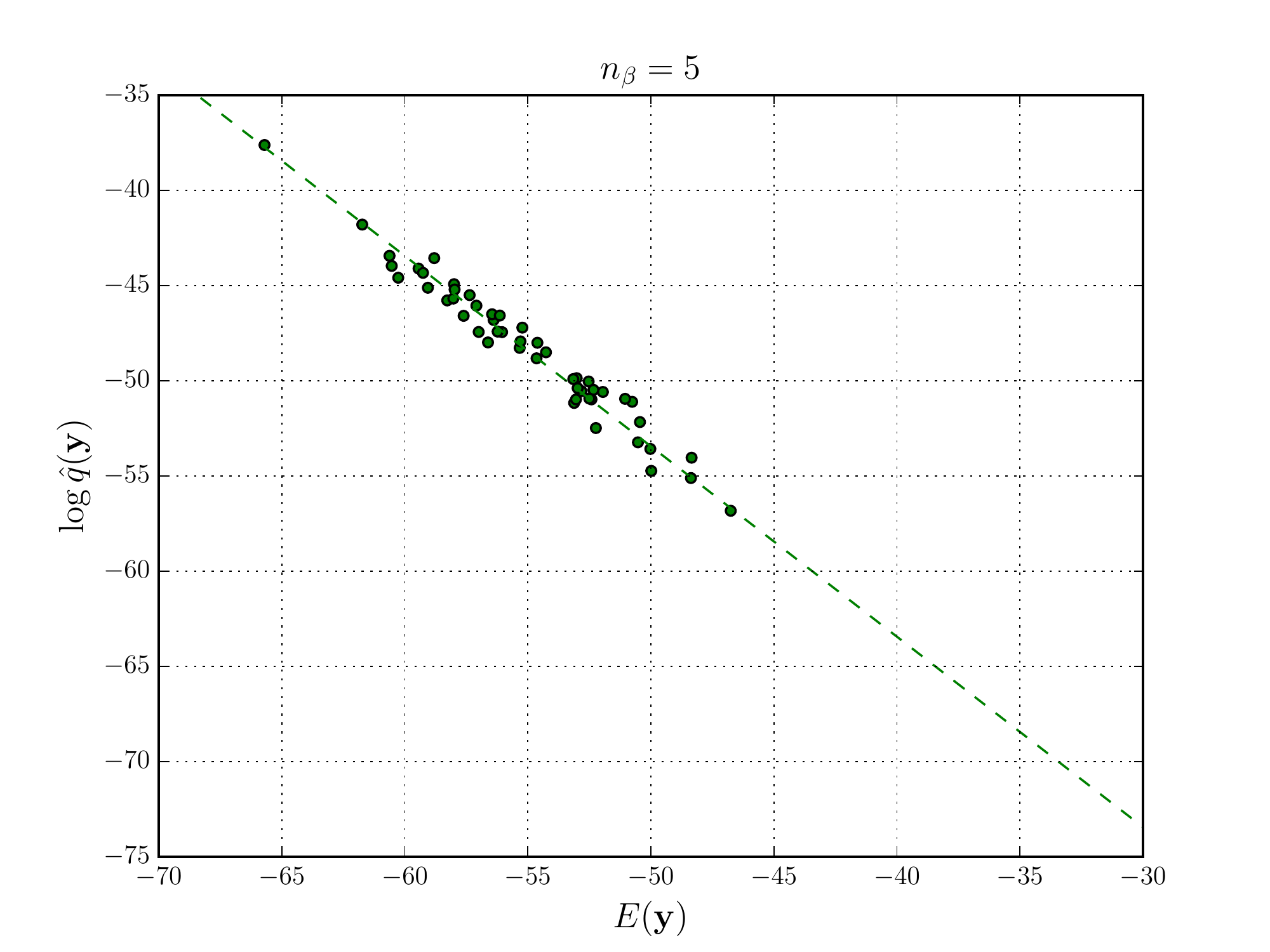}
    \caption{$n_{\beta}=5$}
    \label{fig:chainIsing_ElogQ_nBetas_5}
  \end{subfigure}
  \vspace{1mm}
    \begin{subfigure}[b]{0.5\textwidth}
    \includegraphics[width=\textwidth]{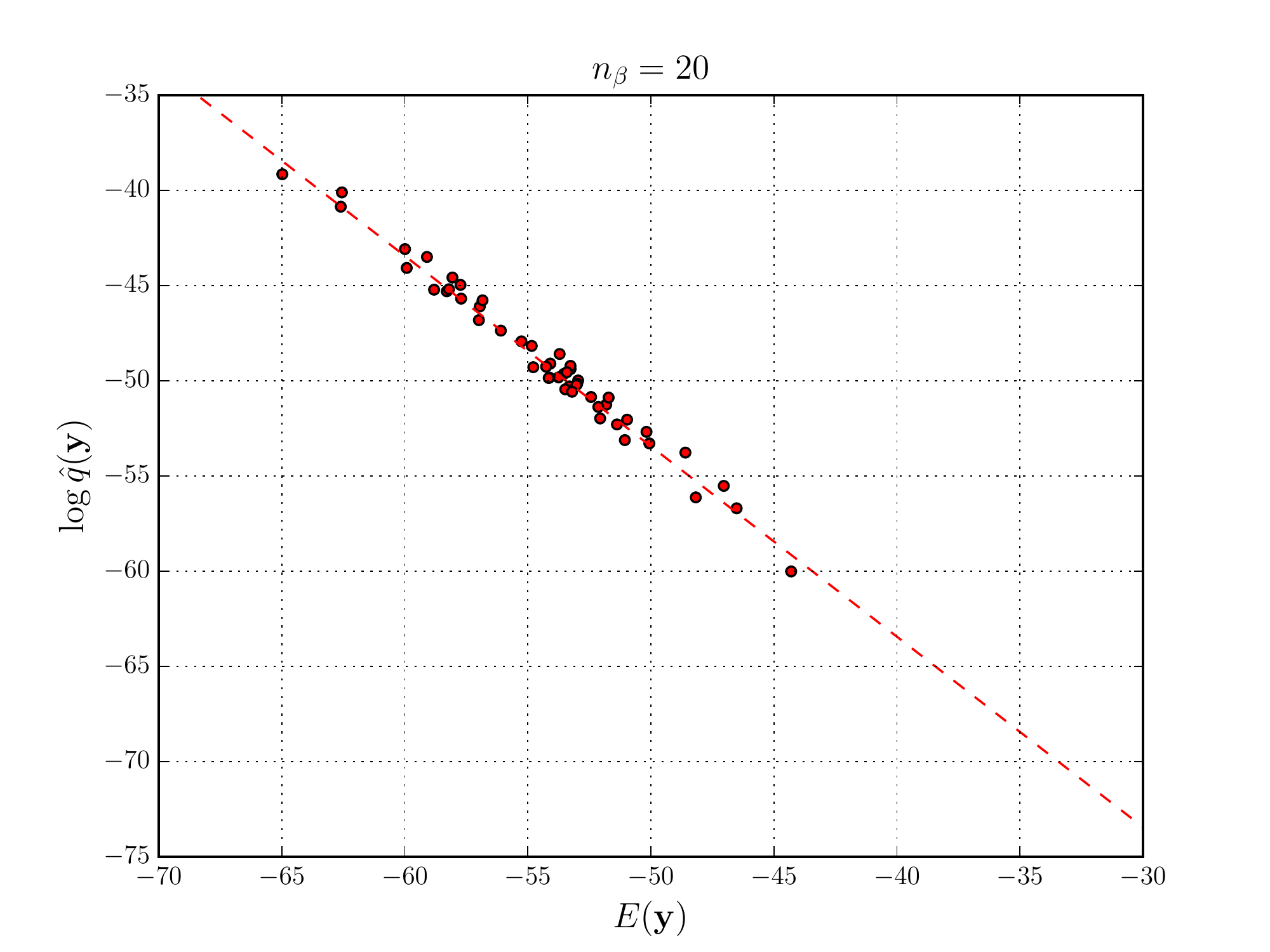}
    \caption{$n_{\beta}=20$}
    \label{fig:chainIsing_ElogQ_nBetas_20}
  \end{subfigure}
  \hspace{-5mm}
  \begin{subfigure}[b]{0.5\textwidth}
    \includegraphics[width=\textwidth]{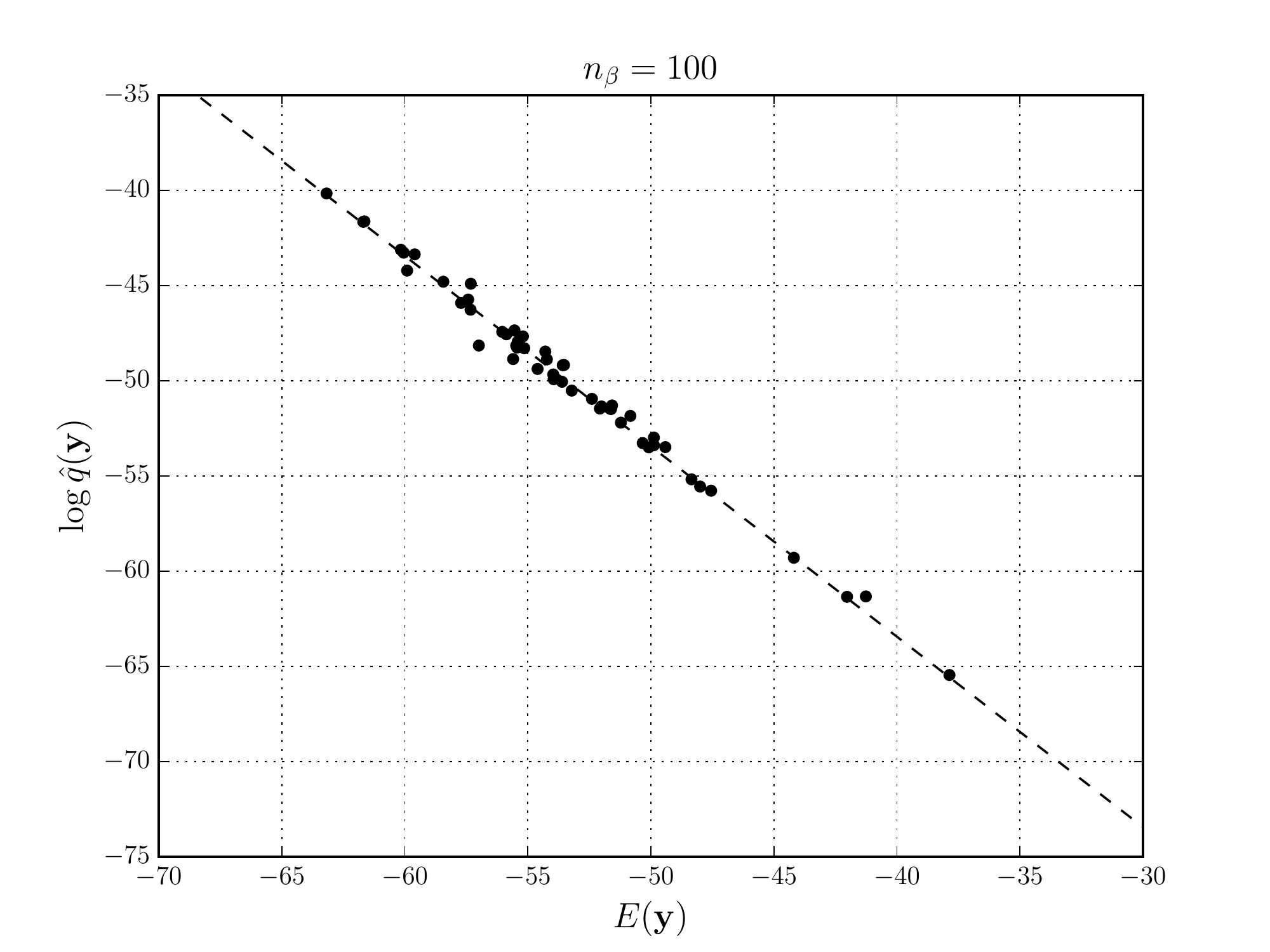}
    \caption{$n_{\beta}=100$}
    \label{fig:chainIsing_ElogQ_nBetas_100}
  \end{subfigure}
  \caption{$E$ versus $\log\hat{q}$ scatterplot of $N_{draws}=50$
    states generated by SSS using simulated annealing (SA) on the one
    dimensional Gaussian $J_{ij}$ Ising model (\instancename{Ising1D}). The number of
    SA sweeps used to anneal $\beta$ from $0.1$ to $1$ is stated below
    the corresponding plot. As expected, as the number of sweeps
    increases, the samples increasingly lie on the Boltzmann line,
    indicating that $\hat{q}$ is converging to $\pi$. See text for
    more details.}
  \label{fig:chainIsing_ElogQ}
\end{figure}

The next set of experiments examines \instancename{Ising1D}. The
target inverse temperature was set at $\beta=1$. Due to its low
treewidth, this model is still tractable in the sense that both exact
sampling from its equilibrium distribution and partition function
computation are straightforward. However to show some interesting
behaviour, we instead run SA on this model for a set of varying
simulation lengths. Interestingly, due to the continuity of $J_{ij}$
leading to so-called ``kinks'', there exist many local minima for this
problem\cite{li1981structure}. The SSS methodology allows us to
monitor the nonequilibrium statistics of the SA samples. We performed
SSS using a set of $n_\beta \in \{ 1, 5, 20, 100\}$ simulated
annealing sweeps with a linearly-increasing inverse temperature
schedule beginning with $\beta=0.1$ (the case of $n_\beta=1$ was
simply a single simulation sweep at the final temperature). SSS was
called $N_{draws} = 50$ times per simulation length. The results for
these four SA regimes are shown in Figure \ref{fig:chainIsing_ElogQ}.
As was the case for \instancename{IsingIndep}, the exact partition
function was computed via dynamic programming (or the \emph{transfer
  matrix} method), and the Boltzmann line is shown in each plot. We
see clearly in Figure \ref{fig:chainIsing_ElogQ_nBetas_1} that a
single Monte Carlo sweep at the target temperature is enough to bring
the population quite close to the Boltzmann distribution, but not
quite enough to reach it. The population statistics are quite varied
at this point, as we would expect from such a short simulation run.
Five and $20$ sweeps, as shown in \ref{fig:chainIsing_ElogQ_nBetas_5},
move the set further towards the Boltzmann line, while after
$n_\beta=100$ sweeps (Figure \ref{fig:chainIsing_ElogQ_nBetas_100})
many of the samples start to lie right on the line.

\begin{figure}[H]
  \centering
  \begin{subfigure}[b]{0.5\textwidth}
    \includegraphics[width=\textwidth]{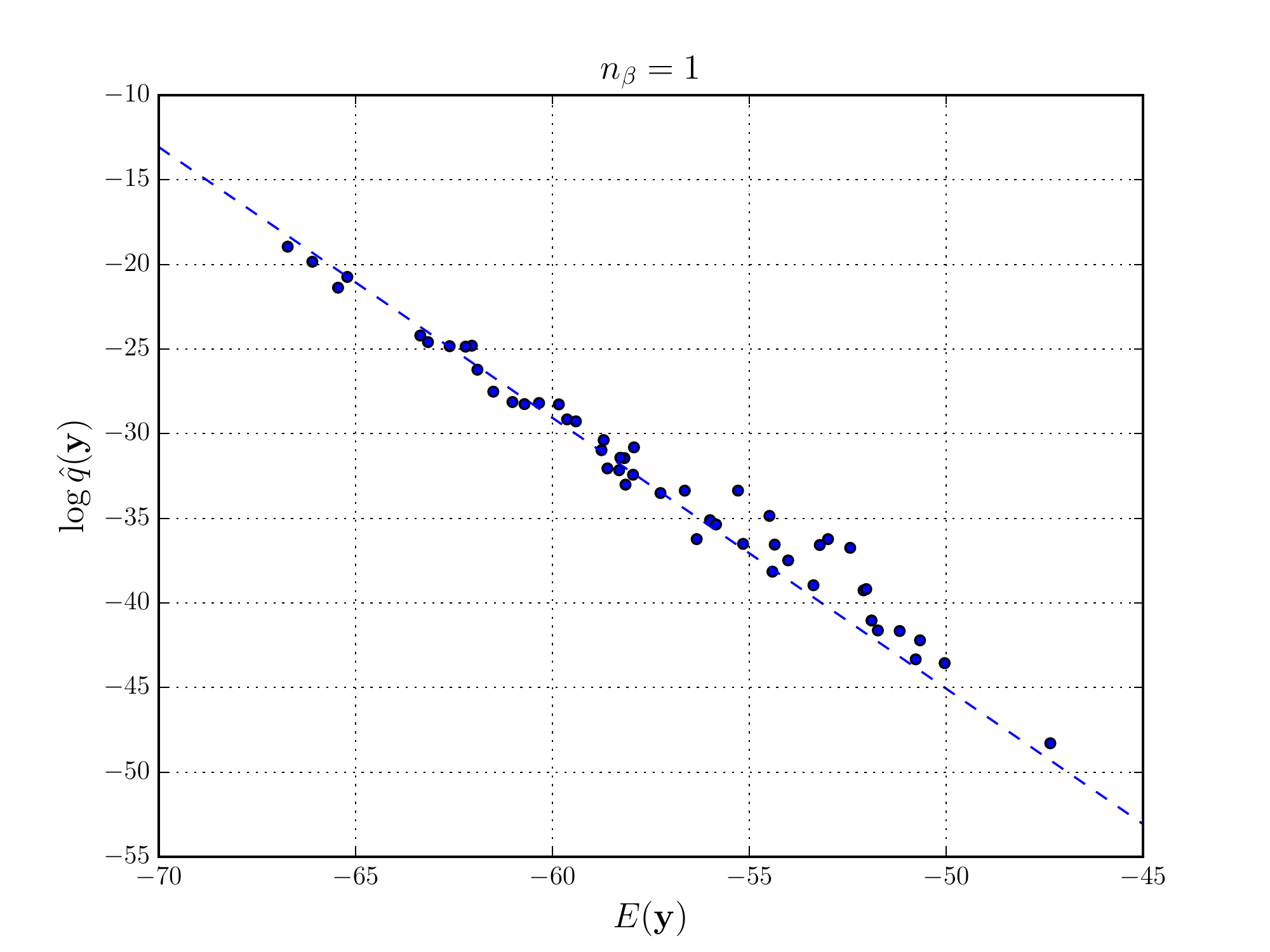}
    \caption{$n_{\beta}=1$}
    \label{fig:SKIsing_ElogQ_nBetas_1}
  \end{subfigure}
  \hspace{-5mm}
  \begin{subfigure}[b]{0.5\textwidth}
    \includegraphics[width=\textwidth]{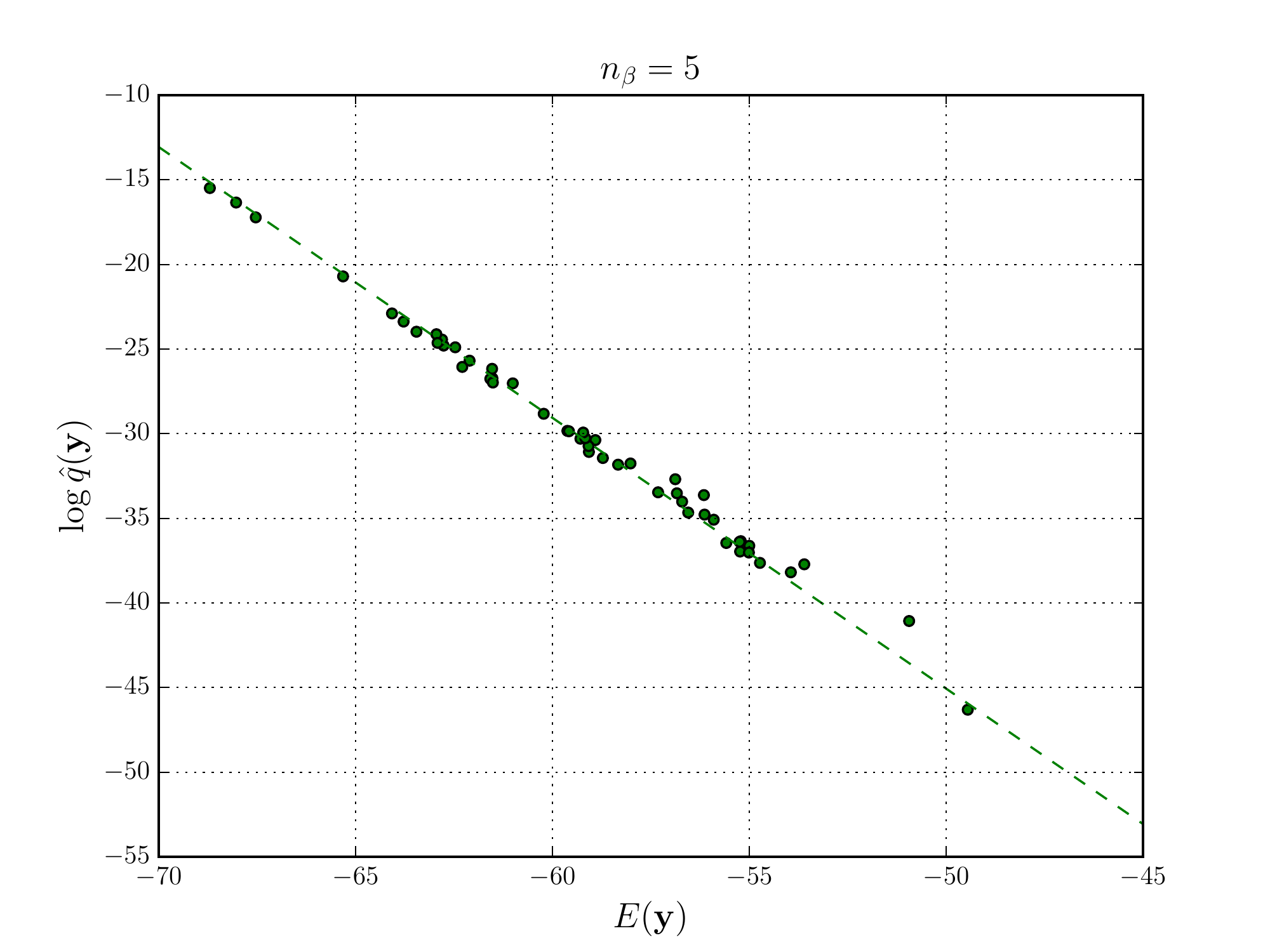}
    \caption{$n_{\beta}=5$}
    \label{fig:SKIsing_ElogQ_nBetas_5}
  \end{subfigure}
  \vspace{1mm}
    \begin{subfigure}[b]{0.5\textwidth}
    \includegraphics[width=\textwidth]{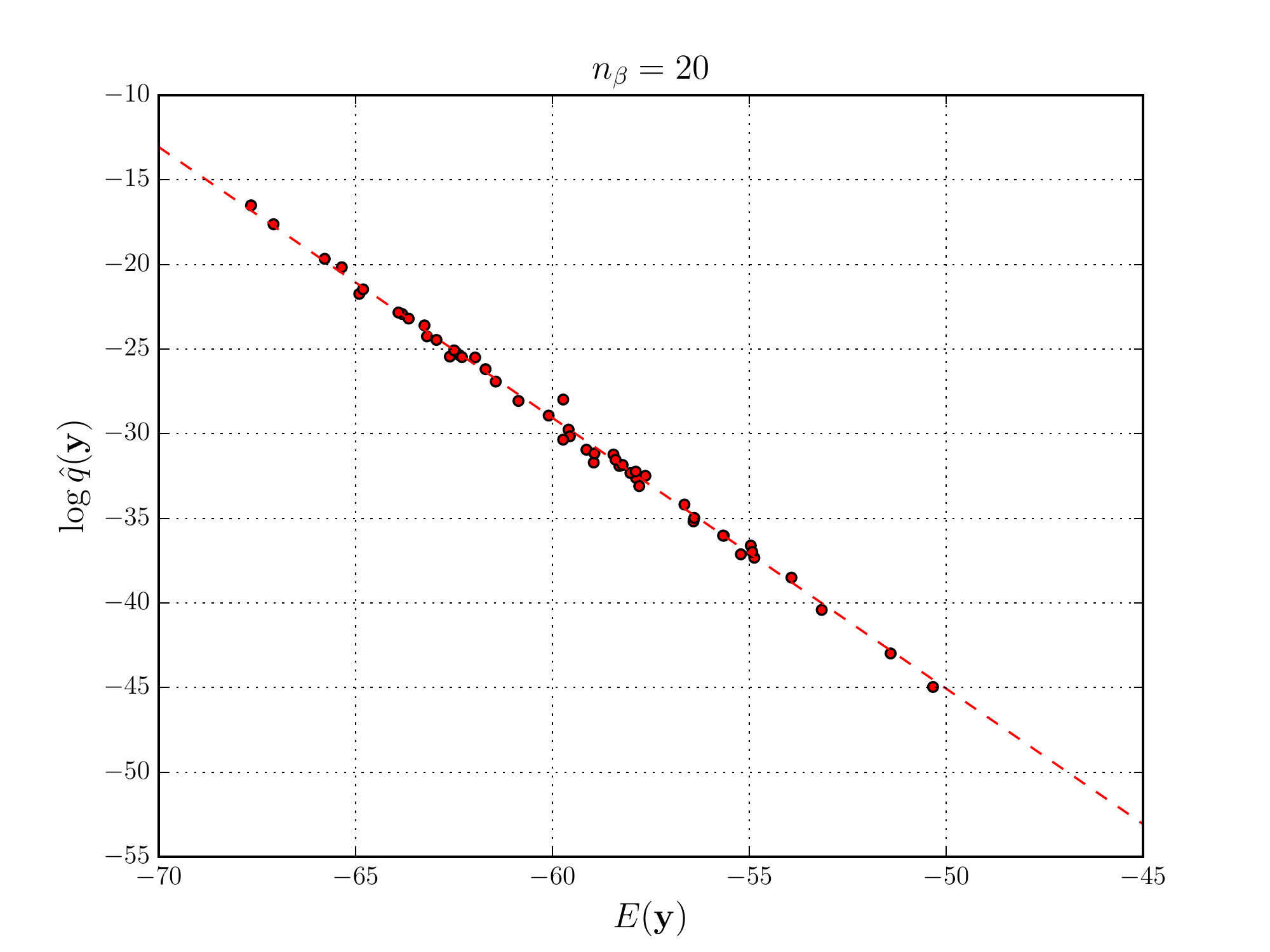}
    \caption{$n_{\beta}=20$}
    \label{fig:SKIsing_ElogQ_nBetas_20}
  \end{subfigure}
  \hspace{-5mm}
  \begin{subfigure}[b]{0.5\textwidth}
    \includegraphics[width=\textwidth]{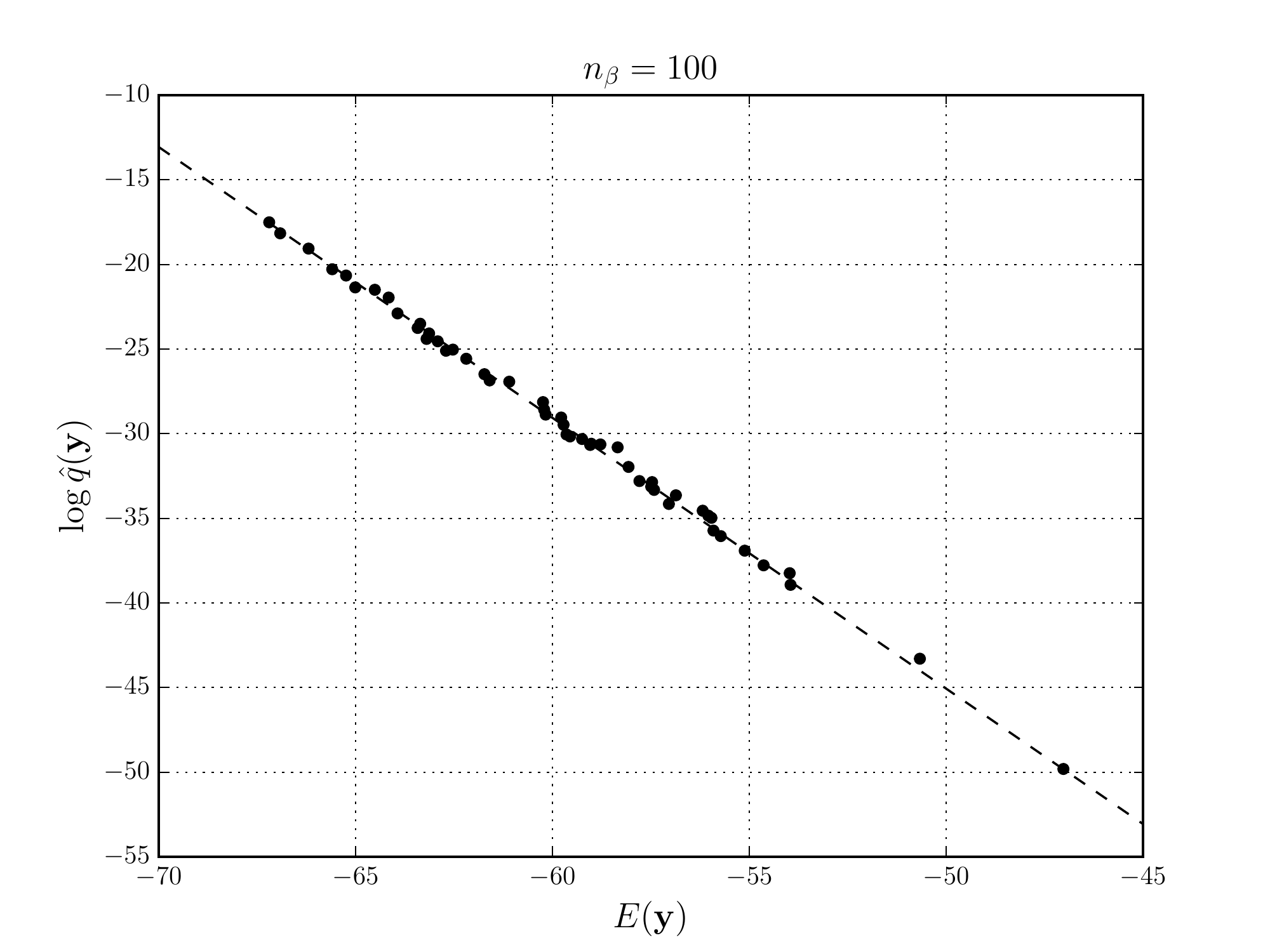}
    \caption{$n_{\beta}=100$}
    \label{fig:SKIsing_ElogQ_nBetas_100}
  \end{subfigure}
  \caption{$E$ versus $\log\hat{q}$ scatterplot of $N_{draws}=50$
    states generated by SSS using simulated annealing (SA) on the
    \emph{fully-connected} Gaussian $J_{ij}$ Ising model
    (\instancename{IsingSK}). For this instance, SA was run for each of a set of
    $n_\beta$ batches of $6$ sweeps, during which $\beta$ was linearly
    increased from $0.1$ to $1.6$; $n_\beta$ is indicated below each
    graph.  We see the same expected pattern of convergence of
    $\hat{q}$ to the distribution with increasing $n_\beta$. See text
    for details.}
  \label{fig:SKIsing_ElogQ}
\end{figure}

We proceed next to two examples for which it is no longer possible to
compare against analytical results. The first of these is the
$100$-variable fully-connected problem \instancename{IsingSK}; $\beta$
was set to $1.6$, corresponding to a temperature below the spin glass
transition temperature \cite{binder1986spin}. As for the case of
\instancename{Ising1D}, a set of increasing SA lengths were compared
against each other. We considered linearly increasing beta from an
initial value of $\beta=0.1$ to $\beta=1.6$ over the course of
$n_\beta \in \{ 1, 5, 20, 100\}$ steps, but as this problem is more
computationally-demanding than the previous two considered, each step
consisted of $6$ sweeps. Figure \ref{fig:SKIsing_ElogQ} shows the
outcomes. Plotting the Boltzmann line required an estimate of
$\log Z_\pi$, which was obtained with the importance sampling
estimator (\ref{eq:NDMCISPartitionFuncEst}) using the SSS samples
under the $n_\beta=100$ regime; the target value of $\beta=1.6$
defined the slope. We stress that the line was \emph{not} obtained by
linear regression, lessening the chance that a seemingly good relation
is an overfitting artifact. The plots show the same pattern of
convergence of the samples to $\pi$. Following a single set of sweeps
at the target $\beta$, the system has relaxed to the Boltzmann line,
but substantial variance still exists. The fit improves as the number
of SA steps increases, until $n_\beta=100$ appears to yield an
essentially converged final distribution. The reader is once again
asked to note that SSS is returning individual configurations with
quite accurate probability estimates across a range of many orders of
magnitude (between $\sim e^{-15}$ and $\sim e^{-50}$ for this case.)

\begin{figure}[H]
  \centering
  \begin{subfigure}[b]{0.5\textwidth}
    \includegraphics[width=\textwidth]{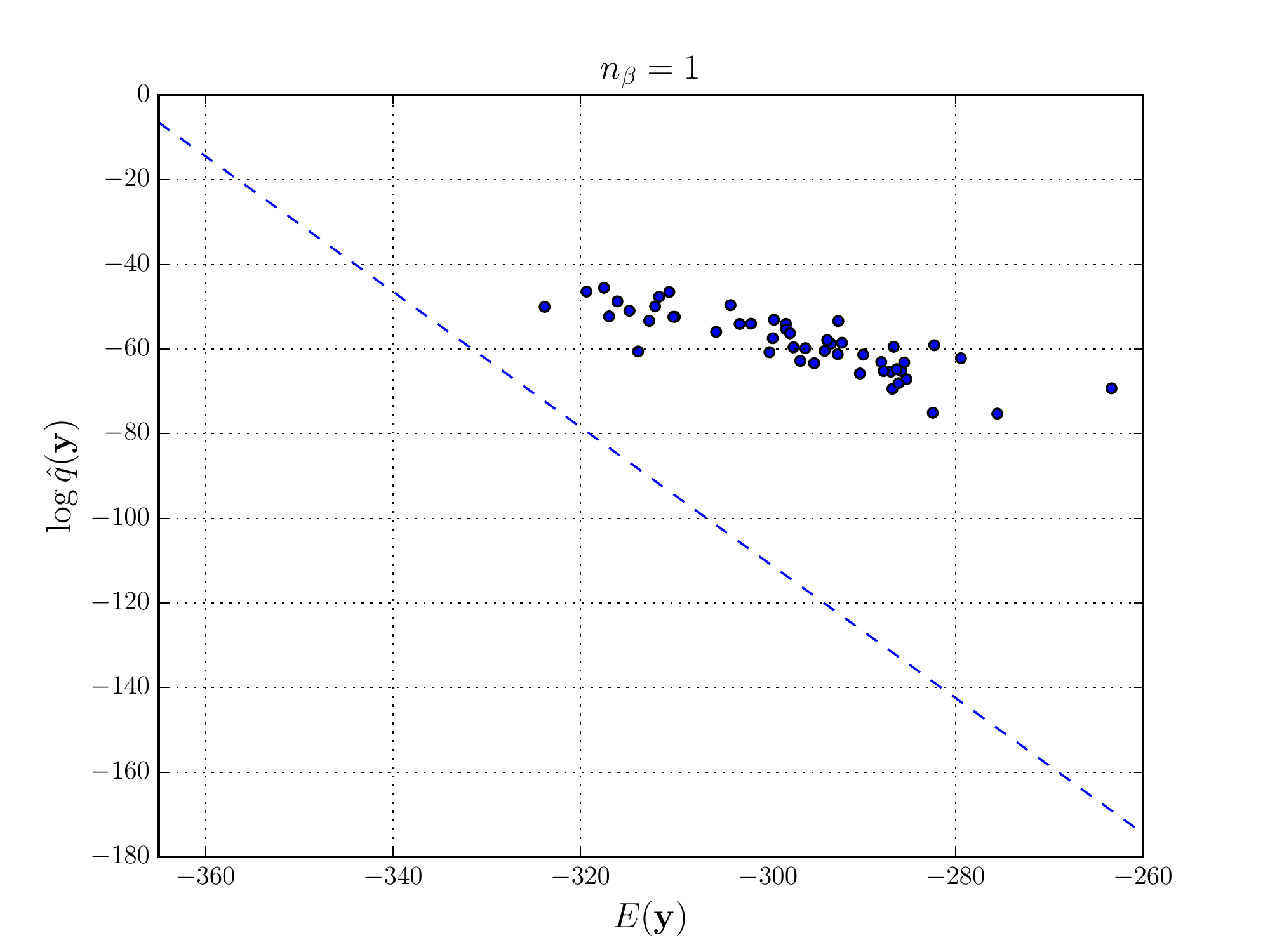}
    \caption{$n_{\beta}=1$}
    \label{fig:EA3DIsing_ElogQ_nBetas_1}
  \end{subfigure}
  \hspace{-5mm}
  \begin{subfigure}[b]{0.5\textwidth}
    \includegraphics[width=\textwidth]{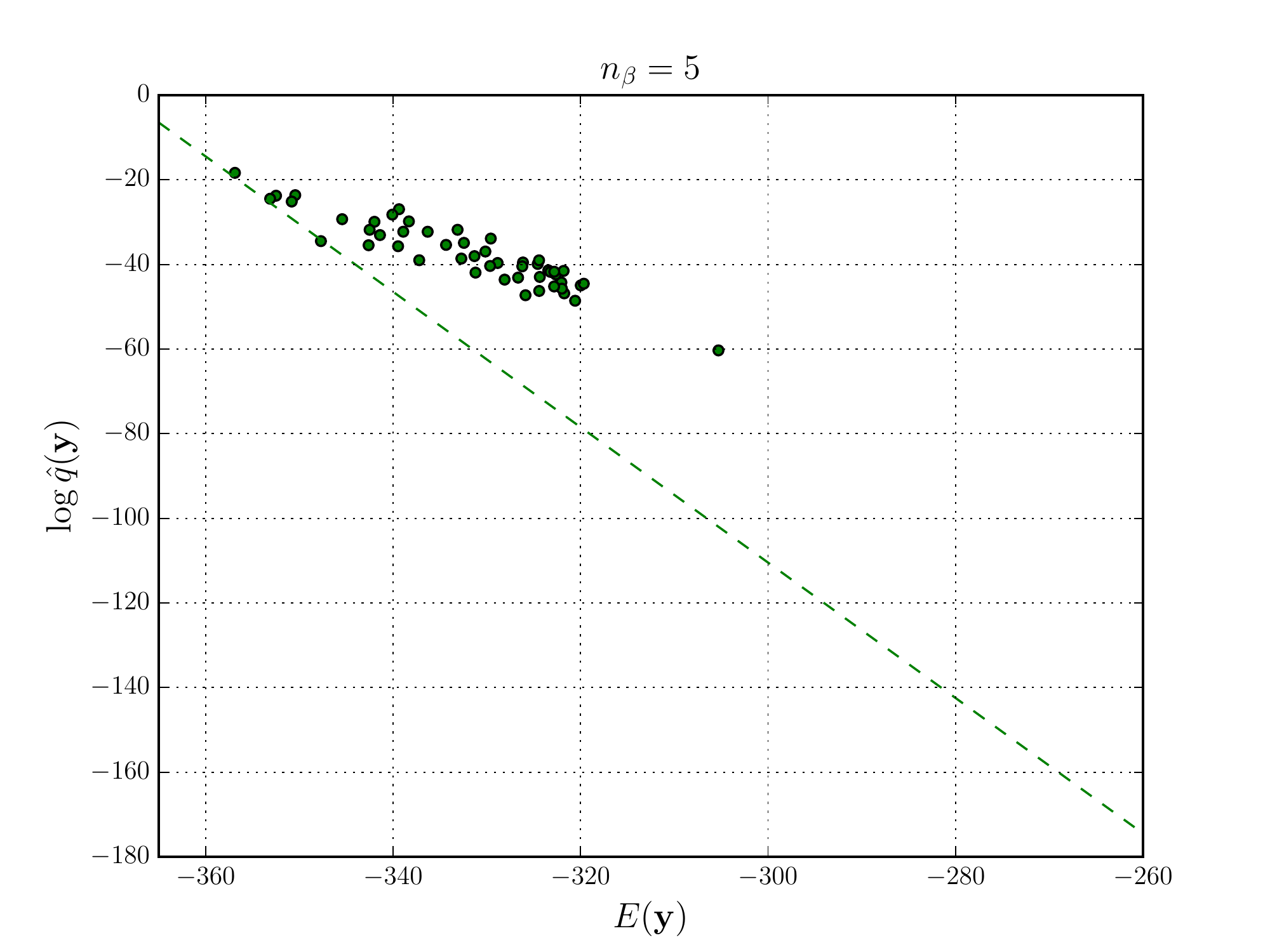}
    \caption{$n_{\beta}=5$}
    \label{fig:EA3DIsing_ElogQ_nBetas_5}
  \end{subfigure}
  \vspace{1mm}
    \begin{subfigure}[b]{0.5\textwidth}
    \includegraphics[width=\textwidth]{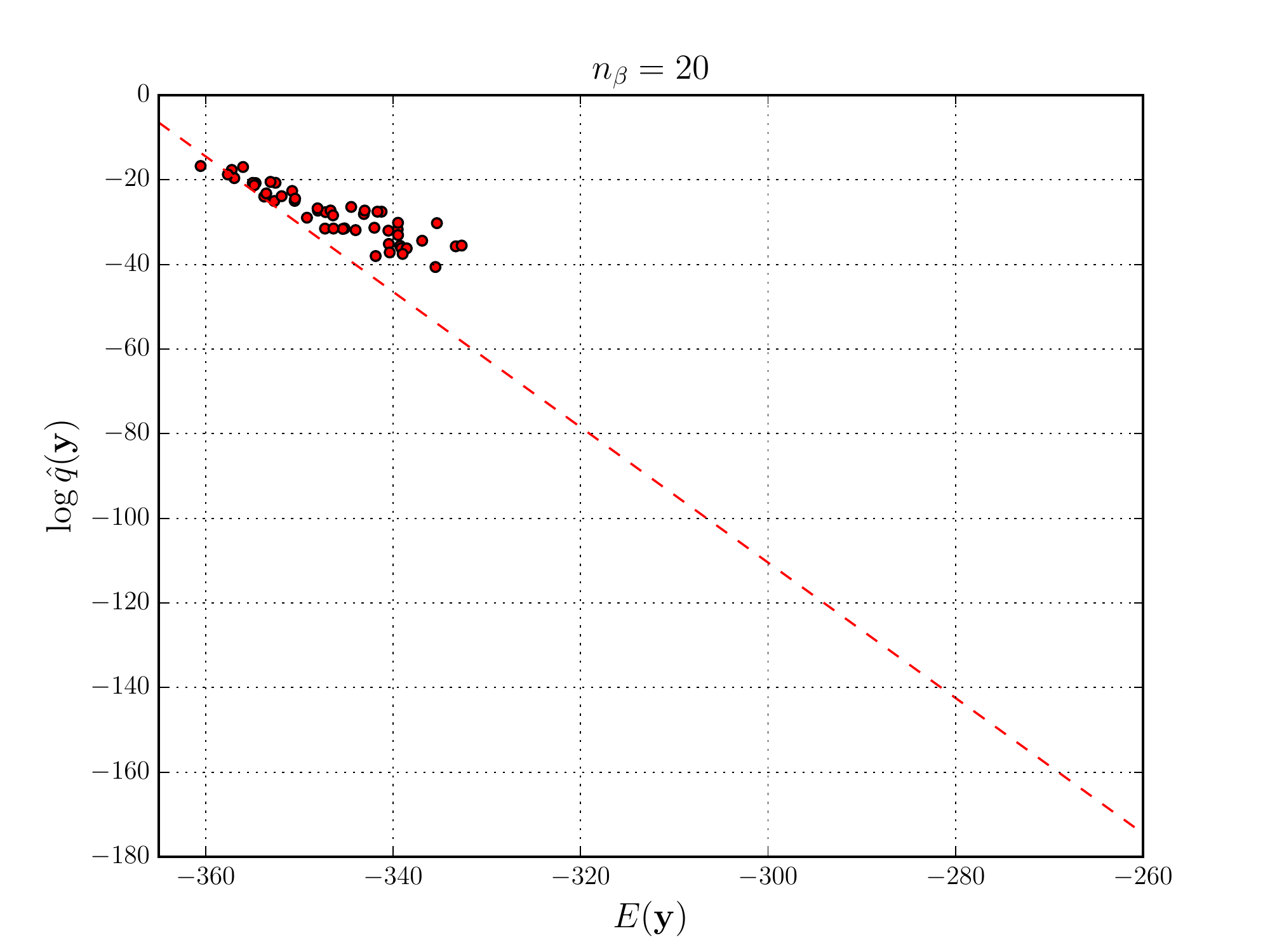}
    \caption{$n_{\beta}=20$}
    \label{fig:EA3DIsing_ElogQ_nBetas_20}
  \end{subfigure}
  \hspace{-5mm}
  \begin{subfigure}[b]{0.5\textwidth}
    \includegraphics[width=\textwidth]{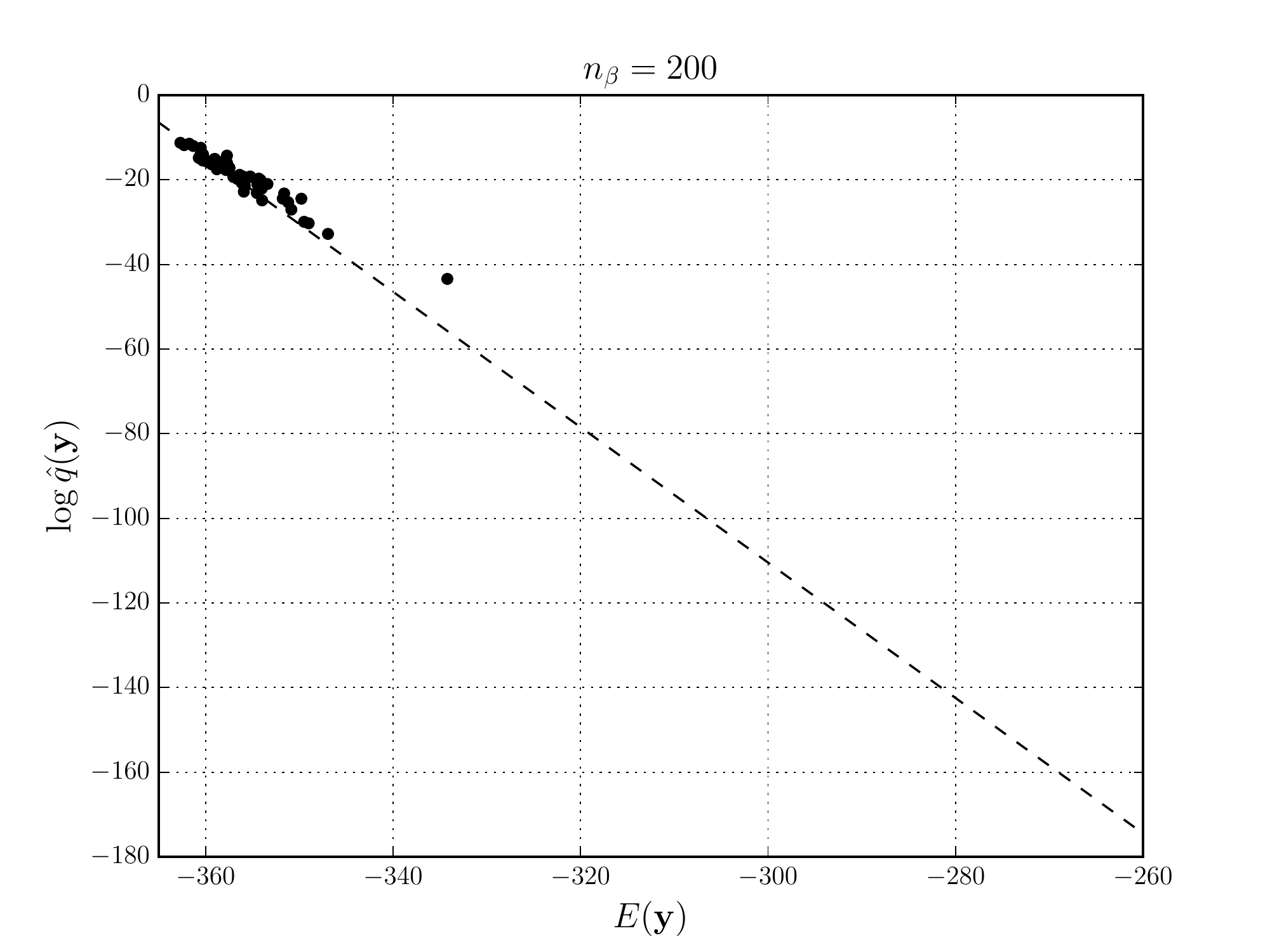}
    \caption{$n_{\beta}=200$}
    \label{fig:EA3DIsing_ElogQ_nBetas_200}
  \end{subfigure}
  \caption{$E$ versus $\log\hat{q}$ scatterplot of $N_{draws}=50$
    states generated by SSS using simulated annealing (SA) on the
    \emph{three dimensional} Gaussian $J_{ij}$ Ising model
    (\instancename{Ising3D}). As for \instancename{IsingSK}, SA was run for each of a set of
    $n_\beta$ batches of $6$ sweeps, with linear $\beta$ increase from
    $0.1$ to $1.6$. Due to the relative difficulty of the problem, the
    longest run used $n_\beta=200$. This problem exhibits hallmarks of
    slow SA convergence; the short runs display severe deviation from
    the Boltzmann line, which, unsurprisingly, becomes milder as the
    run is lengthened. Even the longest cannot be said to have fully
    converged, but it this situation one may expect importance
    sampling to yield satisfactory estimation.}
  \label{fig:EA3DIsing_ElogQ}
\end{figure}

Our final problem, the three dimensional model \instancename{Ising3D},
shows most dramatically how the sample statistics can depend on the
simulation parameters and the relative difficulty SA begins to have.
The set of SA steps was in this case $n_\beta \in \{ 1, 5, 20, 200\}$,
with as before, each step consisting of $6$ sweeps. Annealing was
performed by increasing $\beta$ linearly from $0.1$ to $1.6$, which
again corresponds to being below this model's glass transition
temperature \cite{katzgraber2006universality}. The longest simulation
used $n_\beta=200$ instead of $n_\beta=100$ as was done for the
previous instances because this problem was more challenging, and
equilibration seen to be relatively slow. The Boltzmann line intercept
($-\log Z_\pi$) was approximated using the longest simulation as it
was for \instancename{IsingSK}. The results are shown in Figure
\ref{fig:EA3DIsing_ElogQ}. After $n_\beta=1$ batches of SA sweeps, the
ensemble exhibits considerable deviation from the Boltzmann line.
While convergence to a linear relation appears to have commenced, the
slope of the relation seems to suggest a higher than correct
temperature and its intercept is quite at odds with the target value.
Furthermore, a substantial variance exists. Over the course of
$n_\beta=5,20$ and $200$ SA sweep batches, the ensemble begins
migrating to the Boltzmann line. Note that even for the $n_\beta=200$,
i.e. $1200$ SA sweeps, considerable discrepancy still exists, and one
would be cautious in pronouncing the system to have equilibrated.
Nonetheless, importance sampling can be used to perform estimation at
this point.

We close this section by cautioning that convergence of the $E$ versus
$\log \qhat$ relation to the Boltzmann line constructed with an
\emph{approximation} to $\log Z_\pi$, as would be required in most
problems of interest, is no longer guaranteed to show equilibration,
but merely \emph{quasi-equilibration,} or equilibration on a portion
of the state space. The reason, of course, is that it is always
possible that some statistically-vital portion of the state space was
missed by the heuristic, resulting in $\log Z_\pi$ being
underestimated. Nonetheless, convergence to a linear relation with the
correct inverse temperature slope implies that a Boltzmann
distribution among states of high probability under the heuristic has
been reached.

\section{Discussion}
\label{sec:Discussion}

We have seen that SSS can yield a powerful query mechanism into an
essentially unknown stochastic process that allows construction of an
accurate proposal distribution derived from the process. Here, we
discuss a few relevant points.

First, we remark that the various aspects of the methodology interact
with each other to determine its overall behaviour and performance. An
obvious example of such a dependence is between the heuristic
algorithm parameters, which, in turn dictate the final distribution,
and the SSS tree construction. For SA, of course, the simulation
length and annealing schedule predicate the extent to which the
ensemble has converged onto the statistically-relevant parts of the
target state space. A short SA run will produce a diffuse distribution
of high entropy, which may obviously be mismatched to the target. A
further consequence, however, is that a relatively large number of
calls to the SA process will be required to construct the SSS tree.
This is due to the posterior KL loss growing more rapidly during the
construction when the distribution being approximated has higher
entropy. On the other hand, allowing SA to concentrate onto the
relevant parts of the final distribution means that its final entropy
is reduced, necessitating fewer calls to the heuristic. This was
illustrated clearly in Figure \ref{fig:EA3DIsing_ElogQ}; the
population for the longest SA run is seen localize to a set of much
higher probability than that of shorter runs. Consequently, while each
SA run was of course more demanding, fewer SA calls were required for
the longest run. Once again, we emphasize that reliability of the SSS
meta-algorithm is ultimately decided by the appropriateness of the
heuristic to the distribution of interest.

Other important design parameters, discussed in Appendix
\ref{sec:Appendix:tuningSSSParams}, are the values of the tree
construction KL loss threshold $\theta_0$ and the population size $N$.
As one may expect, the population size must generally scale with the
problem dimension provided the problem's entropy does as well. We
argue in the Appendix that $N$ must scale linearly with the number of
variables to maintain an overall reliability at a fixed level. Though
this requirement is parallelisable, and Rao-Blackwellisation can often
dramatically mitigate it, albeit in a problem-dependent manner, it is
consequently prudent whenever possible to reduce the number of
variables under direct simulation by SSS. One possibility to obtain
such a reduction is the use of \emph{low-treewidth} methods. The idea
is illustrated in Figure \ref{fig:lowTreewidth}; displayed is the top
layer of a $14 \times 14 \times 14 $ Ising lattice. The set of nodes
shaded in black, which repeat in each layer of the lattice, represent
the variables on which we perform SSS sampling using a heuristic of
choice. Once these variables are instantiated, a dynamic
programming/transfer matrix algorithm can be used to sample exactly
from the blocks of $4 \times 4 \times 14$ variables. The probability
of the sample would then simply be that returned by SSS over the black
variables multiplied by the exact conditional probability over the
remaining variables computed during the block sampling. The number of
variables treated by SSS has thus been reduced from the full size
($2744$ variables) to $728$ variables. This technique can be used in
conjunction with the parallelisation method discussed in Appendix
\ref{sec:Appendix:partitioning} to yield further speedup.

\def\svgwidth{0.5\columnwidth}
\begin{figure}
  \centering
  \input{./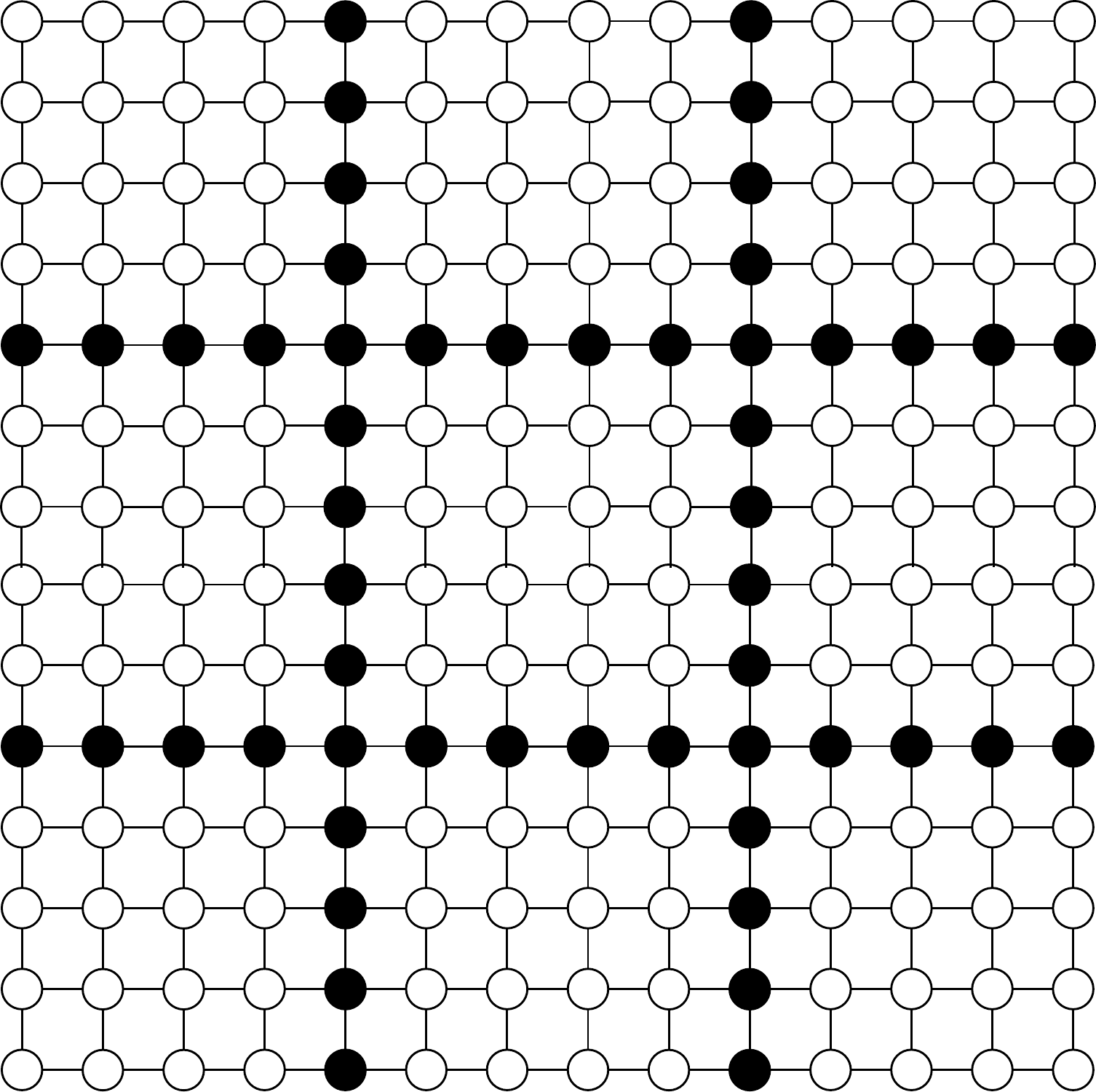_tex}
  
  \caption{Top view of a $14\times 14\times 14$ Ising lattice; the black
    nodes, repeated in each layer of the lattice, indicate the set of variables on which SSS would perform
    sampling by using a chosen heuristic. Once these nodes have been
    fully sampled and instantiated, the remaining blocks of $4 \times 4 \times 14$
    variables, filled in white, can be sampled exactly by exploiting
    their low treewidth. The number of variables under explicit
    consideration by SSS is thus reduced from $m=14^3=2744$ to $m=728$
    variables, namely, the total number black nodes in the $14$ layers.
    Noting furthermore that once two sections of the system have been
    broken into disjoint components, simulation within SSS can proceed
    in parallel, an additional possibility for speedup exists as described
    in Appendix \ref{sec:Appendix:partitioning}.}
\label{fig:lowTreewidth}
\end{figure}

Another observation is that the technique of caching the tree, as
described in Section \ref{sec:SSS}, only yields an advantage when the
target distribution is highly peaked; in the presence of large
randomness, only the root subtree is typically repeatedly visited.
The subtrees at lower levels in the tree are overwhelmingly visited
once for a reasonable overall number of SSS draws.

Despite the possibilities for reduction in number of variables and the
exploitation of parallelism, the conclusion that SSS is quite
computationally costly may be unavoidable. However this cost is the
price paid for using an analytically-unknown or uncharacterized
proposal distribution. A comparison with \emph{population
  annealing} \cite{hukushima2003population,machta2010population}, a
state-of-the-art SMC method for the Ising model may yield some
perspective. This method approximates expectations by weighted
averaging over independent estimators, each derived from a population
of \emph{interacting} SA processes. In population annealing, while the
SA simulation length must increase with system size to adequately
probe the low energy structure of a complex problem, the populations
used to compute the estimators can be kept more or less constant. Our
algorithm, on the other hand, generally requires linearly scaling the
population size and necessitates further recursive calls to the
heuristic as the system size increases. In effect, it requires
polynomially more work than SMC using annealing. Obviously then, if
one were using SA as the heuristic, population annealing is the
preferred method. The advantage, of SSS is, of course, is that a more
powerful heuristic than SA may be used, which in turn offsets the
additional cost. As an extreme, though unrealistic example of this,
suppose that one were provided with an oracle that happened to
generate, unbeknownst to the user, \emph{perfect} uncorrelated samples
from a complex target defined over a large number of variables. In
such a scenario, SSS would yield reliable estimators, including those
of the log partition function, using relatively modest resources. SA
on the other hand, would give exponentially-degrading performance.

We turn briefly to a discussion of the various heuristics that can
conceivably be used with the SSS algorithm. As we have seen, some
natural examples that can be employed to compute equilibrium
expectations are: \begin{itemize} \item
  Stochastic local search methods \item Physically-implemented
  heuristic algorithms \item Evolutionary/genetic algorithms
\end{itemize}

However the list can be expanded to include more abstract and
mathematically-founded routines. We recall that within the field of
combinatoric \emph{optimization}, several algorithms exist that are
known to yield good \emph{approximate} solutions
\cite{hochbaum1996approximation,vazirani2013approximation}. Among these are: 
\begin{itemize}
\item Linear programming with solution rounding
\item Primal-dual algorithms
\item Polynomial-time approximation schemes (when they exist)
\end{itemize}
Furthermore, in some situations such as the \emph{graph matching}
problem, computing a minimum solution to a cost function is
polynomially feasible \cite{papadimitriou1998combinatorial} but
sampling uniformly over \emph{all} such minimizers (or more generally,
over low-cost configurations) is highly nontrivial. Usage of one of
these methods for the sampling problem is tempting, but several are
\emph{deterministic} in their basic forms and so would require a
\emph{randomization} strategy to generate a distribution of possible
solutions. One obvious such strategy is to randomly perturb the
problem parameters prior to each invocation of the heuristic. The
resultant states would hence be generated from an analytically unknown
proposal distribution, which may serve as a reasonable surrogate to
the target distribution at low temperature. As a concrete example,
which can easily be seen to be a form of partition function
estimation, consider the problem of counting maximum weight matchings
(MWM) on a given graph. One may invoke a polynomial time
algorithm\cite{edmonds1965paths} to compute MWMs of a set of
randomized edge weight problems, rejecting all solutions whose costs
with respect to the original problem are less than that of the (known)
maximum value. A generally nonuniform and unknown distribution over
maximum weight matchings to the graph is instantiated; invoking
such an algorithm within the SSS framework allows the nonuniformity
bias to be corrected for.

Finally, we note that a natural direction of future work, which was
not pursued experimentally in this paper, is the analysis of different
heuristics within the MCMC context presented in Section
\ref{sec:MCusingSCP:MCMC}.

We believe that the ideas presented in this paper and the SSS
methodology in particular have an appropriate place in the toolbox of
practitioners of numerical Monte Carlo simulation. They open the
possibility of using arbitrary stochastic heuristics to sample from
discrete-valued complex systems. Of course, SSS should never be used
when a better algorithm is known, for example where exact or
fast-mixing sampling algorithms exist. As a tool, it should be viewed
very much as a ``crowbar'' with which to pry open the properties of
stochastic heuristics. Given this recommendation, and the extensive
usage of the concept of recursion in this work, we can think of no
better summary of the appropriate attitude to SSS than by recursively
paraphrasing A. Sokal's \cite{sokal1997monte} advice on Monte Carlo
strategies in general as follows: \begin{guideline}
  State space sampling is an extremely \emph{bad} Monte Carlo method. It
  should be used only when all alternative Monte Carlo methods are worse.
\end{guideline}

\section{Acknowledgements}
We acknowledge discussions with and the helpful insights of W.
Macready, J. Raymond, H. Katzgraber, P. Carbonetto, M. Amin, and A.
Smirnov. F.H. wishes to particularly thank the Vancouver Board of
Parks and Recreation for its masterful job in maintaining the beaches
under its purview so that work on this project in the Summer of 2015
was a true pleasure.

\section{Appendix}

\subsection{Recursive Partitioning for Sparse Models}
\label{sec:Appendix:partitioning}

In many situations, such as the simulation of two or three dimensional
Ising models, the distribution of interest can be interpreted as
possessing certain topological properties. We outline a methodology for
exploiting the structure of a target distribution to impose a
\emph{parallelisability} to the algorithm distinct from the trivial one
arising from the population-based aspect of the SCP.
The notion of \emph{conditional independence} is key here. Given three sets of
variable indices $ A, B, C $,
\( x_A \)
and \( x_B \)
are said to be conditionally independent given \( x_C \) iff
\[
\pi( x_A, x_B | x_C ) = \pi(x_A | x_C) \pi(x_B|x_C)
\]
As an example to illustrate the idea, we consider the case of an Ising
model defined on an \( L \times L \) grid.

The following discussion is illustrated in Figure
\ref{fig:MultilevelNDMC}. Let us initially define the full set of
nodes in the grid to be \( R_{11} \)
and take the set of graph locations \( A_{11} \in R_{11} \)
to be the set of nodes that vertically bisect (as closely as possible)
the grid into two equal-sized rectangular regions. Call the left and
right regions \( R_{12} \)
and \( R_{22} \)
respectively. We note that in the target distribution,
\[ \pi( y_{R_{12}}, y_{R_{22}} | y_{A_{11}} ) = \pi( y_{R_{12}} |
y_{A_{11}} ) \pi(y_{R_{22}} | y_{A_{11}} ) \]
If, after constraining variables \( y_{A_{11}} \)
the proposal is then run independently on the two regions, then it
will of course, also have this property, i.e.
\[ q( u_{R_{12}}, u_{R_{22}} | y_{A_{11}}, \xv ) = q( u_{R_{12}} |
y_{A_{11}}, \xv ) q( u_{R_{22}} | y_{A_{11}}, \xv ) \]
This suggests that judiciously selecting the variable ordering allows
one to \emph{divide and conquer} the problem. In particular, to
simulate
\( \{ U_{R_{12}}, U_{R_{22}}\}\sim q( u_{R_{12}} u_{R_{22}} |
y_{A_{11}}, \xv ) \),
conditional independence allows us to simultaneously generate
\begin{eqnarray*}
  \{U_{R_{12}} \} \sim q( u_{R_{12}} | y_{A_{11}}, \xv ) \\
  \{U_{R_{22}} \} \sim q(u_{R_{22}} | y_{A_{11}}, \xv )
\end{eqnarray*}
as well as variables \( Y_{R_{12}}, Y_{R_{22}} \)
using parallel worker threads. As always, the reduction in number of
free variables allows one to use fewer resources to simulate the
consecutive distributions.

The idea can be applied recursively: suppose in turn that \( A_{12} \)
is a set of nodes that \emph{horizontally} bisects \( R_{12} \)
into regions \( R_{13}, R_{23} \)
respectively, and that \( A_{22} \)
horizontally bisects \( R_{22} \)
into regions \( R_{33}, R_{43} \),
and we now have \emph{four} conditionally independent sets of
variables. The second index in the subscript can be seen to refer to a
``level'' in a hierarchy, with the first being an identifier at the
level. In general, if \( A_{i,l} \)
are a set of nodes that horizontally bisect grid region \( R_{i,l} \)
into \( \{ R_{j,l+1}, R_{k,l+1} \} \)
then \( A_{j,l+1}, A_{k,l+1} \)
will be vertical bisectors of those regions, and vice versa. Let us
examine the complexity of the na\"ive, one-at-a-time SCP algorithm on
the $L \times L$ lattice. Initially, the \( L \)
variables in \( A_{11} \)
are assigned sequentially. Once \( Y_{A_{11}} \)
have been assigned, exploiting conditional independence effectively
allows us to process sets \( A_{12}, A_{22} \),
in \( O(L/2) \)
instead of \( O(L) \)
time. Clearly, in the next step we can sample from four bisecting sets
in effective \( O(L/4) \)
time, and so on. Hence, by exploiting the inherent parallelism
emerging from the problem structure and variable ordering, the
complexity of the one-at-a-time SCP algorithm is effectively reduced
from \( O(L^2) \) to \( O(L\log L) \).

\def\svgwidth{0.5\columnwidth}
\begin{figure}
\centering
\input{./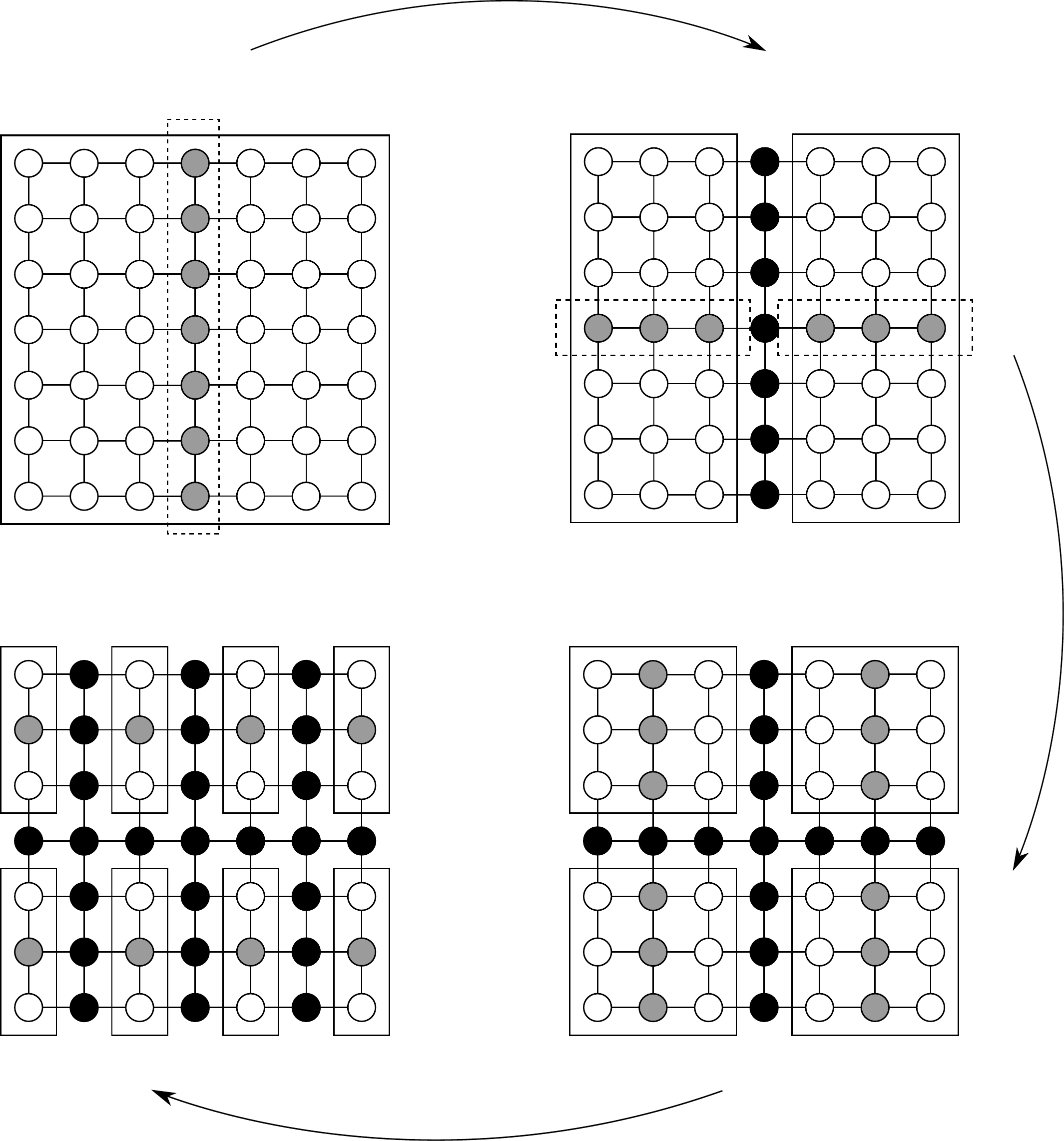_tex}
\caption{Exploiting conditional independence to accelerate sampling on
  a \( 7 \times 7 \) grid. Initially, samples are sequentially drawn
  from the vertical shaded column in the top left, called \( A_{11} \). The two sets of variables separated by the column are
  \( R_{12}, R_{22}\). Once the variables in \( A_{11} \) are sampled,
  the variables in \( R_{12}, R_{22} \) are independent. In the top
  right, sampling within these regions proceeds simultaneously along
  \( A_{12}, A_{22} \), the horizontal bisectors of the regions. The
  bisection-based ordering introduces an
  accelerating ``parallelism'' in that as the graph becomes
  increasingly partitioned, an increasing number variables can be
  sampled simultaneously. The effective complexity of the
  sampling process is reduced from \( O(L^2)\) to \(O(L \log L) \),
  where $L$ is the length of the grid.}
\label{fig:MultilevelNDMC}
\end{figure}

For graphs of general topology, the task of finding a set of variables
that maximize the inherent parallelism of the sampling problem is
known as the \emph{vertex separator problem(VSP).} In general, VSP is
an NP-hard problem \cite{bui1992finding}, but there exist several useful
heuristics \cite{de2005vertex} that may give a satisfactory variable ordering.

\subsection{Selecting the parameters for SSS}
\label{sec:Appendix:tuningSSSParams}

Recall from the SSS algorithm description that we set a KL loss
threshold $\theta_0$ and grow the state space tree representation
until this threshold is violated, recursively calling the heuristic to
continue growing the tree as needed. One may naturally wonder how to
set $\theta_0$ and the population size $N$ to ensure a given accuracy
at any system size. After all, if $\theta_0$ and $N$ are held constant
while larger systems are simulated, the overall divergence between the
estimate and the true distribution will increase. In this section, we
argue that $N$ should be allowed to increase linearly with system
size, though we remind the reader that this is a highly parallelisable
burden.

Suppose we observe in an $m$-variable system that SSS typically
requires around $N_{calls}$ calls to the heuristic process to
generating a sample. The value of $N_{calls}$ depends on the target
distribution entropy, with larger entropy necessitating larger
$N_{calls}$. Under KL loss threshold $\theta_0$, the \emph{overall} KL
divergence between the target and the estimator will be
$\sim N_{calls}\theta_0$. Suppose for simplicity that the target is
uniform. Then doubling the system size $m$ will consequently double
the required $N_{calls}$ and hence the total divergence. If, however,
we \emph{halved} the partition construction tolerance for the larger
system to $\theta_0/2$ and \emph{doubled} the population size $N$,
then $N_{calls}$ will still double over that of the $m$-variable
system. This is because the sizes of the constructed subtrees under
parameters $\theta_0/2$ and $2N$ will be approximately the same as
under $\theta_0$ and $N$, which in turn follows from the fact that the
posterior KL loss essentially becomes $O(1/N)$ for moderate-to-large
$N$. The effect is therefore that the \emph{total} divergence reverts
to $\sim N_{calls}\theta_0$.

The message is that in general, scaling a system size by a factor of
$c$ within a problem class where the entropy increases linearly with
size requires reduction and increase of $\theta_0$ and $N$ by a factor
of $c$ respectively in order to maintain the KL divergence at a fixed
value.  Rao-Blackwellisation as described in Section
\ref{sec:SCP:qnEstimators:RaoBlack} can often dramatically reduce this
number in practice but the discussion in this section at least
provides a conservative general guideline. Experimentation on the
problem class may be required to select an efficient population size.

\subsection{Bayesian Updating of State Space Trees}
\label{sec:Appendix:BayesianUpdating}

As mentioned in the text, one of the advantages of Bayesian estimators
is their natural ability to adapt to more data that may arrive. In
this section, we consider informally how to implement this
functionality within the SSS methodology.  Doing so is not onerous but
does require a certain amount of bookkeeping and care, especially when
combined with memory-bounding.

It is more appropriate in the present context to consider the full
state space tree as a collection of \emph{concatenated} subtrees
corresponding to state space partitions, each endowed with probability
estimates over its nodes. When the root node of such a subtree,
receives a certain number of visits, further runs to the SCP with the
appropriate condition are made, but only the probability estimates
corresponding to nodes of the \emph{original} partition are
updated. Nodes belonging to downstream subtrees require their own
refresh. When retracting a subleaf node, the Dirichlet aggregation
property implies that the prior parameter of the new leaf node is
simply the sum of its two deleted childrens'. Hence, Bayes estimators
on pruned trees resulting from retraction can be updated without
issue.

One complication with Bayesian updating in the presence of retraction
is that in principle, once probability estimates in a subtree are
modified, the probabilities of \emph{all} nodes descended from this
tree are changed as well. This implies that the subleaf probability
values comprising the deletion priority in the retraction queue will
no longer correspond to their actual probabilities.  Updating these
values can become quite expensive, so a natural option is to simply
use the ``stale'' initially-computed probabilities as deletion
priorities. This does not affect the correctness of the algorithm;
subleaf nodes could even have been dropped at random. The idea behind
the prioritization is to simply avoid removing representations that
are likely to be used multiple times. The initial approximations are
likely accurate enough to effect this behaviour satisfactorily.

\bibliography{NDMCBibliography}{}
\bibliographystyle{abbrv}

\end{document}